\documentclass[letterpaper,12pt,titlepage,oneside,]{book}


\newcommand{\href}[1]{#1} 

\usepackage{ifthen}
\newboolean{PrintVersion}
\setboolean{PrintVersion}{false} 
\usepackage{ifdraft}

\usepackage{amsmath,amssymb,amstext} 
\usepackage[pdftex]{graphicx} 

\makeatletter
\def\input@path{{figures/}}
\makeatother
\graphicspath{{figures/}}

\usepackage[pdftex,pagebackref=true]{hyperref} 
\hypersetup{
    plainpages=false,       
    unicode=true,          
    pdftoolbar=true,        
    pdfmenubar=true,        
    pdffitwindow=false,     
    pdfstartview={FitH},    
    pdfnewwindow=true,      
    colorlinks=true,        
    linkcolor=blue,         
    citecolor=green,        
    filecolor=magenta,      
    urlcolor=cyan           
}
\ifthenelse{\boolean{PrintVersion}}{   
\hypersetup{	
    citecolor=black,%
    filecolor=black,%
    linkcolor=black,%
    urlcolor=black}
}{} 

\hypersetup{final}

\usepackage[nohints]{minitoc}
\dominitoc


\usepackage{amsfonts}
\usepackage{amsmath}
\usepackage{latexsym}
\usepackage{stmaryrd}
\usepackage{mathtools}
\usepackage{dsfont}
\usepackage{makeidx}
\usepackage{bibentry}
\usepackage{enumitem}
\usepackage{relsize}
\usepackage{makeidx}
\usepackage[totoc]{idxlayout}


\usepackage{amsthm}
\newtheorem{theorem}{Theorem}[chapter]
\newtheorem{lemma}[theorem]{Lemma}
\newtheorem{claim}[theorem]{Claim}
\newtheorem{prop}[theorem]{Proposition}

\theoremstyle{definition}

\newtheorem{example}[theorem]{Example}

\newtheorem{question}[theorem]{Question}


\newcommand{\tinyspace}{\mspace{1mu}}
\newcommand{\microspace}{\mspace{0.5mu}}

\newcommand{\tr}{\operatorname{Tr}}

\newcommand{\rank}{\operatorname{rank}}
\renewcommand{\int}{\operatorname{int}}
\renewcommand{\vec}{\operatorname{vec}}

\newcommand{\abs}[1]{\left\lvert #1 \right\rvert}

\newcommand{\Bigabs}[1]{\Bigl\lvert #1 \Bigr\rvert}

\newcommand{\ip}[2]{\langle #1 , #2\rangle}
\newcommand{\bigip}[2]{\bigl\langle #1, #2 \bigr\rangle}
\newcommand{\Bigip}[2]{\Bigl\langle #1, #2 \Bigr\rangle}
\newcommand{\biggip}[2]{\biggl\langle #1, #2 \biggr\rangle}

\newcommand{\norm}[1]{\lVert\tinyspace #1 \tinyspace\rVert}
\newcommand{\bignorm}[1]{\bigl\lVert\tinyspace #1 \tinyspace\bigr\rVert}

\newcommand{\biggnorm}[1]{\biggl\lVert\tinyspace #1 \tinyspace\biggr\rVert}

\newcommand{\ket}[1]{
  \lvert\microspace #1 \microspace \rangle}

\newcommand{\bra}[1]{
  \langle\microspace #1 \microspace \rvert}

\def\I{\mathds{1}}
\def\BB84{\mathsf{BB84}}
\def\CHSH{\mathsf{CHSH}}

\newcommand{\setft}[1]{\mathrm{#1}}
\newcommand{\Density}{\setft{D}}
\newcommand{\Pos}{\setft{Pos}}
\newcommand{\Channel}{\setft{C}}
\newcommand{\Proj}{\setft{Proj}}

\newcommand{\Unitary}{\setft{U}}
\newcommand{\Herm}{\setft{Herm}}
\newcommand{\Lin}{\setft{L}}
\newcommand{\Trans}{\setft{T}}
\newcommand{\Sep}{\setft{Sep}}

\newcommand{\SepD}{\setft{SepD}}

\def\complex{\mathbb{C}}
\def\real{\mathbb{R}}
\def\natural{\mathbb{N}}
\def\integer{\mathbb{Z}}

\def\ns{\textnormal{ns}}
\def\im{\textnormal{im}}

\newenvironment{mylist}[1]{\begin{list}{}{
	\setlength{\leftmargin}{#1}
	\setlength{\rightmargin}{0mm}
	\setlength{\labelsep}{2mm}
	\setlength{\labelwidth}{8mm}
	\setlength{\itemsep}{0mm}}}
	{\end{list}}

\def\X{\mathcal{X}}
\def\Y{\mathcal{Y}}
\def\Z{\mathcal{Z}}
\def\H{\mathcal{H}}
\def\W{\mathcal{W}}
\def\A{\mathcal{A}}
\def\B{\mathcal{B}}
\def\V{\mathcal{V}}
\def\U{\mathcal{U}}
\def\C{\mathcal{C}}

\def\R{\mathcal{R}}
\def\Q{\mathcal{Q}}
\def\P{\mathcal{P}}
\def\S{\mathcal{S}}

\def\NS{\mathcal{NS}}
\def\L{\mathcal{L}}

\newcommand{\reg}[1]{\mathsf{#1}}

\def \GammaA{\Gamma_{\reg{A}}}
\def \GammaB{\Gamma_{\reg{B}}}
\def \SigmaA{\Sigma_{\reg{A}}}
\def \SigmaB{\Sigma_{\reg{B}}}

\usepackage{xcolor,colortbl}
\definecolor{Gray}{gray}{0.90}

\newcommand{\eqnref}[1]{\hyperref[#1]{{(\ref*{#1})}}}
\newcommand{\thmref}[1]{\hyperref[#1]{{Theorem~\ref*{#1}}}}
\newcommand{\lemref}[1]{\hyperref[#1]{{Lemma~\ref*{#1}}}}
\newcommand{\corref}[1]{\hyperref[#1]{{Corollary~\ref*{#1}}}}
\newcommand{\defref}[1]{\hyperref[#1]{{Definition~\ref*{#1}}}}
\newcommand{\secref}[1]{\hyperref[#1]{{Section~\ref*{#1}}}}
\newcommand{\chapref}[1]{\hyperref[#1]{{Chapter~\ref*{#1}}}}
\newcommand{\figref}[1]{\hyperref[#1]{{Figure~\ref*{#1}}}}
\newcommand{\tabref}[1]{\hyperref[#1]{{Table~\ref*{#1}}}}
\newcommand{\remref}[1]{\hyperref[#1]{{Remark~\ref*{#1}}}}
\newcommand{\appref}[1]{\hyperref[#1]{{Appendix~\ref*{#1}}}}
\newcommand{\claimref}[1]{\hyperref[#1]{{Claim~\ref*{#1}}}}
\newcommand{\propref}[1]{\hyperref[#1]{{Proposition~\ref*{#1}}}}
\newcommand{\exampleref}[1]{\hyperref[#1]{{Example~\ref*{#1}}}}
\newcommand{\conjref}[1]{\hyperref[#1]{{Conjecture~\ref*{#1}}}}

\makeatletter
\providecommand*{\cupdot}{%
  \mathbin{%
    \mathpalette\@cupdot{}%
  }%
}
\newcommand*{\@cupdot}[2]{%
  \ooalign{%
    $\m@th#1\cup$\cr
    \hidewidth$\m@th#1\cdot$\hidewidth
  }%
}
\makeatother

\makeindex

\setcounter{secnumdepth}{3}
\setcounter{tocdepth}{3}
\setcounter{minitocdepth}{3}

\setlength{\marginparwidth}{0pt} 
\setlength{\marginparsep}{0pt} 
\setlength{\evensidemargin}{0.125in} 
\setlength{\oddsidemargin}{0.125in} 
\setlength{\textwidth}{6.375in} 
\raggedbottom

\setlength{\parskip}{\medskipamount}


\let\origdoublepage\cleardoublepage
\newcommand{\clearemptydoublepage}{%
  \clearpage{\pagestyle{empty}\origdoublepage}}
\let\cleardoublepage\clearemptydoublepage


\begin{document}

%
\pagestyle{empty}
\pagenumbering{roman}

\begin{titlepage}
        \begin{center}
        \vspace*{1.0cm}

        \Huge
        {\bf Extended nonlocal games}

        \vspace*{1.0cm}

        \normalsize
        by \\

        \vspace*{1.0cm}

        \Large
        Vincent Russo \\

        \vspace*{3.0cm}

        \normalsize
        A thesis \\
        presented to the University of Waterloo \\ 
        in fulfillment of the \\
        thesis requirement for the degree of \\
        Doctor of Philosophy \\
        in \\
        Computer Science \\

        \vspace*{2.0cm}

        Waterloo, Ontario, Canada, 2017 \\
		\copyright\ Vincent Russo 2017 
        \vspace*{1.0cm}

        \end{center}
\end{titlepage}

\noindent\textbf{Examining Committee Membership}

\noindent The following served on the Examining Committee for this thesis. The decision of the Examining Committee is by majority vote. 

\noindent\begin{tabular}{@{}p{6.6cm}@{\hspace{5cm}}p{6cm}@{}}
\bfseries External Examiner & STEPHANIE WEHNER \par Professor \\[4ex]
\bfseries Supervisors & JOHN WATROUS \par Professor \par MICHELE MOSCA \par Professor \\[4ex]
\bfseries Internal Members & RICHARD CLEVE \par Professor \par DEBBIE LEUNG \par Professor \\[4ex]
\bfseries Internal-external Member & VERN PAULSEN \par Professor \\[4ex]
\bfseries Defence Chair & SUE HORTON \par Professor 
\end{tabular}


\pagestyle{plain}
\setcounter{page}{2}

\cleardoublepage 

  \noindent
I hereby declare that I am the sole author of this thesis. This is a true copy of the thesis, including any required final revisions, as accepted by my examiners.

  \bigskip
  
  \noindent
I understand that my thesis may be made electronically available to the public.

\cleardoublepage


\begin{center}\textbf{Abstract}\end{center}

The notions of \emph{entanglement} and \emph{nonlocality} are among the most striking ingredients found in quantum information theory. One tool to better understand these notions is the model of \emph{nonlocal games}; a mathematical framework that abstractly models a physical system. The simplest instance of a nonlocal game involves two players, Alice and Bob, who are not allowed to communicate with each other once the game has started and who play cooperatively against an adversary referred to as the referee. 


The focus of this thesis is a class of games called \emph{extended nonlocal games}, of which nonlocal games are a subset. In an extended nonlocal game, the players initially share a tripartite state \emph{with the referee}. In such games, the winning conditions for Alice and Bob may depend on outcomes of measurements made by the referee, on its part of the shared quantum state, in addition to Alice and Bob's answers to the questions sent by the referee. 

We build up the framework for extended nonlocal games and study their properties and how they relate to nonlocal games. In doing so, we study the types of \emph{strategies} that Alice and Bob may adopt in such a game. For instance, we refer to strategies where Alice and Bob use quantum resources as \emph{standard quantum strategies} and strategies where there is an absence of entanglement as an \emph{unentangled strategy}. These formulations of strategies are purposefully reminiscent of the respective quantum and classical strategies that Alice and Bob use in a nonlocal game, and we also consider other types of strategies with a similar correspondence for the class of extended nonlocal games. 

We consider the \emph{value} of an extended nonlocal game when Alice and Bob apply a particular strategy, again in a similar manner to the class of nonlocal games. Unlike computing the unentangled value where tractable algorithms exist, directly computing the standard quantum value of an extended nonlocal game is an intractable problem. We introduce a technique that allows one to place upper bounds on the standard quantum value of an extended nonlocal game. Our technique is a generalization of what we refer to as the \emph{QC hierarchy} which was studied independently in works by Doherty, Liang, Toner, and Wehner as well as by Navascu\'{e}s, Pironio, and Ac\'{i}n. This technique yields an upper bound approximation for the quantum value of a nonlocal game.

We also consider the question of whether or not the dimensionality of the state that Alice and Bob share as part of their standard quantum strategy makes any difference in how well they can play the game. That is, does there exist an extended nonlocal game where Alice and Bob can win with a higher probability if they share a state where the dimension is infinite? We answer this question in the affirmative and provide a specific example of an extended nonlocal game that exhibits this behavior.   

We study a type of extended nonlocal game referred to as a \emph{monogamy-of-entanglement game}, introduced by Tomamichel, Fehr, Kaniewski, and Wehner, and present a number of new results for this class of game. Specifically, we consider how the standard quantum value and unentangled value of these games relate to each other. We find that for certain classes of monogamy-of-entanglement games, Alice and Bob stand to gain no benefit in using a standard quantum strategy over an unentangled strategy, that is, they perform just as well without making use of entanglement in their strategy. However, we show that there does exist a monogamy-of-entanglement game in which Alice and Bob do perform strictly better if they make use of a standard quantum strategy. We also analyze the \emph{parallel repetition} of monogamy-of-entanglement games; the study of how a game performs when there are multiple instances of the game played independently. We find that certain classes of monogamy-of-entanglement games obey \emph{strong parallel repetition}. In contrast, when Alice and Bob use a non-signaling strategy in a monogamy-of-entanglement game, we find that strong parallel repetition is not obeyed.

\cleardoublepage
\newpage


\begin{center}\textbf{Acknowledgements}\end{center}

I am greatly indebted to my advisors John Watrous and Michele Mosca for their guidance throughout the course of my studies. John is an incredible supervisor and one that I have been exceptionally lucky to have had the pleasure of working with. John's attention to detail and sense of humour has greatly impacted my own approach to science and life in general. I am truly humbled by John's command of mathematical rigour and writing clarity. I am also incredibly grateful to Mike, for including me in the quantum circuits group and enabling me to take part in internships.      

Gratitude is also due to professors Richard Cleve, Debbie Leung, Vern Paulsen, and Stephanie Wehner for taking their valuable time to serve on my defence committee. I thank them greatly for their input. 

Throughout my studies, I've been lucky enough to work with some outstanding students, post-doctoral researchers, and professors. Many thanks are due to Nathaniel Johnston, Rajat Mittal, Matthew Pusey, Jamie Sikora, William Slofstra, Thomas Vidick, and others who were always willing to discuss interesting ideas and explain abstract concepts to me. 

My time at IQC and in Waterloo has been full of great people and experiences. I wish to thank Sascha Agne, Srinivasan Arunachalam, Alessandro Cosentino, Arnaud Carignan-Dugas, Maria Kieferova, Robin Kothari, Anirudh Krishna, Vinayak Pathak, Dan Puzzuoli, Yuval Sanders, Basil Singer, Marco Shum, Zak Webb, and the students and faculty of IQC for making my time here incredibly enjoyable. I sincerely hope that we keep in touch as our journeys continue. Thanks are also due to the excellent administrative support of IQC and DC for always being able to quickly resolve any technical issues in running software for experiments.

To my ``urban planning'' circle of friends, thank you for making me an honorary member of the group, and accepting me even though I'm in computer science! You have been my strongest support network and my best friends in Waterloo. I will sorely miss our many political conversations and random adventures. To my musically inclined friends Jaden Hellmann, Alexandar Smith, Will Towns, and Cody Veal, our jam sessions and miscellaneous discussions were a welcome creative distraction from research. 

To my friends back home in Michigan, I thank you for understanding my absence. Thanks to Kenny G., Sara Gilhooly, Alex, Nick, and Rachel Marowsky, Mike Sanderson, Ryan Seiler, Joe Sousa, Ryan Trainor, and Matt Wolford. Whenever I've been back to visit, I was always warmly received. 

I also extend gratitude toward the hospitality received during my internships. I thank BBN Raytheon, specifically Richard Lazarus, Andrei Lapets, and Marcus da Silva for allowing me to contribute to many interesting projects. 

I cannot express enough gratitude toward my family for their encouragement, love, and support. My brothers Joey and Matthew, and my sister Theresa and her husband Colin. I also thank Beth Russo and Lauren Kisic, my Aunt Cathy, Uncle Dan, Uncle John, Uncle Steve, and Aunt Marie for a lifetime of support. A sincere and heartfelt thanks are due to my parents James and Marjorie Russo, who have always nurtured and encouraged my interests, whatever they may have been, and for their seemingly endless wells of love and support. 

Lastly, I thank Paulina Rodriguez. Your support, love, and encouragement throughout this journey cannot be overstated. This document is as much a part of me as it is a part of you. I love you. 

For all of those who I did not mention by name, please accept my sincerest apologies, and know that this document is a testament to your encouragement and support. Thank you all.

\cleardoublepage
\newpage


\begin{center}\textbf{Dedication}\end{center}

This thesis is dedicated to those who have shaped my life, but no longer walk with me through it. My grandparents Anthony and Jean Russo, Ben Benton, my Aunt Alice, and my second moms, Denise Marowsky and Debbie Gilhooly.  

\cleardoublepage
\newpage

\renewcommand\contentsname{Table of Contents}
\tableofcontents
\cleardoublepage
\phantomsection

\addcontentsline{toc}{chapter}{List of Tables}
\listoftables
\cleardoublepage
\phantomsection		

\addcontentsline{toc}{chapter}{List of Figures}
\listoffigures
\cleardoublepage
\phantomsection		
\pagenumbering{arabic}


\setcounter{mtc}{2}

\chapter{Introduction}
\label{chap:introduction}

The model of two player games has served an important role in developing our understanding of theoretical computer science and quantum information. In such a game, we consider the players, referred to as \emph{Alice} and \emph{Bob}, who are not allowed to communicate to each other once the game begins, and who play cooperatively against a party referred to as the \emph{referee}. The game begins when the referee asks questions to Alice and Bob to which they must respond. When Alice and Bob send back the responses to the referee, the referee evaluates the questions and answers against a criterion that is publicly known to the referee, Alice, and Bob that determines what constitutes a winning or losing outcome. 

A primary challenge that arises when studying these games is to determine the maximum probability with which Alice and Bob are able to achieve a winning outcome. This probability is highly dependent on the type of \emph{strategy} that Alice and Bob use in the game. Before the game begins, Alice and Bob are free to communicate with each other and decide on the type of strategy they will use. 

A \emph{classical strategy} is one in which Alice and Bob decide on a deterministic mapping of outputs for every possible combination of inputs they will receive in the game. The corresponding maximum probability achieved when Alice and Bob employ a classical strategy is referred to as the \emph{classical value} of the game. 

Another type of strategy called a \emph{quantum strategy} is one in which Alice and Bob are allowed to use nonlocal resources. This type of strategy may involve Alice and Bob sharing an arbitrary entangled state prior to the start of the game along with sets of measurements that they may apply to their portions of the state after they each receive questions from the referee. The corresponding maximum probability achieved when Alice and Bob use a quantum strategy is referred to as the \emph{quantum value} of the game. 

For certain games, the probability that Alice and Bob obtain a winning outcome is higher if they use a quantum strategy as opposed to a classical one. This striking separation is one primary motivation to study nonlocal games, as it provides examples of tasks that benefit from the manipulation of quantum information. Indeed, the model of nonlocal games have been widely studied, especially in recent years~\cite{Cleve2004, Brassard2005, Cleve2008, Doherty2008,Kempe2010,Kempe2010a,Kempe2011,Junge2011a,Buhrman2013,
Regev2013,Dinur2013,Vidick2013,Cleve2014}. 

The ability to calculate the quantum value for an arbitrary nonlocal game is a highly non-trivial task. Indeed, the quantum value is only known in special cases for certain nonlocal games. For an arbitrary nonlocal game, there exist approaches that place upper and lower bounds on the quantum value. One such approach (that we refer to as the QC hierarchy as done in~\cite{Coudron2015} and was introduced in~\cite{Doherty2008, Navascues2007}), is implemented as a hierarchy of optimization problems, referred to as \emph{semidefinite programs}, which are optimization problems where the constraints are semidefinite. Convergence is guaranteed from the QC hierarchy, yet it may be intractable to compute. The lower bound approach is also calculated using the technique of semidefinite programming~\cite{Liang2007}. While this method is more efficient to carry out, it does not guarantee convergence to the quantum value (although in certain cases, it is attained). 

In a nonlocal game, the referee is only responsible for sending questions, receiving answers, and evaluating whether the selection of questions and respective answers yields a winning or losing outcome. In this thesis, we consider a generalization of the nonlocal game model where the referee is provided with part of a quantum system prepared by Alice and Bob, and in addition, also has sets of measurements that he may apply to his portion of the quantum system to determine the outcome of the game. This type of game is referred to as an \emph{extended nonlocal game}. Extended nonlocal games constitute a wider class of games of which nonlocal games are a subset. For instance, an extended nonlocal game where the dimension of the quantum system held by the referee is one-dimensional is precisely a nonlocal game. \emph{Monogamy-of-entanglement games} are a special type of extended nonlocal game introduced in~\cite{Tomamichel2013} that has been studied with respect to the problem of position-based cryptography. 


\section{Summary of the results}
In addition to introducing the model of extended nonlocal games, we prove the following results:

\begin{itemize}

	\item We prove that there exists a class of extended nonlocal game for which no finite-dimensional quantum strategy can be optimal. This result further implies the existence of a tripartite steering inequality for which an infinite-dimensional quantum state is required in order to achieve maximal violation. 

	\item We generalize the QC hierarchy, a technique for providing upper bounds on nonlocal games, to the case of extended nonlocal games. We also present a method based on the see-saw algorithm of Liang and Doherty~\cite{Liang2007} that provides lower bounds on the class of extended nonlocal games. 	


	\item We present a number of results about the class of \emph{monogamy-of-entanglement games}, which are a specific type of extended nonlocal game. Specifically, we show that: 
	
		\begin{itemize}
			\item Monogamy-of-entanglement games obey strong parallel repetition when the size of the question set has 2 elements and the size of the answer set is arbitrary, and the sets of measurements used by the referee are projective. 

			\item Monogamy-of-entanglement games do not obey strong parallel repetition when the players use non-signaling strategies. 

			\item We present a class of monogamy-of-entanglement games where the size of the question set has 2 elements and the size of the answer set is arbitrary where Alice and Bob can always achieve the quantum value of such a game by using a strategy that does not require them to store quantum information. 

			\item There exists a monogamy-of-entanglement game in which the size of the question set has 4 elements and the answer set has 3 elements, for which Alice and Bob must store quantum information to play optimally. 
		\end{itemize}

\end{itemize}

\section{Overview}
We assume familiarity with the basic notions of quantum computation and quantum information as can be found in~\cite{Nielsen2001}. It may also be helpful to have a familiarity with the terminology and mathematics in the first two chapters of~\cite{Watrous2015}, although we shall also attempt a self-contained presentation of the necessary tools needed to understand the content herein. Throughout this thesis, we also make frequent use of the mathematical tool of semidefinite programming. Supplementary resources for the interested reader can be found in lecture 7 of~\cite{Watrous2004} as well as~\cite{Boyd2004}. 

In Chapter~\ref{chap:preliminaries}, we review the basics of quantum information, nonlocal games, and relevant notation that will be used in the remainder of this thesis. 


In Chapter~\ref{chap:extended_nonlocal_games}, we introduce the model of extended nonlocal games that is built upon the model of nonlocal games. 

In Chapter~\ref{chap:infinite_entanglement}, we present an analysis of certain properties of the extended nonlocal game model and give an example of an extended nonlocal game for which no finite-dimensional quantum strategy can be optimal. 

In Chapter~\ref{chap:extended_npa_hierarchy}, we present a method that provides upper and lower bounds on the value of an extended nonlocal game. 

In Chapter~\ref{chap:monogamy_games}, we study the class of extended nonlocal games referred to as monogamy-of-entanglement games and prove a number of properties that these games exhibit. 


Finally, in Chapter~\ref{chap:conclusions}, we present conclusions and pose open questions that may be of interest for future research. Supplementary software used in this thesis is also provided in Appendix~\ref{chap:AppendixA}, as well as on the software repositories hosted here~\cite{Russo2015a} and here~\cite{Russo2016a}. 

\noindent The following is a list of existing work directly related to the content in this document:

\begin{itemize}

	\item[$\bullet$]
	V. Russo and J. Watrous.
	\textbf{Extended nonlocal games from quantum-classical games}. 2016,
	\cite{Russo2016}.

	\item[$\bullet$]
	N. Johnston, R. Mittal, V. Russo, and J. Watrous.
	\textbf{Extended nonlocal games and monogamy-of-entanglement games}. \textit{Proc.~R.~Soc.~A 472:20160003}, 2016,
	\cite{Johnston2015a}. 
			
\end{itemize}
The following is a list of existing work completed during my Ph.D., but not directly related to my thesis work:
\begin{itemize}

	\item[$\bullet$] 
	S. Bandyopadhyay, A. Cosentino, N. Johnston, V. Russo, J. Watrous, and N. Yu. 
	\textbf{Limitations on separable measurements by convex optimization}. 
	\textit{IEEE Transactions on Information Theory}, 2015,
	\cite{Bandyopadhyay2015}.
	
	\item[$\bullet$]
	S. Arunachalam, N. Johnston, and V. Russo.
	\textbf{Is absolute separability determined by the partial transpose?}. \textit{Quantum Information \& Computation}, 2015,
	\cite{Arunachalam2015}.

	\item[$\bullet$]
	D. Gosset, V. Kliuchinikov, M. Mosca, and V. Russo.
	\textbf{An algorithm for the T-count}. \textit{Quantum Information \& Computation}, 2014,
	\cite{Gosset2014}.
	
	\item[$\bullet$]
	A. Cosentino and V. Russo.
	\textbf{Small sets of locally indistinguishable orthogonal maximally entangled states}. 
	\textit{Quantum Information \& Computation},  2014,
	\cite{Cosentino2014}.

	\item[$\bullet$]
	S. Arunachalam, A. Molina, and V. Russo.
	\textbf{Quantum hedging in two-round prover-verifier interactions}. \textit{arXiv:1310.7954}, 2013,
	\cite{Arunachalam2013}.
	
\end{itemize}

\chapter{Preliminaries}
\label{chap:preliminaries}

In this chapter, we present an overview of the relevant subject matter of quantum information theory that will be used for the remainder of this thesis. We further establish basic terminology and notation. We shall make gratuitous use of the notation conventions for quantum information theory from~\cite{Watrous2015}. The reader is assumed to be familiar with the basic underpinnings of quantum information theory, as may be found, for instance, in the following references~\cite{Nielsen2001,Kaye2007,Wilde2013}.

We also introduce the subject of convex optimization, which as we shall see, acts as a Swiss army knife for many problems of interest in quantum information, and indeed many that we will encounter in this thesis. For further information on convex optimization, the reader is referred to~\cite{Boyd2004}. 

We shall then introduce the nonlocal game formalism. This model provides an excellent venue to abstractly study one of the most crucial features of quantum information: entanglement. We shall formally define the nonlocal game model and present relevant background work, making our treatment of the subject as self-contained as possible. 

\minitoc

\section{Basic notation, terminology, and background} 

\subsection{Alphabets, symbols, and strings}

We use capital Greek letters $\Sigma, \Gamma, \Delta$, etc.~to denote finite and nonempty sets that we refer to as \index{alphabets}{\emph{alphabets}}. We shall often use lower case characters such as $x,y,a,b$, etc.~to denote elements of alphabets called \index{symbols}{\emph{symbols}}. For an alphabet $\Sigma$, a \index{string}{\emph{string}} over $\Sigma$ is a finite sequence of symbols from $\Sigma$. The \index{length (string)}{\emph{length}} of a string is the number of symbols in the sequence. We will typically use lower case characters $s$ and $t$ to refer to strings. For every string $s$, we denote the length of $s$ as $\abs{s}$. We define the \index{empty string}{\emph{empty string}}, denoted by $\varepsilon$, to represent the string where $\abs{\varepsilon} = 0$, or in other words, the string that has length $0$. For some nonnegative integer $n \geq 0$, we say that $\Sigma^{\leq n}$ denotes all strings of length at most $n$ and we say that $\Sigma^n$ denotes all strings of length $n$ over the alphabet $\Sigma$. Note that for any alphabet $\Sigma$, one has that $\Sigma^0 = \{\varepsilon\}$. We denote the set of all strings over an alphabet $\Sigma$ as $\Sigma^{\ast}$, that is
\begin{align}
	\Sigma^{\ast} = \Sigma^0 \cup \Sigma^1 \cup \cdots .
\end{align}
For strings $s$ and $t$, we represent the \index{concatenation (string)}{\emph{concatenation}} of $s$ and $t$ as $st$, which is the string composed of $s$ followed by $t$. The \index{reversal (string)}{\emph{reversal}} of a string $s$ is denoted as $s^{\mathsmaller{R}}$.

\subsection{Vectors, operators, and mappings} 

\subsubsection*{Vectors}

We shall use $\real, \complex, \natural,$ and $\integer$ to denote the sets of real numbers, complex numbers, natural numbers (including $0$), and integers respectively. We use $\integer_n$ to denote the integers modulo $n$ as denoted by
\begin{align}
	\integer_n = \{ 0, 1, \ldots, n-1 \}.
\end{align}
For some alphabet $\Sigma$, we define a \index{complex Euclidean space}{\emph{complex Euclidean space}} as the set $\complex^{\Sigma}$, which refers to the space of all complex vectors indexed by $\Sigma$. These complex Euclidean spaces will be denoted as scripted capital letters, $\A, \B, \X, \Y$, $\Z$, etc. We use lower case characters $u,v,w,z$ to represent elements in a complex Euclidean space. 

For some alphabet $\Sigma$ and any vectors $u,v \in \complex^{\Sigma}$, the inner product is defined as 
\begin{align}
	\ip{u}{v} = \sum_{a \in \Sigma} \overline{u(a)}v(a),
\end{align}
where $u(a)$ and $v(a)$ refer to the entry of vectors $u$ and $v$ indexed by $a$ for every $u,v \in \complex^{\Sigma}$. We say that two vectors $u,v \in \complex^{\Sigma}$ are \index{orthogonal}{\emph{orthogonal}} if and only if $\ip{u}{v} = 0$. We say that a set of vectors $\{u_a : a \in \Gamma \} \subset \complex^{\Sigma}$ form an \index{orthogonal set}{\emph{orthogonal set}} if $\ip{u_a}{u_b} = 0$ for all $a,b \in \Gamma$ such that $a \not= b$. 

The \index{Euclidean norm}{\emph{Euclidean norm}} of a vector $u \in \complex^{\Sigma}$ is given by 
\begin{align}
	\norm{u} = \sqrt{\ip{u}{u}}.
\end{align}
A vector $u$ is called a \index{unit vector}{\emph{unit vector}} if $\norm{u} = 1$. The \index{unit sphere}{\emph{unit sphere}}, $\S(\X)$, for a complex Euclidean space, $\X$, is the collection of all unit vectors:
\begin{align}
	\S(\X) = \{u \in \X : \norm{u} = 1\}.
\end{align}
We say that two vectors $u,v \in \complex^{\Sigma}$ are \index{orthonormal}{\emph{orthonormal}} if in addition to $u$ and $v$ being orthogonal, they are also unit vectors. We say that a set of vectors $\{u_a : a \in \Gamma \} \subset \complex^{\Sigma}$ form an \index{orthonormal set}{\emph{orthonormal set}} if $u_a$ and $u_b$ are orthonormal for all $a,b \in \Gamma$ with $a \not= b$. We refer to an \index{orthonormal basis}{\emph{orthonormal basis}} as an orthonormal set $\{u_a : a \in \Gamma \} \subset \complex^{\Sigma}$, such that $\abs{\Gamma} = \abs{\Sigma}$. The \index{standard basis}{\emph{standard basis}} of $\complex^{\Sigma}$ is the orthonormal basis given by $\{ e_a : a \in \Sigma \}$, where 
\[
e_{a}(b) =
  \begin{cases} 
      \hfill 1 \hfill & \text{ if $a=b$}, \\
      \hfill 0 \hfill & \text{ if $a \not= b$}, \\
  \end{cases}
\]
for all $a,b \in \Sigma$. We say that two orthonormal bases 
\begin{align}
	\B_0 = \{u_a : a \in \Sigma \} \subset \complex^{\Sigma} \quad \textnormal{and} \quad \B_1 = \{ v_a : a \in \Sigma \} \subset \complex^{\Sigma}
\end{align}
are \index{mutually unbiased}{\emph{mutually unbiased}} if and only if $\abs{\ip{u_a}{v_b}} = 1/\sqrt{\Sigma}$ for all $a,b \in \Sigma$. For $n \in \natural$, a set of orthonormal bases $\{ \B_0, \ldots, \B_{n-1} \}$ are \index{mutually unbiased bases}{\emph{mutually unbiased bases}} if and only if every basis is mutually unbiased with every other basis in the set, i.e. $\B_x$ is mutually unbiased with $\B_{x^{\prime}}$ for all $x \not= x^{\prime}$ with $x,x^{\prime} \in \Sigma$. 

\subsubsection*{Operators}

We use $\Lin(\X,\Y)$ to denote the set of all linear operators from the space $\X$ to $\Y$. When convenient, we use the shorthand $\Lin(\X)$ to denote $\Lin(\X,\X)$. We shall denote linear operators as capital letters $A,B,C$, etc. Linear operators and matrices have a natural correspondence, that is, for every operator $A \in \Lin(\X,\Y)$ where $\X = \complex^{\Sigma}$ and $\Y = \complex^{\Gamma}$, one may associate the matrix $M : \Gamma \times \Sigma \rightarrow \complex$ defined as 
\begin{align}
	M(a,b) = \ip{e_a}{A e_b}
\end{align}
for all $a \in \Gamma$ and $b \in \Sigma$. For an operator, $A$, when referring to the corresponding matrix, we will overload the symbol $A$ instead of using $M$ as above. For complex Euclidean spaces $\X = \complex^{\Sigma}$ and $\Y = \complex^{\Y}$, we define the \index{standard basis (operators)}{\emph{standard basis of a space of operators}} by the collection $\{E_{a,b} : a \in \Gamma, \ b \in \Sigma \}$ that forms a basis of $\Lin(\X,\Y)$. The operator $E_{a,b}$ is defined as 
\[
 E_{a,b}(c,d) =
  \begin{cases} 
      \hfill 1 \hfill & \text{ if $(c,d) = (a,b)$}, \\
      \hfill 0 \hfill & \text{ otherwise}, \\
  \end{cases}
\]
for all $c \in \Gamma$ and $d \in \Sigma$. The \index{identity operator}{\emph{identity operator}}, $\I \in \Lin(\X)$, is the operator that obeys $\I u = u$ for all $u \in \X$. In terms of its matrix representation, the identity operator has ones along the diagonal, and zeros everywhere else. The identity operator acting on space $\X$ may be written as $\I_{\X}$ or as $\I$ if it is clear what space the operator is acting on from the context. 

For any operator $A \in \Lin(\X,\Y)$ with $\X = \complex^{\Sigma}$ and $\Y = \complex^{\Gamma}$, the \index{conjugate}{\emph{conjugate}} of $A$ is denoted as $\overline{A} \in \Lin(\X,\Y)$ where the matrix representation of $\overline{A}$ has entries that are complex conjugates of the entries in the matrix representation of $A$, that is
\begin{align}
	\overline{A}(a,b) = \overline{A(a,b)},
\end{align}
for all $a \in \Gamma$ and $b \in \Sigma$. The \index{transpose}{\emph{transpose}} of $A \in \Lin(\X,\Y)$, denoted $A^{\Trans} \in \Lin(\Y,\X)$, is the operator whose matrix representation is defined by 
\begin{align}
	A^{\Trans}(b,a) = A(a,b),
\end{align}
for all $a \in \Gamma$ and $b \in \Sigma$. For any operator $A \in \Lin(\X,\Y)$, there exists a unique operator $A^* \in \Lin(\Y,\X)$ that is referred to as the \index{adjoint}{\emph{adjoint}}, where $A^*$ satisfies the equation
\begin{align}
	\ip{v}{Au} = \ip{A^*v}{u},
\end{align}
for all $u \in \X$ and $v \in \Y$. In the matrix representation, $A^*$ is the \index{conjugate transpose}{\emph{conjugate transpose}} of $A$, that is
\begin{align}
	A^* = \left(\overline{A}\right)^{\Trans} = \overline{\left( A^{\Trans}\right)}.
\end{align}
The \index{trace}{\emph{trace}} of an operator $A \in \Lin(\X)$ is the sum of its diagonal elements, that is
\begin{align} \label{eq:trace}
	\tr(A) = \sum_{a \in \Sigma} A(a,a).
\end{align}
For operators $A,B \in \Lin(\X,\Y)$ we denote the \index{Hilbert-Schmidt inner product}{\emph{Hilbert-Schmidt inner product}} as 
\begin{align}
	\ip{A}{B} = \tr(A^*B). 
\end{align}
For any operators $A,B \in \Lin(\X)$ we define the \index{Lie bracket}{\emph{Lie bracket}} $\left[A,B\right]$ as 
\begin{align}
	\left[A,B\right] = AB - BA. 
\end{align} 
We say that operators $A$ and $B$ \index{commute}{\emph{commute}} if and only if $\left[A,B\right]=0$. 

For any space $\X$, we define the following types of operators acting on the space $\X$: 
\begin{itemize}
	\item {\it Hermitian operators.} An operator $H \in \Lin(\X)$ is \index{Hermitian}{\emph{Hermitian}} if $H = H^*$. We use $\Herm(\X)$ to denote the set of all Hermitian operators. 

	\item {\it Positive semidefinite operators.} An operator $P \in \Lin(\X)$ is \index{positive semidefinite}{\emph{positive semidefinite}} if and only if it holds that $P = X^*X$ for some operator $X \in \Lin(\X)$. We use $\Pos(\X)$ to denote the set of all positive semidefinite operators. 
	
	\item {\it Density operators.} An operator $\rho \in \Lin(\X)$ is a \index{density operator}{\emph{density operator}} if $\rho \in \Pos(\X)$ and $\tr(\rho) = 1$. We use $\Density(\X)$ to denote the set of all density operators. 
	
	\item {\it Projection operators.} An operator $\Pi \in \Pos(\X)$ is a \index{projection operator}{\emph{projection operator}} if $\Pi^2 = \Pi$. We use $\Proj(\X)$ to denote the set of all projection operators. 
	
	\item {\it Unitary operators.} An operator $U \in \Lin(\X)$ is a \index{unitary operator}{\emph{unitary operator}} if $U$ is a linear isometry from $\X$ to $\Y$, where a linear isometry is an operator $U \in \Lin(\X,\Y)$ such that $U^*U = \I_{\X}$. 
\end{itemize}
For any space $\X$, the aforementioned operators obey the following relationships
\begin{align}
	\Density(\X) \subset \Pos(\X) \subset \Herm(\X) \subset \Lin(\X) \quad \textnormal{and} \quad \Proj(\X) \subset \Pos(\X),
\end{align}
as well as 
\begin{align}
	\Unitary(\X) \subset \Lin(\X).
\end{align}

\subsubsection*{Norms}

For any complex Euclidean spaces $\X$ and $\Y$ and any operator $A \in \Lin(\X,\Y)$, we define a \index{norm}{\emph{norm}} of $A$, denoted as $\norm{A}$, as a function which satisfies the following conditions:
\begin{enumerate}
	\item $\norm{A} \geq 0$ for all $A \in \Lin(\X,\Y)$,
	\item $\norm{A} = 0$ if and only if $A = 0$ for all $A \in \Lin(\X,\Y)$,
	\item $\norm{\alpha A} = \abs{\alpha}\norm{A}$ for all $\alpha \in \complex$ and for all $A \in \Lin(\X,\Y)$,
	\item $\norm{A + B} \leq \norm{A} + \norm{B}$ for all $A,B \in \Lin(\X,\Y)$. 
\end{enumerate}
 For any operator $A \in \Lin(\X,\Y)$ and any real number $p \geq 1$, one may define the \index{Schatten $p$-norms}{\emph{Schatten $p$-norms}} as
\begin{align}
	\norm{A}_p = \left( \tr \left( \left( A^* A \right)^{\frac{p}{2}} \right) \right)^{\frac{1}{p}}.
\end{align}
In particular, we focus on the Schatten $p$-norms for $p = 1$ and $p = \infty$ which are given the special names of the trace norm and the spectral norm, respectively.
\begin{itemize}
	\item \index{trace norm}{\emph{Trace norm}}. The \emph{trace norm} of an operator $A \in \Lin(\X,\Y)$ is defined by
\begin{align}
	\norm{A}_1 = \left( \tr\left( (A^* A)^{\frac{1}{2}} \right) \right)^{\frac{1}{1}} = \tr(\sqrt{A^*A}),
\end{align}
where $\sqrt{A}$ is the unique positive semidefinite operator called the \index{square root (of operator)}{\emph{square root}} of $A$ that has the property $\left(\sqrt{A} \right)^2 = A$. 
	\item \index{spectral norm}{\emph{Spectral norm}}. The \emph{spectral norm} of an operator $A \in \Lin(\X,\Y)$ is defined by
	\begin{align}
		\norm{A}_{\infty} = \max \left \{ \norm{Au} : u \in \X, \norm{u} = 1 \right \}.
	\end{align}
	When referring to the spectral norm, we often will drop the $\infty$ subscript from $\norm{\cdot}_{\infty}$ to just $\norm{\cdot}$. 
\end{itemize}

\subsubsection*{The tensor product}

For a set of $n$ complex Euclidean spaces, $\X_1 = \complex^{\Sigma_1}, \ldots, \X_n = \complex^{\Sigma_n}$, the \index{tensor product}{\emph{tensor product}} of these spaces is given by
\begin{align}
	\X_1 \otimes \cdots \otimes \X_n = \complex^{\Sigma_1 \times \cdots \times \Sigma_n}.
\end{align}
One may consider the tensor product acting on vectors $u_1 \in \X_1, \ldots, u_n \in \X_n$ denoted as 
\begin{align}
	u_1 \otimes \cdots \otimes u_n \in \X_1 \otimes \cdots \otimes \X_n,
\end{align}
which refers to the vector 
\begin{align}
	\left( u_1 \otimes \cdots \otimes u_n \right) \left( a_1, \ldots, a_n \right) = u_1(a_1) \cdots u_n(a_n).
\end{align}
One may also consider the tensor product acting on operators. For complex Euclidean spaces $\X_1 = \complex^{\Sigma_1}, \ldots, \X_n = \complex^{\Sigma_n}$ and $\Y_1 = \complex^{\Gamma_1}, \ldots, \Y_n = \complex^{\Gamma_n}$, for alphabets $\Sigma_1, \ldots, \Sigma_n$ and $\Gamma_1, \ldots, \Gamma_n$, define a set of operators 
\begin{align}
	A_1 \in \Lin(\X_1,\Y_1), \ldots, A_n \in \Lin(\X_n,\Y_n).
\end{align} 
We then define the tensor product acting on operators $A_1, \ldots, A_n$ as 
\begin{align} \label{eq:A-operators}
	A_1 \otimes \cdots \otimes A_n \in \Lin(\X_1 \otimes \cdots \otimes \X_n, \Y_1 \otimes \cdots \otimes \Y_n),
\end{align}
where the tensor product of $A_1, \ldots, A_n$ is the unique operator that satisfies 
\begin{align}
	\left(A_1 \otimes \cdots \otimes A_n \right) \left( u_1 \otimes \cdots \otimes u_n \right) = \left(A_1 u_1 \right) \otimes \cdots \otimes \left( A_n u_n \right),
\end{align}
for all $u_1 \in \X_1, \ldots, u_n \in \X_n$. 


For any complex Euclidean space $\X$, we may also use the shorthand $\X^{\otimes n}$ to denote the $n$-fold tensor product of $\X$ with itself, that is
\begin{align}
	\X^{\otimes n} = \underbrace{\X \otimes \cdots \otimes \X}_{n\textnormal{-times}}.
\end{align}

\subsubsection*{Mappings}

We denote linear mappings acting on operators as $\Phi : \Lin(\X) \rightarrow \Lin(\Y)$. We use $\Trans(\X,\Y)$ to denote the set of all such mappings. Each $\Phi \in \Trans(\X,\Y)$ has a unique adjoint mapping $\Phi^* \in \Trans(\Y,\X)$ defined as 
\begin{align}
	\ip{Y}{\Phi(X)} = \ip{\Phi^*(Y)}{X},
\end{align}
for all $X \in \Lin(\X)$ and $Y \in \Lin(\Y)$. For instance, for an operator $X \in \Lin(\X)$ where $\X = \complex^{\Sigma}$, the trace function from equation~\eqref{eq:trace} may be described as a mapping of the following form
\begin{align}
	\tr: \Lin(\X) \rightarrow \complex. 
\end{align}
For operators $X \in \Lin(\X)$ and $Y \in \Lin(\Y)$, the \index{partial trace}{\emph{partial trace}} is a map defined as $\tr_{\Y} \in \Trans(\X \otimes \Y, \X)$
\begin{align}
 \tr_{\Y} = \I_{\X} \otimes \tr.
\end{align}
For a space $\X$, the \index{identity map}{\emph{identity map}}, $\I_{\Lin(\X)} \in \Trans(\X)$, is given as 
\begin{align}
	\I_{\Lin(\X)}(X) = X
\end{align}
for all $X \in \Lin(\X)$.

We shall make use of a correspondence between $\Lin(\Y,\X)$ and $\X \otimes \Y$ for spaces $\X = \complex^{\Sigma}$ and $\Y = \complex^{\Gamma}$. This serves as a correspondence between operators and vectors, and is denoted by the ``$\vec$'' linear mapping
\begin{align}
	\vec: \Lin(\Y,\X) \rightarrow \X \otimes \Y
\end{align} 
defined by 
\begin{align}
	\vec(E_{a,b}) = e_a \otimes e_b
\end{align}
for all $a \in \Sigma$ and $b \in \Gamma$. Using the matrix representation of $A \in \Lin(\Y,\X)$, the $\vec$ mapping can be thought of as stacking the rows of $A$ to form a single vector. For example, for the matrix
\begin{align}
	A = \begin{pmatrix} a_{1,1} & \cdots & a_{1,n} \\ \vdots & \ddots & \vdots \\ a_{n,1} & \cdots & a_{n,n} \end{pmatrix} \in \Lin(\Y,\X),
\end{align}
the $\vec$ mapping has the following effect
\begin{align}
	\vec \left( A \right) = \left( a_{1,1}, \ldots, a_{1,n}, \ldots, a_{n,1}, \ldots, a_{n,n} \right)^{\Trans} \in \X \otimes \Y.
\end{align}

For arbitrary spaces $\X$ and $\Y$, we consider the following useful sets of linear mappings:
\begin{itemize}
	\item \index{completely positive (map)}{\emph{Completely positive}}. A mapping $\Phi \in \Trans(\X,\Y)$ is \emph{completely positive} if
\begin{align}
	\Phi \otimes \I_{\Z} (X) \in \Pos(\Y \otimes \Z),
\end{align}
for each complex Euclidean space $\Z$ and for any $X \in \Pos(\X \otimes \Z)$.
	\item \index{trace preserving (map)}{\emph{Trace preserving}}. A mapping $\Phi \in \Trans(\X,\Y)$ is \emph{trace preserving} if
\begin{align}
	\tr(\Phi(X)) = \tr(X)
\end{align}
for all $X \in \Lin(\X)$. 
	\item \index{Hermiticity-preserving (map)}{\emph{Hermiticity preserving}}. A mapping $\Phi \in \Trans(\X,\Y)$ is \emph{Hermiticity preserving} if
\begin{align}
	\Phi(H) \in \Herm(\Y)
\end{align}
for every Hermitian operator $H \in \Herm(\X)$. 
\end{itemize}

%

\subsection{Operator decompositions and vector decompositions}

The following operator and vector decompositions are fundamental to many proofs that appear in quantum information, and indeed also appear as essential steps in the proofs in this thesis. 

The \index{singular value theorem}{\emph{singular value theorem}} states that for any nonzero operator $A \in \Lin(\X,\Y)$ with $r = \rank(A)$, that there exists positive real numbers $s_1, \ldots, s_r \in \real$ and orthonormal sets $\{x_1, \ldots, x_r\} \subset \X$ and $\{y_1, \ldots, y_r\} \subset \Y$ such that 
\begin{align}
	A = \sum_{i=1}^r s_i y_i x_i^*.
\end{align}
Such a decomposition is referred to as a \index{singular value decomposition}{\emph{singular value decomposition}}. We refer to the real numbers $s_1, \ldots, s_r$ as the \index{singular values}{\emph{singular values}} of $A$ and the sets $y_1, \ldots, y_r$ and $x_1, \ldots, x_r$ are usually called the \index{left singular vectors}{\emph{left singular vectors}} and \index{right singular vectors}{\emph{right singular vectors}} of $A$, respectively. 

The \index{spectral theorem}{\emph{spectral theorem}} states that an operator $A \in \Lin(\X)$ with $r = \rank(A)$ is Hermitian if and only if there exists real numbers $\lambda_1, \ldots, \lambda_r \in \real$, and an orthonormal set $\{x_1, \ldots, x_r \} \subset \X$ such that 
\begin{align}
	A = \sum_{i=1}^r \lambda_i x_i x_i^*.
\end{align}
Such a decomposition is called a \index{spectral decomposition}{\emph{spectral decomposition}}. We refer to the numbers $\lambda_1, \ldots, \lambda_r$ as the \index{eigenvalues}{\emph{eigenvalues}} of $A$ and the vectors $x_1, \ldots, x_r$ as the \index{eigenvectors}{\emph{eigenvectors}} of $A$.   
 
The \index{Schmidt decomposition}{\emph{Schmidt decomposition}} of an arbitrary nonzero vector $u \in \X \otimes \Y$ consists of a positive integer $r \geq 1$ and orthonormal sets $\{x_1, \ldots, x_r\} \subset \X$ and $\{y_1, \ldots, y_r \} \subset \Y$ such that $u$ may be expressed as 
\begin{align}
	u = \sum_{i=1}^r s_i x_i \otimes y_i.
\end{align}

\subsection{Convexity and semidefinite programming}

\subsubsection*{Convexity}

We shall denote finite-dimensional real or complex vector spaces as either $\V$ or $\W$. In this section, the space $\V$ will typically denote either $\real^n$ or $\complex^n$, for some finite $n > 1$, and $\W$ shall be a subset of $\V$. We say that a set $\W \subseteq \V$ is \index{convex}{\emph{convex}} if for all $u,v \in \W$ and all $\lambda \in [0,1]$ it is true that 
\begin{align}
	\lambda u + (1-\lambda)v \in \W.
\end{align}
Otherwise, we say that $\W$ is \index{non-convex}{\emph{non-convex}} or \emph{not convex}. We say that a set $\W \subseteq \V$ is \index{open (set)}{\emph{open}} if and only if for all elements $w \in \W$ there exists a real number $\epsilon > 0$ such that 
\begin{align}
	\{ v \in \V : \norm{w-v} < \epsilon \} \subseteq \W.
\end{align}
We say that a set $\W \subseteq \V$ is \index{closed (set)}{\emph{closed}} if and only if it is the complement of an open set. For $\W \subseteq \V$ we refer to a \index{sequence}{\emph{sequence}} of vectors in $\W$ as a function 
\begin{align}
	s: \natural \rightarrow \W
\end{align} 
where a sequence is denoted as $s(n) = u_n$ with $u_n \in \W$ for all $n \in \natural$. A \index{subsequence}{\emph{subsequence}} is a sequence that is obtainable from some sequence by removing elements without altering the order of the elements that remain. For $\W \subseteq \V$, we say that a sequence $s(n) \in \W$ is a \index{convergent sequence}{\emph{convergent sequence}} or converges to $v \in \V$ if for any real number $\epsilon > 0$ there exists $N \in \natural$ such that
\begin{align}
	\norm{s(n) - v} < \epsilon
\end{align}
for all $n > N$. We say that a set is \index{compact}{\emph{compact}} if and only if every sequence in $\W$ has a convergent subsequence.

We define a \index{probability vector}{\emph{probability vector}} $p \in \real^{\Sigma}$ for some alphabet $\Sigma$ if it satisfies the property 
\begin{align}
	p(a) \geq 0
\end{align} 
for all $a \in \Sigma$ as well as 
\begin{align}
	\sum_{a \in \Sigma} p(a) = 1. 
\end{align}
We use $p \in \P(\Sigma)$ to denote the set of all such probability vectors. We define a \index{convex combination}{\emph{convex combination}} of vectors in $\W$ as 
\begin{align}
	\sum_{a \in \Sigma} p(a) u_a,
\end{align}
where $\Sigma$ is some alphabet, $p \in \P(\Sigma)$ is a probability vector, and 
\begin{align}
	\{ u_a : a \in \Sigma \} \subseteq \W,
\end{align}
is a collection of vectors in $\W$. 

\subsubsection*{Hilbert spaces}
In this thesis, we will be primarily concerned with finite-dimensional complex Euclidean spaces, however, we will encounter a few results that will require the use of a possibly infinite-dimensional space. We, therefore, introduce the notion of a \index{Hilbert space}{\emph{Hilbert space}}, which generalizes finite-dimensional complex Euclidean spaces to spaces with any finite or infinite number of dimensions. Specifically, we will restrict our attention to \index{separable Hilbert space}{\emph{separable Hilbert spaces}}, that is a Hilbert space that has a countable orthonormal basis. In this thesis, whenever we refer to a Hilbert space, it is assumed that we are referring to a separable Hilbert space. We will always refer to such Hilbert spaces as $\H$ to distinguish them from finite-dimensional complex Euclidean spaces. Much of the discussion thus far on finite-dimensional complex Euclidean spaces may be ported over to infinite-dimensional Hilbert spaces, but we make note of a few key differences between them. 

Let $\{e_n : n \in \natural \}$ be a countable orthonormal basis of a Hilbert space $\H$. Then we can write each element $v \in \H$ as 
\begin{align}
	v = \sum_{n=1}^{\infty} \ip{v}{e_n}e_n.
\end{align}
We may also consider operators acting on a (possibly) infinite-dimensional Hilbert space. Given Hilbert spaces $\H_1$ and $\H_2$, one writes $\B(\H_1,\H_2)$ to refer to the collection of all \index{bounded operators}{\emph{bounded operators}} of the form
\begin{align}
	A : \H_1 \rightarrow \H_2,
\end{align}
such that 
\begin{align}
	\norm{Av} \leq c\norm{v}
\end{align}
for all $v \in \H_1$ and for some constant $c > 0$. We use the shorthand $\B(\H)$ to refer to the collection of $\B(\H,\H)$ bounded operators. Every bounded operator $A \in \B(\H)$ has a unique adjoint operator $A^* \in \B(\H)$ satisfying 
\begin{align}
	\ip{u}{Av} = \ip{A^*u}{v},
\end{align}
for all $u,v \in \H$, behaving in a similar fashion to adjoints on finite-dimensional complex Euclidean spaces. A positive semidefinite operator $P \in \B(\H)$ is defined in an analogous way to positive semidefinite operators over finite-dimensional spaces, namely that 
\begin{align}
	P = X^*X
\end{align}
for some operator $X \in \B(\H)$. Given an orthonormal basis $\{e_n : n \in \natural\} \subset \H$, we say that $A \in \B(\H)$ is a \index{trace class}{\emph{trace class}} operator if and only if 
\begin{align}
	\sum_{n \in \natural} \bigip{ \abs{A} e_n}{e_n} < \infty,
\end{align} 
where $\abs{A} = \sqrt{A^*A} \in \B(\H)$. For $A \in \B(\H)$, define 
\begin{align}
	\norm{A}_1 = \sum_{n \in \natural} \bigip{\abs{A}e_n}{e_n}.
\end{align}
We may therefore say that a bounded operator $A \in \B(\H)$ is also trace class if $\norm{A}_1 < \infty$. A density operator $\rho \in \B(\H)$ is both a bounded operator and a trace class operator.  

Let $\Y$ be a Banach space and let $\X = \Y^*$. Then we say that a sequence \index{weak-* convergence}{\emph{converges weak-*}} to a vector $f \in \X$ if 
\begin{align}
	\lim_{n \rightarrow \infty} f_n(v) = f(v),
\end{align}
for all $v \in \Y$. A consequence of the so-called \index{Banach-Alaoglu theorem}{\emph{Banach-Alaoglu theorem}}~\cite{Rudin1991} that we will use in Chapter~\ref{chap:extended_npa_hierarchy} is that every bounded sequence has a weak-* convergent subsequence provided $\Y$ is separable.

\subsubsection*{Semidefinite programming}

Let $\X$ and $\Y$ be complex Euclidean spaces, $A\in\Herm(\X)$ and $B\in\Herm(\Y)$
be Hermitian operators, and $\Phi \in \Trans(\X,\Y)$ be a Hermiticity preserving mapping.
A \index{semidefinite program}{\emph{semidefinite program}} (SDP) is defined by the triple $(A,B,\Phi)$ and is identified with
the following pair of optimization problems.
\begin{center}
  \begin{minipage}{2in}
    \centerline{\underline{Primal problem}}\vspace{-7mm}
    \begin{align*}
      \text{maximize:}\quad & \ip{A}{X}\\
      \text{subject to:}\quad & \Phi(X) = B,\\
      & X \in \Pos(\X).
    \end{align*}
  \end{minipage}
  \hspace*{1.5cm}
  \begin{minipage}{2.4in}
    \centerline{\underline{Dual problem}}\vspace{-7mm}
    \begin{align*}
      \text{minimize:}\quad & \ip{B}{Y}\\
      \text{subject to:}\quad & \Phi^{\ast}(Y) \geq A,\\
      & Y\in\Herm(\Y).
    \end{align*}
  \end{minipage}
\end{center}
An equivalent formulation of the above primal and dual problems is the so called ``standard form'' which is written as
\begin{center} 
	\begin{equation} \label{sdp:standard-form}
  \begin{minipage}{2.6in}
    \centerline{\underline{Primal problem}}\vspace{-7mm}
    \begin{align*}
      \text{maximize:}\quad & \ip{A}{X} \\
      \text{subject to:}\quad & \ip{B_1}{X} = \gamma_1, \\
      & \vdots \\
      & \ip{B_m}{X} = \gamma_m, \\
      & X\in\Pos(\X).
    \end{align*}
  \end{minipage}
  \hspace*{13mm}
  \begin{minipage}{2.6in}
    \centerline{\underline{Dual problem}}\vspace{-7mm}
    \begin{align}
      \text{minimize:}\quad & \sum_{j = 1}^m \gamma_j y_j \nonumber \\
      \text{subject to:}\quad & \sum_{j=1}^m y_j B_j \geq A, \nonumber \\
      & y_1, \ldots, y_m \in \real. \nonumber
    \end{align}
  \end{minipage}
  \end{equation}
\end{center}

In this case, $B_1, \ldots, B_m \in \Herm(\X)$ replace the $\Phi$ operators and $\gamma_1, \ldots, \gamma_m \in \real$ replace the $B$ operators. A proof of the equivalence between the two SDP formulations may be found in~\cite{Watrous2004}. One may prefer to use either form depending on the specifics of the problem and convenience of representation. 

\section{Quantum information theory}

\subsection{Quantum states, operations, and measurements}

We shall refer to the class of density operators interchangeably as \emph{quantum states}. For some state $\rho \in \Density(\X)$, we refer to $\rho$ as a \index{pure state}{\emph{pure state}} if $\rho$ additionally satisfies the constraint that $\rank(\rho) = 1$. Equivalently, the state $\rho$ is pure if there exists some vector $u \in \X$ such that $\rho = uu^*$. Otherwise, if $\rho$ is not pure, then we refer to $\rho$ as a \index{mixed state}{\emph{mixed state}}. From the spectral theorem, it follows that every quantum state may be written as a convex combination of pure states.  

For some state $\rho \in \Density(\X)$, one may consider a \index{register}{\emph{register}}, denoted as $\reg{X}$, as a computational abstraction in which the actions on the state $\rho$ are carried out. For spaces $\X, \Y,$ and $\Z$, we shall denote the corresponding registers as $\reg{X}$, $\reg{Y}$, and $\reg{Z}$, respectively. For a register $\reg{X}$, we use $\abs{\reg{X}}$ to denote the size of the register $\reg{X}$, where the size is indicative of the dimension of $\X$. We refer to registers of the binary values, $\{0,1\}$, as \index{qubits}{\emph{qubits}}. 

For some register $\reg{X}$, we may consider \index{measurement}{\emph{measurements}} on this register as being described by a set of positive semidefinite operators $\{P_a : a \in \Gamma \} \subset \Pos(\X)$ indexed by the alphabet $\Gamma$ of measurement outcomes satisfying the constraint that 
\begin{align}
	\sum_{a \in \Gamma} P_a = \I_{\X}.
\end{align}
Performing a measurement on $\reg{X}$ in state $\rho$, the outcome $a \in \Gamma$ results with probability $\ip{P_a}{\rho}$. We call a measurement $\{\Pi_a : a \in \Gamma \}$ a \index{projective measurement}{\emph{projective measurement}} if and only if all of the measurement operators are projection operators, i.e. $\Pi_a \in \Proj(\X)$ for all $a \in \Gamma$. For a projective measurement $\{\Pi_a : a \in \Gamma \} \subset \Proj(\X)$ and associated real number outcomes $\{ \lambda_a : a \in \Gamma \}$ the \index{observable}{\emph{observable}} corresponding to this measurement is 
\begin{align}
	A = \sum_{a \in \Gamma} \lambda_a \Pi_a.
\end{align}

%

We define a \index{quantum channel}{\emph{quantum channel}} as a linear mapping $\Phi \in \Trans(\X,\Y)$ that is completely positive and trace preserving. The set of all channels is denoted by $\Channel(\X,\Y)$.

For some complex Euclidean space $\X$, any state $\rho \in \Density(\X)$ may be \index{purified}{\emph{purified}}, that is, we are guaranteed that there exists a complex Euclidean space, $\Y$, with $\dim(\Y) = \rank(\rho)$, and a unit vector $u \in \X \otimes \Y$ such that 
\begin{align} \label{eq:purification}
	\rho = \tr_{\Y}(uu^*).
\end{align}
We refer to the state $uu^*$ as a \index{purification}{\emph{purification}} of $\rho$. A proof that a purification can be performed for any state can be seen by writing $\rho$ in terms of its spectral decomposition for some basis $\{ x_1, \ldots, x_r \} \subset \X$ and set of nonnegative real numbers $s_1, \ldots, s_r \in \real$ such that
\begin{align}
	\rho = \sum_{i=1}^r s_i x_i x_i^*.
\end{align} 
Define a state $u \in \X \otimes \Y$, which can be written in terms of its Schmidt decomposition as 
\begin{align}
	u = \sum_{i=1}^r \sqrt{s_i}x_i \otimes y_i,
\end{align}
where $\{ y_1, \ldots, y_r \}$ is orthonormal. Equation~\eqref{eq:purification} then follows from a routine calculation
\begin{equation}
	\begin{aligned}
		\tr_{\Y}(uu^*) &= \tr_{\Y} \left( \left( \sum_{i=1}^r \sqrt{s_i} x_i \otimes y_i \right) \left( \sum_{j=1}^r \sqrt{s_j} x_j \otimes y_j \right)^* \right) \\
		&= \tr_{\Y} \left( \sum_{i,j}^r \sqrt{s_i s_j} x_i x_j^* \otimes y_i y_j^* \right) \\
		&= \sum_{i,j}^r \delta_{i,j} \sqrt{s_i s_j} x_i x_j^* = \rho,
	\end{aligned}
\end{equation}
where we use $\delta_{i,j}$ to denote the \index{Kronecker delta function}{\emph{Kronecker delta function}} defined as 
\[
\delta_{i,j} =
  \begin{cases} 
      \hfill 0 \hfill & \text{ if $i \not= j$}, \\
      \hfill 1 \hfill & \text{ if $i = j$}. 
  \end{cases}
\]

\subsection{Entanglement and separability}

For complex Euclidean spaces $\X = \complex^{\Sigma}$ and $\Y = \complex^{\Gamma}$, we say that a pure state $u \in \X \otimes \Y$ is \index{separable}{\emph{separable}}, or equivalently that $u$ is a \index{product state}{\emph{product state}}, if it can be written as 
\begin{align} \label{eq:bipartite-separable}
	u = v \otimes w,
\end{align}
for some $v \in \X$ and $w \in \Y$. Otherwise, we say that $u$ is \index{entangled state}{\emph{entangled}}. Equation~\eqref{eq:bipartite-separable} is over two systems, $\X$ and $\Y$. We refer to such a system as a \index{bipartite system}{\emph{bipartite system}}. However the notion of separability extends to multipartite systems. For an integer $n > 1$ and complex Euclidean spaces $\X_1 = \complex^{\Sigma_1}, \ldots, \X_n = \complex^{\Sigma_n}$, we say that a pure state $u \in \X_1 \otimes \cdots \otimes \X_n$ is separable if it can be written as 
\begin{align}
	u = v_1 \otimes \cdots \otimes v_n
\end{align}
for some $v_1 \in \X_1, \ldots, v_n \in \X_n$. Otherwise, $u$ is entangled. A pure state $u \in \X \otimes \Y$ with $\X = \complex^{\Sigma}$ and $\Y = \complex^{\Gamma}$ such that $\abs{\Sigma} \geq \abs{\Gamma}$ is \index{maximally entangled}{\emph{maximally entangled}} if 
\begin{align}
	\tr_{\X}(uu^*) = \frac{\I_{\Y}}{\abs{\Gamma}}.
\end{align}
For some positive integer $m$, the canonical bipartite maximally entangled state is written as 
\begin{align}
	u = \frac{1}{\sqrt{m}} \sum_{c \in \integer_m} e_c \otimes e_c.
\end{align}


The notions of entanglement and separability also apply to operators. For $n > 1$ and complex Euclidean spaces $\X_1 = \complex^{\Sigma_1}, \ldots, \X_n = \complex^{\Sigma_n}$, the operator $R \in \Pos(\X_1 \otimes \cdots \otimes \X_n)$ is \emph{separable} if there exists $n$ collections of positive semidefinite operators 
\begin{align}
	\{P_{a,1} : a \in \Sigma_1 \} \subset \Pos(\X_1), \ldots, \{ P_{a,n} : a \in \Sigma_n \} \subset \Pos(\X_n),
\end{align}
such that 
\begin{align} \label{eq:separability-criteria}
	R = \sum_{a \in \Sigma} P_{a,1} \otimes \cdots \otimes P_{a,n}.
\end{align}
For complex Euclidean spaces $\X$ and $\Y$, we refer to the bipartite system described by operators $P \in \Pos(\X \otimes \Y)$ satisfying the condition in equation~\eqref{eq:separability-criteria} as being contained in the set $\Sep(\X_1 : \ldots : \X_n)$. We refer to such elements in this set as \index{separable operators}{\emph{separable operators}}. If the $P$ operators are also density matrices, that is if 
\begin{align}
	P \in \Sep(\X : \Y) \cap \Density(\X \otimes \Y),
\end{align}
then we say that $P \in \SepD(\X : \Y)$. We refer to such elements in this set as \index{separable density operator}{\emph{separable density operators}}. In contrast to being separable, if instead we have that $P \not \in \Sep(\X : \Y)$, then we refer to $P$ as an \index{entangled operator}{\emph{entangled operator}}. 

The following state
\begin{align} \label{eq:entangled-bell-state}
\tau = \frac{1}{2} \left(E_{0,0} \otimes E_{0,0} + E_{0,1} \otimes E_{0,1} + E_{1,0} \otimes E_{1,0} + E_{1,1} \otimes E_{1,1} \right),
\end{align}
is an example of an entangled operator, $\tau \not \in \Sep(\X : \Y)$, since $\tau$ cannot be written as a convex combination of tensor products. The entangled operator from equation~\eqref{eq:entangled-bell-state} is also maximally entangled, and is one state that is composed from an important class of states referred to as the \index{Bell state}{\emph{Bell states}},
\begin{equation} \label{eq:Bell-states-vecs}
	\begin{aligned}
		u_{0} = \frac{1}{\sqrt{2}} \left( e_0 \otimes e_0 + e_1 \otimes e_1 \right), &\quad u_{1} = \frac{1}{\sqrt{2}} \left( e_0 \otimes e_1 + e_1 \otimes e_0 \right), \\
		u_{2} = \frac{1}{\sqrt{2}} \left( e_0 \otimes e_1 - e_1 \otimes e_0 \right), &\quad u_{3} = \frac{1}{\sqrt{2}} \left( e_0 \otimes e_0 - e_1 \otimes e_1 \right),
	\end{aligned}
\end{equation}
where the state from equation~\eqref{eq:entangled-bell-state} is given by $\tau = u_{0} u_{0}^*$. 

An important class of unitary operators are the so called \index{Pauli operators}{\emph{Pauli operators}} defined by the matrices
\begin{align} \label{eq:pauli-ops}
	\I = \begin{pmatrix} 1 & 0 \\ 0 & 1 \end{pmatrix}, \quad X = \begin{pmatrix} 0 & 1 \\ 1 & 0 \end{pmatrix}, \quad Y = \begin{pmatrix} 0 & -i \\ i & 0 \end{pmatrix}, \quad Z = \begin{pmatrix} 1 & 0 \\ 0 & -1 \end{pmatrix}, 
\end{align} 
where $\I, X, Y, Z \in \Unitary(\complex^2)$.
For any positive integer $m$, the generalizations for the Pauli-$X$ and Pauli-$Z$ operators are defined as 
\begin{align}
	X_m = \sum_{c \in \integer_m} e_{c+1} e_c^* \quad \textnormal{and} \quad Z_m = \sum_{c \in \integer_m} \gamma_m(c) e_c e_c^*,
\end{align}
where 
\begin{align}
	\gamma_m(c) = \exp(2 \pi i c / m).
\end{align}
From this, we define the \index{generalized Pauli operators}{\emph{generalized Pauli operators}} in $U(\complex^m)$ as the set
\begin{align} \label{eq:generalized-pauli-ops}
	\left \{ W_{k_1,k_2}^{(m)} : k_1, k_2 \in \integer_m \right \}.
\end{align}
where $W_{k_1, k_2}^{(m)} = X_m^{k_1} Z_m^{k_2}$. For instance, for $m = 2$, writing the generalized Pauli operators as
\begin{align}
	\I = W_{0,0}^{(2)}, \quad X = W_{1,0}^{(2)}, \quad Y = i W_{1,1}^{(2)}, \quad Z = W_{0,1}^{(2)},
\end{align}
recovers the standard Pauli operators from equation~\eqref{eq:pauli-ops}. One may also consider a generalization of the Bell states to higher dimensions. We define the \index{generalized Bell basis}{\emph{generalized Bell basis}} density operators as a set, $\left \{ \phi^{(m)}_{k_1,k_2} : k_1,k_2 \in \integer_m \right \}$, where
\begin{align} \label{eq:generalized-bell-basis}
	\phi^{(m)}_{k_1,k_2} = \frac{1}{m} \vec \left( W^{(m)}_{k_1, k_2} \right) \vec \left( W^{(m)}_{k_1,k_2} \right)^*.
\end{align}
A quick calculation reveals that for $m = 2$, equation~\eqref{eq:generalized-bell-basis} gives 
\begin{equation}
	\begin{aligned}
		\phi_{0,0}^{(2)} = u_0 u_0^*, &\quad \phi_{0,1}^{(2)} = u_3 u_3^*, \\
		\phi_{1,0}^{(2)} = u_1 u_1^*, &\quad \phi_{1,1}^{(2)} = u_2 u_2^*,
	\end{aligned}
\end{equation}
which are the density operators that correspond to the Bell states from equation~\eqref{eq:Bell-states-vecs}.

\subsection{Teleportation} \label{sec:teleportation}

One of the most intriguing protocols in quantum information is that of \index{teleportation}{\emph{teleportation}}: a process in which one party transmits a qubit to another party using resources consisting of a pair of maximally entangled qubits and two bits of communication~\cite{Bennett1993}. The traditional teleportation process may be generalized.

\begin{figure}[!htpb] 
	\begin{center}
		\includegraphics[scale=1.0]{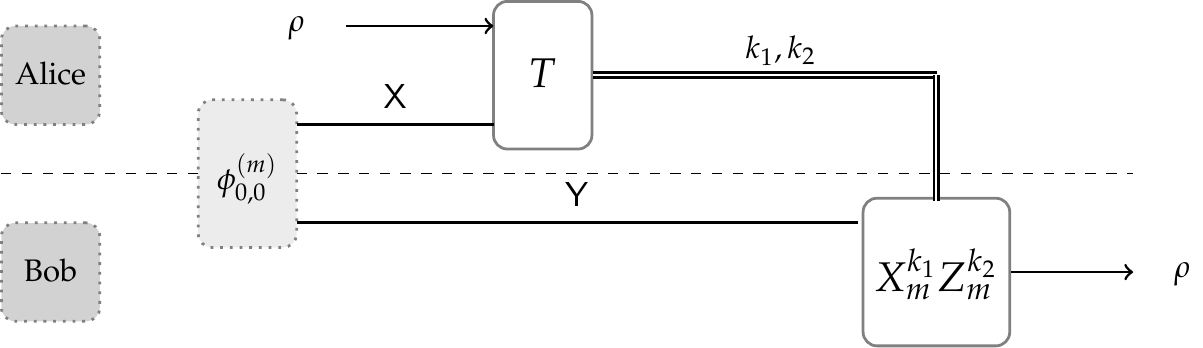}
	\end{center}
		\caption[The teleportation protocol.]{The teleportation protocol. Alice's goal is to teleport the state $\rho$ to Bob. The dashed line in the center separates the actions of Alice and Bob. Alice and Bob prepare a maximally entangled state where part of the state is contained in Alice's register $\reg{X}$ and the other part is contained in Bob's register $\reg{Y}$. Alice performs a Bell measurement and sends $k_1,k_2 \in \integer_m$, from this measurement to Bob. Bob receives $k_1$ and $k_2$ and applies of the generalized Pauli operators to his register $\reg{Y}$. The end result is that Bob now possesses the state $\rho$. }
	\label{fig:teleportation}
\end{figure}

Suppose that Alice and Bob prepare registers $(\reg{X},\reg{Y})$ where Alice holds $\reg{X}$ and Bob holds $\reg{Y}$ such that 
\begin{align}
	\abs{\reg{X}} = m = \abs{\reg{Y}},
\end{align}
where the contents of $(\reg{X},\reg{Y})$ corresponds to the maximally entangled state $\phi_{0,0}^{(m)}$. Alice obtains a new state, $\rho$, contained in register $\reg{Z}$ that she desires to send to Bob. In order to do so, both parties abide by the generalized teleportation protocol, that is depicted in Figure~\ref{fig:teleportation}.

\begin{enumerate}
\item Alice measures $(\reg{Z},\reg{X})$ with respect to the generalized Bell basis as defined from equation~\eqref{eq:generalized-bell-basis}
\begin{align}
	\left \{ \phi^{(m)}_{k_1, k_2} : k_1, k_2 \in \integer_m \right \},
\end{align}
where the outcomes of performing this measurement are given by $(k_1,k_2) \in \integer_{m} \times \integer_{m}$. 

\item Alice then sends measurement outcomes $(k_1,k_2)$ to Bob.

\item Bob receives $(k_1,k_2)$ from Alice and applies the generalized Pauli operator 
	\begin{align}
		W_{k_1,k_2}^{(m)},
	\end{align}
as defined in equation~\eqref{eq:generalized-pauli-ops} to his register, $\reg{Y}$, which completes the protocol, and teleports $\reg{Z}$ to Bob. 
\end{enumerate}
To see why the state $\rho$ from Alice is teleported to Bob, one may consider a generalization of the case for $m = 2$. The scenario where $m = 2$ is the most standard teleportation setup, and has been covered, for instance, in~\cite{Nielsen2001}, whereas the generalization is covered in~\cite{Wilde2013}.

\section{The nonlocal game model}
\label{sec:the-nonlocal-games-model}

The nonlocal game model is built upon the notion of \index{interactive proof system}{\emph{interactive proof systems}}, initially introduced in~\cite{Goldwasser1985} and independently in~\cite{Babai1985}, and further studied in classical complexity theory~\cite{Ben-Or1988,Fortnow1989,Babai1991,Feige1991,Feige1994,Raz1998}. Informally, an interactive proof system is an abstract model of computation where two parties, referred to as the \index{prover (interactive proof system)}{\emph{prover}} and the \index{verifier (interactive proof system)}{\emph{verifier}}, exchange messages to determine the validity of a mathematical statement. The interactive proof system model was made more powerful in~\cite{Ben-Or1988}, where the authors introduced a multi-prover interactive proof system that consisted of at least two independent provers, and one verifier. When considering two provers, we refer to them by the names of \emph{Alice} and \emph{Bob}, and we call the verifier the \emph{referee}. We refer to a one-round multi-prover interactive proof system with at least two provers (Alice and Bob) that play cooperatively against a referee as a \index{nonlocal game}{\emph{nonlocal game}}. In~\cite{Cleve2004}, the authors formally introduced the notion of a nonlocal game where the provers may share entanglement. The nonlocal game model served to embody the notion of a \index{Bell inequality}{\emph{Bell inequality}}, an inequality that illustrated the inability of a local hidden variable theory to account for certain consequences of entanglement~\cite{Bell1964}. In~\cite{Clauser1969}, the authors Clauser, Horne, Shimony, and Holt presented a special type of Bell inequality that has since been named after the authors as the \index{CHSH inequality}{\emph{CHSH inequality}}. In~\cite{Cleve2004}, the CHSH inequality was first formulated in the language of nonlocal games. Nonlocal games have since been studied in the context of quantum information, and the result has been an active topic of research~\cite{Cleve2004, Brassard2005, Cleve2008, Doherty2008,Kempe2010,Kempe2010a,Kempe2011,Junge2011a,Buhrman2013,
Regev2013,Dinur2013,Vidick2013,Cleve2014}.

More formally, a nonlocal game begins by the referee selecting a pair of questions $(x,y)$ according to a fixed probability distribution that is known to all parties. The referee then sends question $x$ to Alice and question $y$ to Bob. While we assume that Alice and Bob may confer prior to the start of the game, when the game begins, the players are forbidden from communicating with each other. So Alice is unaware of the question that Bob received, and vice versa. Alice and Bob then respond to the referee with answers $a$ and $b$, respectively. Upon receiving these answers, the referee evaluates some predicate based on the questions and answers to determine whether Alice and Bob win or lose. In addition to having complete knowledge of the probability distribution used to select $x$ and $y$, we also assume that Alice and Bob have complete knowledge of the predicate.

The goal of Alice and Bob is to maximize their probability of obtaining a winning outcome. Prior to the start of the game, Alice and Bob may corroborate on a joint \index{strategy}{\emph{strategy}} to achieve this goal. One may consider a number of strategies for nonlocal games. For example, if Alice and Bob make use of classical resources, we call this a \emph{classical strategy}. In such a strategy, the players answer \emph{deterministically} with answers $a$ and $b$ determined by functions of $x$ and $y$ respectively. The players may also make use of randomness, but doing so provides no advantage over simply playing deterministically. 

Another type of strategy that the players may adopt are \emph{quantum strategies}. In a quantum strategy, Alice and Bob prepare and share a joint quantum system prior to the start of the game. We also assume that the players have local sets of measurement operators that they perform on their share of the state after the game has begun and they have received their questions from the referee to determine their answers $a$ and $b$. 

One may consider a number of sub-classifications of quantum strategies as well. For instance, the size of the shared quantum system may make a difference in how well Alice and Bob can perform, and indeed one can ask whether or not the size of the state yields any advantage. Another sub-classification of a quantum strategy is referred to as a \emph{commuting measurement strategy}. In this type of strategy, the bipartite tensor product structure of a shared quantum system between Alice and Bob is relaxed to one in which the local measurements of Alice and Bob pairwise commute. 

An even more general type of strategy that Alice and Bob may adopt is referred to as a \emph{non-signaling strategy}. In this type of strategy, the only constraint on Alice and Bob is that they cannot communicate during the game, but may make use of any type of resource, even possibly those outside of the scope of resources described by quantum mechanics. 

We refer to the \index{value (nonlocal game)}{\emph{value}} of a nonlocal game as the supremum value of the probability for the players to win over all strategies of a specified type. 

\subsection{Strategies for nonlocal games}
\label{sec:strategies-for-nonlocal-games}

\subsubsection*{Nonlocal games and correlation functions}
\label{sec:correlation-operators-for-nonlocal-games}

We specify a nonlocal game, $G$, as a pair $(\pi,V)$ where $\pi$ is a probability distribution of the form 
\begin{align}
	\pi : \SigmaA \times \SigmaB \rightarrow \left[0,1\right]
\end{align}
on the Cartesian product of two alphabets $\SigmaA$ and $\SigmaB$, and $V$ is a function of the form 
\begin{align}
	V : \GammaA \times \GammaB \times \SigmaA \times \SigmaB \rightarrow \left[0,1\right],
\end{align}
for $\SigmaA$ and $\SigmaB$ as above and $\GammaA$ and $\GammaB$ being alphabets. We use
\begin{align}
	\Sigma = \SigmaA \times \SigmaB \quad \textnormal{and} \quad \Gamma = \GammaA \times \GammaB
\end{align}
to denote the respective sets of questions asked to Alice and Bob and the sets of answers sent from Alice and Bob to the referee. 

For any type of strategy, the output probability distributions produced by Alice and Bob may be described by a function
\begin{align}
	C : \GammaA \times \GammaB \times \SigmaA \times \SigmaB \rightarrow [0,1],
\end{align}
where the function $C$ is referred to as a \index{correlation function (nonlocal game)}{\emph{correlation function}}. The entry $C(a,b|x,y)$ corresponds to the probability that Alice and Bob output $a \in \GammaA$ and $b \in \GammaB$ given the input $x \in \SigmaA$ and $y \in \SigmaB$. Since a correlation function represents a collection of probability distributions, the operator $C$ must satisfy 
\begin{align}
	\sum_{(a,b) \in \Gamma} C(a,b|x,y) = 1
\end{align}
for all $x \in \SigmaA$ and $y \in \SigmaB$. In particular, Alice and Bob's winning probability is represented as 
\begin{align} \label{eq:correlation-function}
	\sum_{(x,y) \in \Sigma} \pi(x,y) \sum_{(a,b) \in \Gamma} V(a,b|x,y) C(a,b|x,y),
\end{align}
where the correlation function is defined with respect to the corresponding strategy implemented by Alice and Bob. 

In the coming sections, we shall make the notions of the value of a nonlocal game and their corresponding strategies more concrete. 

\subsubsection*{Quantum strategies for nonlocal games}
\label{sec:quantum-strategies-for-nonlocal-games}

A \index{quantum strategy (nonlocal game)}{\emph{quantum strategy}} for a nonlocal game consists of complex Euclidean spaces $\U$ for Alice and $\V$ for Bob, a quantum state $\sigma \in \Density(\U \otimes \V)$ contained in registers $(\reg{U},\reg{V})$, and two collections of measurements, 
\begin{align}
	\{ A_a^x : a \in \GammaA \} \subset \Pos(\U) \quad \textnormal{and} \quad \{ B_b^y : b \in \GammaB \} \subset \Pos(\V), 
\end{align}
for each $x \in \SigmaA$ and $y \in \SigmaB$ respectively. The measurement operators satisfy the constraint that 
\begin{align}
	\sum_{a \in \GammaA} A_a^x = \I_{\U} \quad \textnormal{and} \quad \sum_{b \in \GammaB} B_b^y = \I_{\V}
\end{align}
for each $x \in \SigmaA$ and $y \in \SigmaB$. 

At the beginning of the game, Alice and Bob prepare a quantum system represented by the bipartite state $\sigma \in \Density(\U \otimes \V)$. The referee then selects questions $(x,y) \in \Sigma$ according to the probability distribution $\pi$ that is known to Alice, Bob, and the referee. The referee then sends $x$ to Alice and $y$ to Bob. Alice and Bob then generate answers $a \in \GammaA$ and $b \in \GammaB$, by making measurements on their portion of the state $\sigma$. That is to say, Alice makes a measurement on her part of $\sigma$ with respect to the measurement operators $\{ A_a^x : a \in \GammaA \}$. Similarly, Bob also performs a measurement on his part of $\sigma$ using the set of measurement operators $\{ B_b^y : b \in \GammaB \}$. The answers $(a,b)$ are then sent to the referee. The referee now possesses the questions $(x,y)$ in addition to the responses sent by Alice and Bob, $(a,b)$. The referee uses this information to evaluate the predicate $V(a,b|x,y)$, resulting in either a winning or losing outcome, represented by a $1$ or a $0$, respectively. A depiction of a nonlocal game is given in Figure~\ref{fig:nonlocal-game}. 

\begin{figure}[!htpb] 
	\begin{center}
		\includegraphics[scale=1.0]{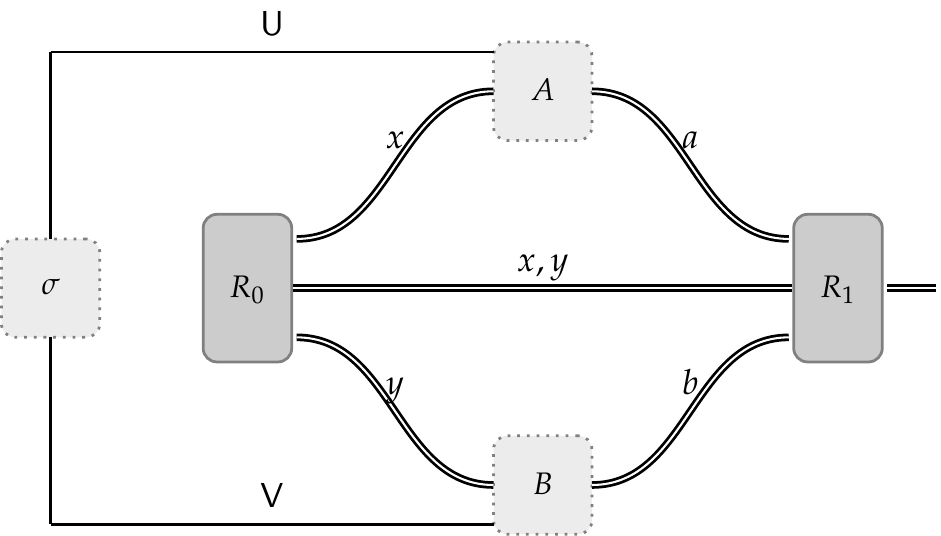}
	\end{center}
		\caption[A two-player nonlocal game.]{A two-player nonlocal game. In a nonlocal game, the players, Alice and Bob, first select a strategy. In the case of a quantum strategy, Alice and Bob may share a state $\sigma \in \Density(\U \otimes \V)$ in registers $(\reg{U},\reg{V})$. We assume that after this point, Alice and Bob are space-like separated and unable to communicate with each other for the remainder of the game. The referee then selects and sends questions $x \in \SigmaA$ for Alice and $y \in \SigmaB$ for Bob according to the publicly known probability distribution, $\pi$. The referee also keeps a copy of $x$ and $y$ after sending. Alice and Bob generate their answers $a \in \GammaA$ and $b \in \GammaB$ respectively, and send their answers to the referee, where the predicate $V(a,b|x,y)$ is computed to determine the probability that Alice and Bob win or lose.}
\label{fig:nonlocal-game}
\end{figure}

The winning probability for such a strategy in this game $G = (\pi,V)$ is given by equation~\eqref{eq:correlation-function} where $C$ is a \index{quantum correlation function (nonlocal game)}{\emph{quantum correlation function}} defined as 
\begin{align}
		C(a,b|x,y) = \bigip{A_a^x \otimes B_b^y}{\sigma},
\end{align}
for all $x \in \SigmaA$, $y \in \SigmaB$, $a \in \GammaA$, and $b \in \GammaB$. 

The \index{quantum value (nonlocal game)}{\emph{quantum value}} of a nonlocal game $G$, denoted as $\omega^*(G)$, is the supremum value of the winning probability of $G$ taken over all quantum strategies for Alice and Bob. We may also write $\omega_N^*(G)$ to denote the quantum value of $G$ when the dimension of Alice's space and the dimension of Bob's space is equal to $N$. Note that we can make the assumption on Alice and Bob's spaces that
\begin{align}
	\dim(\A) = \dim(\B),
\end{align} 
since whichever strategy Alice and Bob use, the probability of winning is always going to be maximized when $\sigma$ is a pure state. That is, Alice and Bob will not perform any better for any possible convex combination of $\sigma$, so we may as well assume $\sigma$ to be pure, that is $\sigma = uu^*$ for some nonzero vector $u \in \U \otimes \V$. It holds that one can always take the Schmidt decomposition of $u$, where it can be observed that the state is supported on spaces of equal dimension. 

We use $\Q_N(\GammaA, \GammaB | \SigmaA, \SigmaB)$ to denote the set of all quantum correlation functions when the dimension of Alice and Bob's system is equal to $N$.

\subsubsection*{Classical strategies for nonlocal games}
\label{sec:classical-strategies-for-nonlocal-games}

A \index{classical strategy (nonlocal game)}{\emph{classical strategy}} for a nonlocal game consists of functions $f : \SigmaA \rightarrow \GammaA$ and $g : \SigmaB \rightarrow \GammaB$ that deterministically produce an output for every input. This type of classical strategy is referred to as a \index{deterministic strategy}{\emph{deterministic strategy}}, as the outputs are produced deterministically. Provided that we are interested in maximizing the winning probability, there is no loss in generality in restricting our attention to deterministic strategies for any classical strategy, as the classical value of any nonlocal game will always be obtained by such a deterministic strategy. This can be observed by the fact that any probabilistic strategy may be expressed as a convex combination of deterministic strategies, so Alice and Bob gain no benefit from using randomness. In other words, the average is never bigger than the maximum. The winning probability for such a strategy in this game $G = (\pi,V)$ is given by
\begin{align}
	\sum_{(x,y) \in \Sigma} \pi(x,y) \sum_{(a,b) \in \Gamma} V(a,b|x,y) C(a,b|x,y),
\end{align}
where $C$ is the \index{deterministic correlation function (nonlocal game)}{\emph{deterministic correlation function}} defined as 
\[
 C(a,b|x,y) =
  \begin{cases} 
      \hfill 1 \hfill & \text{ if $a = f(x) \textnormal{ and } b = g(y)$,} \\
      \hfill 0 \hfill & \text{ otherwise,} \\
  \end{cases}
\]
for all $x \in \SigmaA$, $y \in \SigmaB$, $a \in \GammaA$, and $b \in \GammaB$. We use $\L(\GammaA,\GammaB|\SigmaA,\SigmaB)$ to denote the set of all deterministic correlation functions, including all convex combinations of deterministic correlation functions as well.

The \index{classical value (nonlocal game)}{\emph{classical value}} of a nonlocal game $G$, denoted as $\omega(G)$ is the supremum value of the winning probability of $G$ taken over all classical strategies for Alice and Bob. As argued above the supremum value is necessarily achieved by some deterministic strategy, and therefore we may write $\omega(G)$ as 
\begin{align}
	\omega(G) = \max_{f,g} \sum_{(x,y) \in \Sigma} \pi(x,y) V(f(x),g(y)|x,y),
\end{align}
where the maximum is over all functions $f: \SigmaA \rightarrow \GammaA$ and $g: \SigmaB \rightarrow \GammaB$. 

\subsubsection*{Commuting measurement strategies for nonlocal games}
\label{sec:commuting-measurement-strategies-for-nonlocal-games}

A \index{commuting measurement strategy (nonlocal game)}{\emph{commuting measurement strategy}} consists of a single (possibly infinite-dimensional) Hilbert space, $\H$, a quantum state $\sigma \in \Density(\H)$, and two collections of measurements, 
\begin{align}
	\{ A_a^x : a \in \GammaA \} \subset \Pos(\H) \quad \textnormal{and} \quad \{ B_b^y : b \in \GammaB \} \subset \Pos(\H),
\end{align}
such that 
\begin{align}
	\sum_{a \in \GammaA} A_a^x = \sum_{b \in \GammaB} B_b^y = \I_{\H}
\end{align}
for all $x \in \SigmaA$ and $y \in \SigmaB$, and that satisfy
\begin{align}
	\left[ A_a^x, B_b^y \right] = 0
\end{align}
for all $x \in \SigmaA, y \in \SigmaB, a \in \GammaA,$ and $b \in \GammaB$. For a nonlocal game, $G = (\pi,V)$, the winning probability for a commuting measurement strategy is given by equation~\eqref{eq:correlation-function} where $C$ is a \index{commuting measurement correlation function (nonlocal game)}{\emph{commuting measurement correlation function}} defined as 
\begin{align}
	C(a,b|x,y) = \bigip{A_a^x B_b^y}{\sigma}
\end{align}
for all $x \in \SigmaA$, $y \in \SigmaB$, $a \in \GammaA$, and $b \in \GammaB$. We use $\C(\GammaA, \GammaB | \SigmaA, \SigmaB)$ to denote the set of all commuting measurement correlation function. 

The \index{commuting measurement value (nonlocal game)}{\emph{commuting measurement value}} of a nonlocal game $G$, denoted as $\omega_c(G)$, is the supremum value of the winning probability of $G$ taken over all commuting measurement strategies for Alice and Bob. Elsewhere in the literature, the commuting measurement value is also referred to as the field-theoretic value~\cite{Doherty2008}. 

\subsubsection*{Non-signaling strategies for nonlocal games}
\label{sec:non-signaling-strategies-for-nonlocal-games}

For a nonlocal game $G = (\pi,V)$, the winning probability for a \index{non-signaling strategy}{\emph{non-signaling strategy}} is given by equation~\eqref{eq:correlation-function} where $C$ is a \index{non-signaling correlation function (nonlocal game)}{\emph{non-signaling correlation function}} that satisfies the following non-signaling properties
\begin{align} \label{eq:ns-constraints-1}
	\sum_{b \in \GammaB} C(a,b|x,y) = \sum_{b \in \GammaB} C(a,b|x,y^{\prime}),
\end{align}
for all $a \in \GammaA$, $x \in \SigmaA$, $y \in \SigmaB$, and $y^{\prime} \in \SigmaB$ and 
\begin{align} \label{eq:ns-constraints-2}
	\sum_{a \in \GammaA} C(a,b|x,y) = \sum_{a \in \GammaA} C(a,b|x^{\prime},y),
\end{align}
for all $b \in \GammaB$, $x \in \SigmaA$, $x^{\prime} \in \SigmaA$, and $y \in \SigmaB$ and where $C$ is normalized and nonnegative. We use $\NS(\GammaA, \GammaB | \SigmaA, \SigmaB)$ to denote the set of all non-signaling correlation functions. 

The \index{non-signaling value (nonlocal game)}{\emph{non-signaling value}} of a nonlocal game, $G$, denoted as $\omega_{\ns}(G)$, is the supremum value of the winning probability of $G$ taken over all non-signaling strategies for Alice and Bob.

If one wishes, one may even consider a more general type of strategy, indeed the most general strategy one may consider in the realm of nonlocal games. This most general type of strategy, referred to as a \index{global strategy}{\emph{global strategy}} is one in which the correlation functions need only satisfy
\begin{align}
	\sum_{(a,b) \in \Gamma} C(a,b|x,y) = 1, 
\end{align}
for all $x \in \SigmaA$ and $y \in \SigmaB$ and that the entries of $C$ be nonnegative. Indeed, these two constraints are in all of the strategies we have considered thus far, as they are implicit from the definition of a correlation function from Section~\ref{sec:correlation-operators-for-nonlocal-games}. Another way to think about non-signaling strategies therefore is to consider them as strategies that satisfy these two implicit restrictions of a global strategy, as well as the non-signaling constraints from equations~\eqref{eq:ns-constraints-1}~and~\eqref{eq:ns-constraints-2}.  

\subsection{Relationships between different strategies and values} \label{sec:relationships-between-different-strategies-and-values}


In order to determine how well the players can expect to do for a particular choice of strategy, we consider the corresponding values for each strategy. There exist algorithms that allow one to calculate the classical and non-signaling values of an arbitrary nonlocal game by optimizing over the respective classical and non-signaling correlation functions~\cite{Brunner2014}. These algorithms are not particularly efficient however, as there are exponentially many possible functions for Alice and Bob to consider. In general, with the exception of a specific class of nonlocal games~\cite{Cleve2008}, there is no known efficient algorithm to exactly compute the quantum value of an arbitrary nonlocal game. There is, however, an approach that allows one to approximate the quantum values of arbitrary nonlocal games~\cite{Doherty2008,Navascues2007,Navascues2008}, a technique we will investigate in greater detail in Chapter~\ref{chap:extended_npa_hierarchy}.  

The sets of correlation functions for the strategies we have covered thus far have the following relationship 
\begin{align}
	\L(\GammaA, \GammaB | \SigmaA, \SigmaB) \subseteq \Q(\GammaA, \GammaB | \SigmaA, \SigmaB) \subseteq \C(\GammaA, \GammaB | \SigmaA, \SigmaB) \subseteq \NS(\GammaA, \GammaB | \SigmaA, \SigmaB),
\end{align}
for alphabets $\GammaA, \GammaB, \SigmaA$ and $\SigmaB$. The relationship of $\L(\GammaA, \GammaB | \SigmaA, \SigmaB) \subseteq \Q(\GammaA, \GammaB | \SigmaA, \SigmaB)$ follows since Alice and Bob could use their shared entangled state only as a source of shared randomness. Recall, that Alice and Bob gain no benefit from using randomness in a classical strategy, so one may restrict attention to classical strategies defined in terms of deterministic ones. Should Alice and Bob use their quantum state in a quantum strategy as a source of shared randomness, this is no better than having them use a classical strategy, and gives the relationship between correlation functions. The relationship that $\Q(\GammaA, \GammaB | \SigmaA, \SigmaB) \subseteq \C(\GammaA, \GammaB | \SigmaA, \SigmaB)$ holds due to the fact that bipartite operators where the identity operator is on either side of the operator obey the commutation relationship, that is
\begin{align}
	\left[ A_a^x \otimes \I_{\B}, \I_{\A} \otimes B_b^y \right] = 0
\end{align}
for sets of operators $\{A_a^x : a \in \GammaA\}$ and $\{B_b^y : b \in \GammaB\}$ over all $x \in \SigmaA$, $y \in \SigmaB$, $a \in \GammaA$, and $b \in \GammaB$. The relationship that $\Q(\GammaA, \GammaB | \SigmaA, \SigmaB) \subseteq \NS(\GammaA, \GammaB | \SigmaA, \SigmaB)$ comes from observing that for a commuting measurement correlation function
\begin{align}
	C(a,b|x,y) = \bigip{A_a^x B_b^y}{\sigma}
\end{align}
we have that
\begin{align}
	\sum_{b \in \GammaB} C(a,b|x,y) = \sum_{b \in \GammaB} \bigip{A_a^x B_b^y}{\sigma} = \bigip{A_a^x}{\sigma},
\end{align}
or in other words, that there is no dependence on $y$. Similarly, we have that
\begin{align}
	\sum_{a \in \GammaA} C(a,b|x,y) = \sum_{a \in \GammaA} \bigip{A_a^x B_b^y}{\sigma} = \bigip{B_b^y}{\sigma}.
\end{align}


Given that the correlation functions obey these relationships, it then follows that the corresponding values of these operators must also satisfy a similar inequality relationship
\begin{align}
	0 \leq \omega(G) \leq \omega^*(G) \leq \omega_c(G) \leq \omega_{\ns}(G) \leq 1. 
\end{align}

\chapter{Extended Nonlocal Games}
\label{chap:extended_nonlocal_games}

In this chapter, we introduce the \emph{extended nonlocal game} model. This model is a generalization of the nonlocal game model in which the referee now also holds a quantum system provided to it by Alice and Bob at the start of the game. In Section~\ref{sec:the-model-of-extended-nonlocal-games} we shall present the extended nonlocal game protocol, and in Section~\ref{sec:strategies-extended-nonlocal-games}, we define the corresponding strategies that Alice and Bob may adopt during the course of the game. 

The general notion of extended nonlocal games was previously considered by Fritz~\cite{Fritz2012}. In particular, Fritz considered a class of games, called \emph{bipartite steering games}, which are essentially extended nonlocal games in which the referee randomly chooses to ask either Alice or Bob a question. Extended nonlocal games may also be viewed as being equivalent to multipartite steering inequalities, in a similar way to the equivalence between nonlocal games and Bell inequalities. Multipartite steering inequalities and related notions were studied in the papers~\cite{Cavalcanti2015} and~\cite{Sainz2015}. The term ``extended nonlocal game'' along with a treatment more focused in the nonlocal game setting was carried out in~\cite{Johnston2015a}. 

This chapter is based on joint work with Nathaniel Johnston, Rajat Mittal, and John Watrous~\cite{Johnston2015a}

\minitoc

\section{The extended nonlocal game model}
\label{sec:the-model-of-extended-nonlocal-games}

Extended nonlocal games are a generalization of nonlocal games in which the \emph{referee also holds a quantum system}, provided to it by Alice and Bob at the start of the game. Similar to an ordinary nonlocal game, one may consider a variety of possible strategies for Alice and Bob in an extended nonlocal game. In particular, there are classes of strategies that are analogous to classical, quantum, commuting measurement, and non-signaling strategies from the nonlocal game model. Further details on how these are adapted for the case of extended nonlocal games will be elaborated on in this chapter. 

An \index{extended nonlocal game}{\emph{extended nonlocal game}} is similar to a nonlocal game in the sense that it is a cooperative game played between two players, Alice and Bob, against a referee. The game begins much like a nonlocal game, with the referee selecting and sending a pair of questions $(x,y)$ according to a fixed probability distribution. Once Alice and Bob receive $x$ and $y$, they respond with respective answers $a$ and $b$. Unlike a nonlocal game, the outcome of an extended nonlocal game is determined by measurements performed by the referee on its share of the state initially provided to it by Alice and Bob. Specifically, Alice and Bob's winning probability is determined by a collection of measurements, $V(a,b|x,y) \in \Pos(\R)$, where $\R = \complex^m$ is a complex Euclidean space with $m$ denoting the dimension of the referee's quantum system---so if Alice and Bob's response $(a,b)$ to the question pair $(x,y)$ leaves the referee's system in the quantum state 
\begin{align}
	\sigma_{a,b}^{x,y} \in \Density(\R),
\end{align}
then their winning and losing probabilities are given by 
\begin{align}
	\biggip{V(a,b|x,y)}{\sigma_{a,b}^{x,y}} \quad \textnormal{and} \quad \biggip{\I - V(a,b|x,y)}{\sigma_{a,b}^{x,y}}.
\end{align}


\section{Strategies for extended nonlocal games} \label{sec:strategies-extended-nonlocal-games}

\subsection{Extended nonlocal games and assemblage operators} \label{sec:extended-nonlocal-games-and-assemblage-operators}


An extended nonlocal game $H$ is defined by a pair $(\pi,V)$, where $\pi$ is a probability distribution of the form 
\begin{align}
	\pi : \SigmaA \times \SigmaB \rightarrow \left[0,1\right]
\end{align}
on the Cartesian product of two alphabets $\SigmaA$ and $\SigmaB$, and $V$ is a function of the form 
\begin{align}
	V : \GammaA \times \GammaB \times \SigmaA \times \SigmaB \rightarrow \Pos(\R), 
\end{align}
for $\SigmaA$ and $\SigmaB$ as above, $\GammaA$ and $\GammaB$ being alphabets, and $\R$ refers to the referee's space. Just as in the case for nonlocal games, we shall use the convention that
\begin{align}
	\Sigma = \SigmaA \times \SigmaB \quad \textnormal{and} \quad \Gamma = \GammaA \times \GammaB
\end{align}
to denote the respective sets of questions asked to Alice and Bob and the sets of answers sent from Alice and Bob to the referee. 

When analyzing a strategy for Alice and Bob, it may be convenient to define a function 
\begin{align}
	K : \GammaA \times \GammaB \times \SigmaA \times \SigmaB \rightarrow \Pos(\R).
\end{align}
We will refer to the function $K$ as an \index{assemblage}{\emph{assemblage}}. The operators output by this function represent the \emph{unnormalized} states of the referee's quantum system when Alice and Bob respond to the question pair $(x,y)$ with the answer pair $(a,b)$. 


We can however, if we wish, normalize these states by noting that the quantity $\tr\left(K(a,b|x,y)\right)$ refers to the probability with which Alice and Bob answer $(a,b)$ for the question pair $(x,y)$. Assuming that $\tr \left( K(a,b|x,y) \right) > 0$, we define a set of normalized states 
\begin{align} \label{eq:enlg-conditional-states}
	\sigma_{a,b}^{x,y} = \frac{K(a,b|x,y)}{\tr \left( K(a,b|x,y) \right)}
\end{align}
of the referee's system conditioned on this question and answer pair. Note that the function $K$ completely determines the performance of Alice and Bob's strategy for $H$ as it encodes the probability that Alice and Bob obtain answers $a \in \GammaA$ and $b \in \GammaB$ given questions $x \in \SigmaA$ and $y \in \SigmaB$ as
\begin{align}
	\tr \left( K(a,b|x,y)	\right),
\end{align} 
along with the conditional states from equation~\eqref{eq:enlg-conditional-states}. In particular, Alice and Bob's winning probability is represented as 
\begin{align} \label{eq:quantum-winning-probability-kabxy}
	\sum_{(x,y) \in \Sigma} \pi(x,y) \sum_{(a,b) \in \Gamma} \biggip{V(a,b|x,y)}{K(a,b|x,y)}. 
\end{align}

\subsection{Standard quantum strategies for extended nonlocal games} \label{sec:standard-quantum-strategies-extended-nonlocal-games}

A \index{standard quantum strategy (extended nonlocal game)}{\emph{standard quantum strategy}} for an extended nonlocal game consists of finite-dimensional complex Euclidean spaces $\U$ for Alice and $\V$ for Bob, a quantum state $\sigma \in \Density(\U \otimes \R \otimes \V)$, and two collections of measurements, 
\begin{align}
	\{ A_a^x : a \in \GammaA \} \subset \Pos(\U) \quad \textnormal{and} \quad \{ B_b^y : b \in \GammaB \} \subset \Pos(\V),
\end{align}
for each $x \in \SigmaA$ and $y \in \SigmaB$ respectively. As usual, the measurement operators satisfy the constraint that 
\begin{align}
	\sum_{a \in \GammaA} A_a^x = \I_{\U} \quad \textnormal{and} \quad \sum_{b \in \GammaB} B_b^y = \I_{\V},
\end{align}
for each $x \in \SigmaA$ and $y \in \SigmaB$. 

When the game is played, Alice and Bob present the referee with a quantum system so that the three parties share the state $\sigma \in \Density(\U \otimes \R \otimes \V)$. The referee selects questions $(x,y) \in \Sigma$ according to the distribution $\pi$ that is known to all participants in the game. The referee then sends $x$ to Alice and $y$ to Bob. At this point, Alice and Bob make measurements on their respective portions of the state $\sigma$ using their measurement operators to yield an outcome to send back to the referee. Specifically, Alice measures her portion of the state $\sigma$ with respect to her set of measurement operators $\{A_a^x : a \in \GammaA\}$, and sends the result $a \in \GammaA$ of this measurement to the referee. Likewise, Bob measures his portion of the state $\sigma$ with respect to his measurement operators $\{ B_b^y : b \in \GammaB \}$ to yield the outcome $b \in \GammaB$, that is then sent back to the referee. At the end of the protocol, the referee measures its quantum system with respect to the measurement $\{ V(a,b|x,y), \I - V(a,b|x,y) \}$. 

\begin{figure}[!htpb] \label{fig:extended-nonlocal-game}
	\begin{center}
		\includegraphics[scale=0.9]{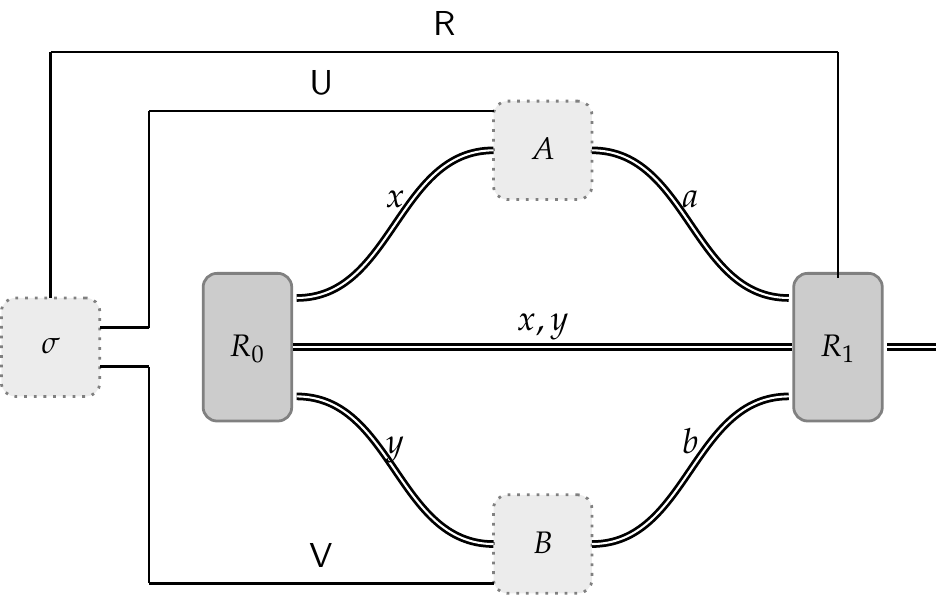}
	\end{center}
		\caption[A two-player extended nonlocal game.]{A two-player extended nonlocal game. Alice, Bob, and the referee all share a tripartite state, $\sigma \in \Density(\U \otimes \R \otimes \V)$, contained in registers $(\reg{U},\reg{R},\reg{V})$. The referee selects questions $(x,y) \in \Sigma$ according to the probability distribution $\pi$, and sends $x$ to Alice and $y$ to Bob. Upon receiving $x$ and $y$, Alice and Bob respond with answers $a \in \GammaA$ and $b \in \GammaB$. After receiving $a$ and $b$, the referee performs a measurement on its system $\{ V(a,b|x,y), \I - V(a,b|x,y) \}$ to determine the probability with which Alice and Bob win the game.}
\end{figure}

The winning probability for such a strategy in this game $H = (\pi,V)$ is given by 
\begin{align} \label{eq:enlg-quantum-winning-probability}
	\sum_{(x,y) \in \Sigma} \pi(x,y) \sum_{(a,b) \in \Gamma} \biggip{A_a^x \otimes V(a,b|x,y) \otimes B_b^y}{\sigma},
\end{align}  
or equivalently the winning probability for such a strategy is given by 
\begin{align}
	\sum_{(x,y) \in \Sigma} \pi(x,y) \sum_{(a,b) \in \Gamma} \biggip{V(a,b|x,y)}{K(a,b|x,y)},
\end{align}
where the operator $K : \GammaA \times \GammaB \times \SigmaA \times \SigmaB \rightarrow \Pos(\R)$ is a \index{standard quantum assemblage}{\emph{standard quantum assemblage}} operator defined as 
\begin{align}
	K(a,b|x,y) = \tr_{\U \otimes \V} \left( \left( A_a^x \otimes \I_{\R} \otimes B_b^y \right) \sigma \right).
\end{align}
This may be observed by noting that 
\begin{align}
	 &\sum_{(x,y) \in \Sigma} \pi(x,y) \sum_{(a,b) \in \Gamma} \biggip{V(a,b|x,y)}{K(a,b|x,y)} \\
	 &=\sum_{(x,y) \in \Sigma} \pi(x,y) \sum_{(a,b) \in \Gamma} \biggip{V(a,b|x,y)}{\tr_{\U \otimes \V} \left( \left(  A_a^x \otimes \I_{\R} \otimes B_b^y \right) \sigma \right)} \\
	 &=\sum_{(x,y) \in \Sigma} \pi(x,y) \sum_{(a,b) \in \Gamma} \tr \left( V(a,b|x,y) \tr_{\U \otimes \V} \left( \left( A_a^x \otimes \I_{\R} \otimes B_b^y \right) \sigma \right) \right)  \label{eq:ip-to-tr-1} \\
	 &=\sum_{(x,y) \in \Sigma} \pi(x,y) \sum_{(a,b) \in \Gamma} \tr \left( \left( A_a^x \otimes V(a,b|x,y) \otimes B_b^y \right) \sigma \right) \label{eq:factor-out-pt} \\
	 &=\sum_{(x,y) \in \Sigma} \pi(x,y) \sum_{(a,b) \in \Gamma} \biggip{ A_a^x \otimes V(a,b|x,y) \otimes B_b^y}{\sigma}, \label{eq:ip-to-tr-2} 
\end{align}
where in equations~\eqref{eq:ip-to-tr-1} and~\eqref{eq:ip-to-tr-2}, we used the relationship between the inner product and trace operations, and in equation~\eqref{eq:factor-out-pt}, we factor out the partial trace operator from the overall trace. 

For any strategy, there is an equivalent strategy where $\sigma$ is a pure state and the sets of measurements that Alice and Bob possess are projective operators. This can be shown through a two step process. First, either party may purify the state. It makes no difference whether Alice or Bob hold the purification, but for the sake of argument, we assume that Alice purifies the state. Second, the non-projective measurements can be simulated by projective measurements in a standard way that is described by Naimark's theorem~\cite{Paulsen2003}.

For a given extended nonlocal game $H = (\pi,V)$, we write $\omega^*(H)$ to denote the \index{standard quantum value}{\emph{standard quantum value}} of $H$, which is the supremum value of Alice and Bob's winning probability over all standard quantum strategies for $H$. We may wish to consider the standard quantum value of $H$ when the dimension on Alice's space and Bob's space are equal to $N$, which we denote as $\omega_N^*(H)$. 

\subsection{Unentangled strategies for extended nonlocal games} \label{sec:unentangled-strategies-extended-nonlocal-games}

An \index{unentangled strategy (extended nonlocal game)}{\emph{unentangled strategy}} for an extended nonlocal game is simply a standard quantum strategy for which the state $\sigma \in \Density(\U \otimes \R \otimes \V)$ initially prepared by Alice and Bob is fully separable. Equivalently, there exists an alphabet $\Delta$ and collections of states 
\begin{align}
	\{ \sigma_j^{\reg{U}} : j \in \Delta \} \subseteq \Density(\U), \quad \{ \sigma_j^{\reg{R}} : j \in \Delta \} \subseteq \Density(\R), \quad \textnormal{and} \quad \{ \sigma_j^{\reg{V}} : j \in \Delta \} \subseteq \Density(\V),
\end{align}
and a probability vector $p \in \P(\Delta)$ such that 
\begin{align}
	\sigma = \sum_{j \in \Delta} p(j) \sigma_j^{\reg{U}} \otimes \sigma_j^{\reg{R}} \otimes \sigma_j^{\reg{V}}.
\end{align} 

Note that any unentangled strategy is equivalent to a strategy where Alice and Bob store only classical information after the referee's quantum system has been provided to it. This is because the state that Alice and Bob share between themselves and the referee is fully separable, that is, there are no quantum correlations that may arise between the constituent subsystems held by the parties. Alice and Bob are therefore justified in following a deterministic strategy on their local systems in a similar way that was considered in classical strategies for nonlocal games.  


Furthermore, any such strategy is equivalent to one given by a convex combination of deterministic strategies, in which Alice and Bob initially provide the referee with a fixed pure state $\sigma = uu^* \in \Density(\R)$, and respond to questions deterministically, with Alice responding to $x \in \SigmaA$ with $a = f(x)$ and Bob responding to $y \in \SigmaB$ with $b = g(y)$ for functions $f: \SigmaA \rightarrow \GammaA$ and $g : \SigmaB \rightarrow \GammaB$.


For a given extended nonlocal game $H = (\pi,V)$, we write $\omega(H)$ to denote the \index{unentangled value}{\emph{unentangled value}} of $H$, which is the supremum value for Alice and Bob's winning probability in $H$ over all unentangled strategies. It follows by convexity and compactness that this supremum value is necessarily achieved by some deterministic strategy. The unentangled value for such a game is therefore given by
\begin{align} \label{eq:enlg-unentangled-value}
\omega(G) = \max_{f,g} \biggnorm{ \sum_{(x,y) \in \Sigma} \pi(x,y) V(f(x),g(y)|x,y) },
\end{align}
where the maximum is over all functions $f : \SigmaA \rightarrow \GammaA$ and $g: \SigmaB \rightarrow \GammaB$. 

\subsection{Commuting measurement strategies for extended nonlocal games} \label{sec:commuting-measurement-strategies-extended-nonlocal-games}

A \index{commuting measurement strategy (extended nonlocal game)}{\emph{commuting measurement strategy}} for an extended nonlocal game consists of a single (possibly infinite-dimensional) Hilbert space, $\H$, a quantum state $\sigma \in \Density(\R \otimes \H)$, and two collections of measurements, 
\begin{align}
	\{ A_a^x : a \in \GammaA \} \subset \Pos(\H) \quad \textnormal{and} \quad \{ B_b^y : b \in \GammaB \} \subset \Pos(\H),
\end{align}
such that 
\begin{align}
	\sum_{a \in \GammaA} A_a^x = \sum_{b \in \GammaB} B_b^y = \I_{\H}
\end{align}
for all $x \in \SigmaA$ and $y \in \SigmaB$ and that
\begin{align}
	\left[ A_a^x, B_b^y \right] = 0
\end{align}
for all $x \in \SigmaA, y \in \SigmaB, a \in \GammaA,$ and $b \in \GammaB$. 

For an extended nonlocal game, $H = (\pi,V)$, the winning probability for a commuting measurement strategy is given by 
\begin{align}
	\sum_{(x,y) \in \Sigma} \pi(x,y) \sum_{(a,b) \in \Gamma} \biggip{V(a,b|x,y) \otimes A_a^x B_b^y}{\sigma},
\end{align}
or equivalently the winning probability for such a strategy is given by 
\begin{align}
	\sum_{(x,y) \in \Sigma} \pi(x,y) \sum_{(a,b) \in \Gamma} \biggip{V(a,b|x,y)}{K(a,b|x,y)},
\end{align}
where the operator $K : \GammaA \times \GammaB \times \SigmaA \times \SigmaB$ is a \index{commuting measurement assemblage}{\emph{commuting measurement assemblage}} operator defined as 
\begin{align}
	K(a,b|x,y) = \tr_{\H} \left( \left( \I_{\R} \otimes A_a^x B_b^y  \right) \sigma \right).
\end{align}
This may be observed by noting that
\begin{align}
	 &\sum_{(x,y) \in \Sigma} \pi(x,y) \sum_{(a,b) \in \Gamma} \biggip{V(a,b|x,y)}{K(a,b|x,y)} \\
	 &=\sum_{(x,y) \in \Sigma} \pi(x,y) \sum_{(a,b) \in \Gamma} \biggip{V(a,b|x,y)}{\tr_{\H} \left( \left( \I_{\R} \otimes A_a^x B_b^y \right) \sigma \right)} \\
	 &=\sum_{(x,y) \in \Sigma} \pi(x,y) \sum_{(a,b) \in \Gamma} \tr \left( V(a,b|x,y) \tr_{\H} \left( \left( \I_{\R} \otimes A_a^x B_b^y  \right) \sigma \right) \right)  \label{eq:ip-to-tr-1-com} \\
	 &=\sum_{(x,y) \in \Sigma} \pi(x,y) \sum_{(a,b) \in \Gamma} \tr \left( \left( V(a,b|x,y) \otimes A_a^x B_b^y \right) \sigma \right) \label{eq:factor-out-pt-com} \\
	 &=\sum_{(x,y) \in \Sigma} \pi(x,y) \sum_{(a,b) \in \Gamma} \biggip{V(a,b|x,y) \otimes A_a^x B_b^y  }{\sigma}, \label{eq:ip-to-tr-2-com}
\end{align}
where the analysis follows in a similar manner to the case of standard quantum strategies for extended nonlocal games as described in Section~\ref{sec:standard-quantum-strategies-extended-nonlocal-games}. 

The \index{commuting measurement value (extended nonlocal game)}{\emph{commuting measurement value}} of $H$, which is denoted $\omega_c(H)$, is the supremum value of the winning probability of $H$ taken over all commuting measurement strategies for Alice and Bob. 



\subsection{Non-signaling strategies for extended nonlocal games} \label{sec:non-signaling-strategies-extended-nonlocal-games}

A \index{non-signaling strategy (extended nonlocal game)}{\emph{non-signaling strategy}} for an extended nonlocal game consists of a function
\begin{align}
	K: \GammaA \times \GammaB \times \SigmaA \times \SigmaB \rightarrow \Pos(\R)
\end{align}
such that
\begin{align} \label{eq:enlg-ns-assemblage}
	\sum_{a \in \GammaA} K(a,b|x,y) = \xi_b^y \quad \textnormal{and} \quad \sum_{b \in \GammaB} K(a,b|x,y) = \rho_a^x,
\end{align}
for all $x \in \SigmaA$ and $y \in \SigmaB$ where $\left\{ \xi_b^y : y \in \SigmaB, \ b \in \GammaB \right \}$ and $\left \{ \rho_a^x : x \in \SigmaA, \ a \in \GammaA \right \}$ are collections of operators satisfying 
\begin{align}
	\sum_{a \in \GammaA} \rho_a^x = \tau = \sum_{b \in \GammaB} \xi_b^y,
\end{align}
for every choice of $x \in \SigmaA$ and $y \in \SigmaB$ and where $\tau \in \Density(\R)$ is a density operator. We refer to the function $K$ satisfying equation~\eqref{eq:enlg-ns-assemblage} as a \index{non-signaling assemblage}{\emph{non-signaling assemblage}}. For any extended nonlocal game, $H = (\pi,V)$, the winning probability for a non-signaling strategy is given by
\begin{align}
	\sum_{(x,y) \in \Sigma} \pi(x,y) \sum_{(a,b) \in \Gamma} \biggip{V(a,b|x,y)}{K(a,b|x,y)},
\end{align}
where $K(a,b|x,y)$ is a non-signaling assemblage. The \index{non-signaling value (extended nonlocal game)}{\emph{non-signaling value}} of $H$, which is denoted as $\omega_{\ns}(H)$, is the supremum value of the winning probability of $H$ taken over all non-signaling strategies for Alice and Bob. Note that the supremum is achieved since the set of non-signaling assemblages is compact which implies the that the supremum is achieved. 

\subsubsection*{Relationships between different strategies and values}

It is worth noting that the same inequality chain that holds for nonlocal games also holds for extended nonlocal games, 
\begin{align}
	0 \leq \omega(H) \leq \omega^*(H) \leq \omega_c(H) \leq \omega_{\ns}(H) \leq 1. 
\end{align}
Due to the similarity in definitions of strategies, this line of reasoning is nearly identical to that of Section~\ref{sec:relationships-between-different-strategies-and-values}. 

\chapter{On the properties of the extended nonlocal game model}
\label{chap:infinite_entanglement}

This chapter is focused on studying the relationship between \emph{quantum-classical games} and extended nonlocal games. In Section~\ref{sec:quantum-classical-games}, we formally define the model of quantum-classical games, which is a variant of an ordinary nonlocal game, where now, in this model, the referee sends quantum registers to Alice and Bob in place of sending classical messages. This variant was considered by Buscemi~\cite{Buscemi2012} under the name of \emph{semi-quantum games}, and was also considered by Regev and Vidick~\cite{Regev2013}, where they studied a class of quantum-classical games, referred to as \emph{quantum XOR games}, where the winning condition is predicated upon an XOR function. 

One of the main results of Regev and Vidick's paper was to show that there exists a class of quantum XOR games for which no finite-dimensional quantum strategy can be optimal. In Section~\ref{sec:constructing-extended-nonlocal-games-from-quantum-classical-games}, we analyze this result in the context of extended nonlocal games, and building on their framework, show that there also exists a class of extended nonlocal games where no finite-dimensional quantum strategy can be optimal. We then use the relationship between extended nonlocal games and tripartite steering to arrive at the result that there exists a tripartite steering inequality for which an infinite-dimensional quantum state is required in order to achieve a maximal violation. From this, we conclude that there exists extended nonlocal games for which no finite-dimensional standard quantum strategy can be optimal. 

Finally, in Section~\ref{sec:variations-on-enlg}, we consider variants on the extended nonlocal game model. As we have covered, an extended nonlocal game is composed of three rounds of communication; where the type of communication in the first round from Alice and Bob to the referee is quantum, and the remaining two question and answer rounds are composed of classical communication. We ask here what happens if we exchange the type of communication for certain rounds and investigate these variations on the extended nonlocal game model. 

This chapter is based on joint work with John Watrous in~\cite{Russo2016}.

\minitoc

\section{Quantum-classical games}  \label{sec:quantum-classical-games}

\index{quantum-classical games (QC games)}{\emph{Quantum-classical games}} or \emph{QC games} for short, differ from nonlocal games in that the referee begins the game by preparing a tripartite quantum state and sends one part of it to each player, keeping a part of the state for itself. (This step replaces the generation of a classical question pair $(x,y)$ in an ordinary nonlocal game.) Once the players receive their portion of the tripartite state in a QC game, the players respond with classical answers $a$ and $b$ (as they would in a nonlocal game as well), and finally the referee determines whether the players win or lose by measuring its part of the original quantum state it initially prepared. (This step replaces the evaluation of a predicate $V(a,b|x,y)$ in an ordinary nonlocal game.) Games of this form, with slight variations from the general class just described, were considered by Buscemi~\cite{Buscemi2012} and Regev and Vidick~\cite{Regev2013}. 

Formally, a quantum-classical game (QC game) is specified by the following objects:
\begin{itemize}
	\item A state $\rho \in \Density(\X \otimes \S \otimes \Y)$ of a triple of registers $(\reg{X},\reg{S},\reg{Y})$.
	\item A collection of measurement operators $\{ Q_{a,b} : a \in \GammaA, \ b \in \GammaB \} \subset \Pos(\S)$, for alphabets $\GammaA$ and $\GammaB$.  
\end{itemize}
Viewing a QC game from the referee's perspective, it is played in the following manner:
\begin{enumerate}
	\item The referee prepares $(\reg{X},\reg{S},\reg{Y})$ in the state $\rho$, then sends $\reg{X}$ to Alice and $\reg{Y}$ to Bob. 
	\item Alice responds with $a \in \GammaA$ and Bob responds with $b \in \GammaB$. 
	\item The referee measures $\reg{S}$ with respect to the binary-valued measurement 
		\begin{align}
			\{Q_{a,b}, \I - Q_{a,b}\}.
		\end{align} 
	The outcome corresponding to the measurement operator $Q_{a,b}$ indicates that Alice and Bob \emph{win}, while the other measurement result indicates that they \emph{lose}. 
\end{enumerate}

\begin{figure}[!htpb] 
	\begin{center}
		\includegraphics[scale=1.0]{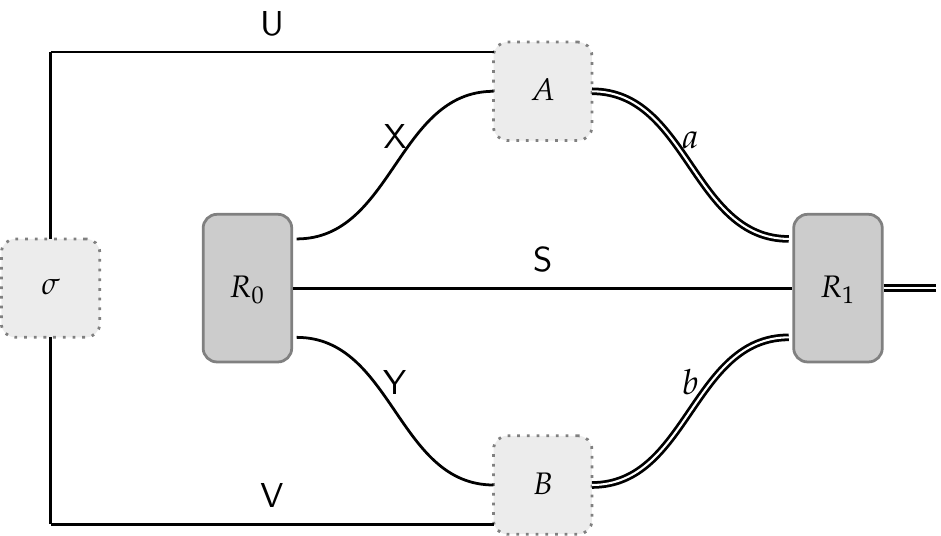}
	\end{center}
		\caption[A quantum strategy for a quantum-classical game.]{A quantum strategy for a quantum-classical game. Just as in a nonlocal game, if Alice and Bob are using a quantum strategy, they may prepare a state $\sigma \in \Density(\U \otimes \V)$ prior to the start of the game. Unlike a nonlocal game where the referee sends classical information, the referee in a quantum-classical game prepares a state $\rho \in \Density(\X \otimes \S \otimes \Y)$ in registers $(\reg{X},\reg{S},\reg{Y})$ and sends registers $\reg{X}$ and $\reg{Y}$ to Alice and Bob. Alice and Bob perform measurements to generate their answers $a \in \GammaA$ and $b \in \GammaB$, which are then sent back to the referee. The referee then evaluates whether or not Alice and Bob win or lose by making a measurement on its register $\reg{S}$.}
		\label{fig:quantum-classical-game}
\end{figure}

Just as there are various strategies that one may consider for the class of extended nonlocal games, one may also consider various classes of strategies for QC games. We will, however, restrict our attention to \index{quantum strategies (QC games)}{\emph{quantum strategies}} for a QC game. That is, a strategy that consists of a shared quantum state between Alice and Bob, as well as respective sets of measurement operators for Alice and Bob. This type of strategy is similar to a quantum strategy for a nonlocal game, but where now we take into account the fact that the questions that Alice and Bob receive in a QC game are provided via quantum registers. 

More precisely, a quantum strategy for a QC game specified by 
\begin{align}
	\rho \in \Density(\X \otimes \S \otimes \Y)	\quad \textnormal{and} \quad \{Q_{a,b} : a \in \GammaA \ b \in \GammaB \} \subset \Pos(\S)
\end{align}
as above, consists of the following objects:
\begin{enumerate}
	\item A state $\sigma \in \Density(\U \otimes \V)$, for $\U$ being the space corresponding to a register $\reg{U}$ held by Alice and $\V$ being the space corresponding to a register $\reg{V}$ held by Bob. 
	\item A measurement $\{A_a : a \in \GammaA\} \subset \Pos(\U \otimes \X)$ for Alice, performed on the pair $(\reg{U},\reg{X})$ after she receives $\reg{X}$ from the referee, and a measurement $\{ B_b : b \in \GammaB \} \subset \Pos(\Y \otimes \V)$ for Bob, performed on the pair $(\reg{Y},\reg{V})$ after he receives $\reg{Y}$ from the referee. 
\end{enumerate}
A quantum-classical game where Alice and Bob use a quantum strategy is depicted in Figure~\ref{fig:quantum-classical-game}. 

One may express the winning probability for a QC game when Alice and Bob adopt a quantum strategy as 
\begin{align}
	\sum_{(a,b) \in \GammaA \times \GammaB} \biggip{ A_a \otimes Q_{a,b} \otimes B_b }{W \left( \sigma \otimes \rho \right) W^*},
\end{align}
where $W$ is the unitary operator that corresponds to the natural re-ordering of registers consistent with each of the tensor product operators $A_a \otimes Q_{a,b} \otimes B_b$ (i.e. the permutation $( \reg{U}, \reg{V}, \reg{X}, \reg{S}, \reg{Y} ) \mapsto ( \reg{U}, \reg{X}, \reg{S}, \reg{Y}, \reg{V} )$).
The \index{quantum value (quantum-classical game)}{\emph{quantum value}} of a QC game represents the supremum of the winning probabilities, taken over all quantum strategies. If $G_{qc}$ is the name assigned to a QC game having a specification as above, then we write $\omega^*_N(G_{qc})$ to denote the \emph{maximum} winning probability taken over all quantum strategies for which $\dim(\U) = N = \dim(\V)$, so that the quantum value of $G_{qc}$ is 
\begin{align}
	\omega^*(G_{qc}) = \lim_{N \rightarrow \infty} \omega_N^*(G_{qc}).
\end{align}

Regev and Vidick~\cite{Regev2013} proved that certain QC games have the following peculiar property: if Alice and Bob make use of an entangled state of two finite-dimensional quantum systems, initially shared between them, they can never achieve perfect optimality---it is always possible for them to do better (meaning that they win with a strictly larger probability) using some different shared entangled state on two larger quantum systems. Thus, it is only in the limit, as the local dimensions of their shared entangled states goes to infinity, that they can approach an optimal performance in these specific examples of games. This was previously established for analogues of nonlocal games for which both the questions and answers are quantum~\cite{Leung2013}, and it is an open question to determine if the same property holds for any ordinary nonlocal game, where both the questions and answers must be classical.

In particular, Regev and Vidick considered a specific class of QC games called \index{quantum XOR games}{\emph{quantum XOR games}}, where the winning condition in such a game is predicated on an XOR function. Regev and Vidick showed that there exists a family of quantum XOR games such that if the dimension of Alice and Bob's quantum system, $N$, is finite, then the quantum value will be strictly less than $1$. However, taking the limit as $N$ goes to infinity, the quantum value approaches $1$. A restatement of their result follows. 
\begin{theorem}[Theorem 1.2 of~\cite{Regev2013}] \label{thm:regev-vidick-qcg}
	There exists a quantum-classical game $G_{qc}$ such that 
	\begin{align}
		\omega^*(G_{qc}) = 1,
	\end{align}
and for every positive integer $N$ it holds that 
	\begin{align}
		\omega_N^*(G_{qc}) < 1.
	\end{align}
\end{theorem}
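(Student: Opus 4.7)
The plan is to exhibit $G_{qc}$ as a quantum XOR game, since this restricted class admits a clean semidefinite programming characterization of its quantum value. For such games, the bias can be written as the optimum of an SDP whose variables are Hermitian operator-valued entries built from Alice's and Bob's measurement observables acting on a shared state, and whose value is the norm of a certain operator-valued bilinear form. I would then design the referee's tripartite state $\rho\in\Density(\X\otimes\S\otimes\Y)$ and the measurement family $\{Q_{a,b}\}$ so that the associated SDP has supremum value $1$ but that this supremum is not attained over any fixed finite-dimensional Hilbert space. The intuition driving the construction is to make ``perfect'' play require Alice and Bob to locally produce, from their shared state and the registers received from the referee, an \emph{exact} maximally entangled state of unbounded Schmidt rank between specific subsystems controlled by the referee's verification measurement.

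For the upper direction $\omega^*(G_{qc})=1$, I would rely on the van Dam--Hayden \emph{embezzling states}: for any target bipartite state and any $\varepsilon>0$, there is a finite-dimensional shared state from which local isometries can produce the target up to trace distance $\varepsilon$. Inserting an embezzling state of sufficiently large dimension into the ``ideal'' strategy gives, by continuity of the winning functional in equation~\eqref{eq:enlg-quantum-winning-probability} applied to the quantum-classical setting, a finite-dimensional strategy winning with probability at least $1-O(\varepsilon)$. Taking $\varepsilon\to 0$ forces $\omega^*(G_{qc})\geq 1$, and the trivial bound gives equality.

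For the strict inequality $\omega_N^*(G_{qc})<1$ at every finite $N$, I would argue by contradiction: fix $N$ and suppose some strategy $(\sigma,\{A_a\},\{B_b\})$ with $\dim(\U)=\dim(\V)=N$ achieves value exactly $1$. Compactness of $\Density(\U\otimes\V)$ together with the compactness of the set of measurement tuples lets one pass to a witness achieving the supremum. A \emph{rigidity} analysis of the conditions under which the SDP attains value $1$ would then force the post-measurement states on the referee's register $\reg{S}$ to coincide exactly with a sequence of pure states whose reduced Gram matrix on Alice and Bob's side has rank strictly greater than $N$. This yields a dimension contradiction.

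The main obstacle is the rigidity step: showing that every optimal strategy in $G_{qc}$ is forced to locally produce a target state of infinite Schmidt rank, with no slack. A natural route is to encode into $\{Q_{a,b}\}$ a finite list of exact algebraic relations that Alice's and Bob's observables must satisfy conditional on the referee accepting, chosen so that these relations have no finite-dimensional representation but admit approximate finite-dimensional representations (via embezzling) of arbitrarily good quality. Carefully engineering $\rho$ and $\{Q_{a,b}\}$ to simultaneously (i) make the SDP value $1$ approachable via embezzling and (ii) be rigid enough that attainment requires an infinite-dimensional state is the technical heart of the argument, and is where I expect the bulk of the work to lie.
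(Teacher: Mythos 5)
You should first note that the thesis does not prove this statement at all: it is imported verbatim as Theorem~1.2 of~\cite{Regev2013}, and the surrounding text only summarizes the Regev--Vidick result before using it as a black box in the proof of Theorem~\ref{thm:enlg-from-qcg}. So there is no ``paper proof'' to match against; the question is whether your sketch would stand on its own as a proof of the cited theorem. In broad outline it tracks the actual Regev--Vidick strategy --- quantum XOR games, a norm/SDP characterization of the entangled bias, van Dam--Hayden embezzlement for the direction $\omega^*(G_{qc}) \geq 1$, and a dimension-dependent deficit for the direction $\omega^*_N(G_{qc}) < 1$ --- so the ingredients you name are the right ones.

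However, as a proof there is a genuine gap: you never exhibit the game, and the step you defer (``the rigidity analysis... is where I expect the bulk of the work to lie'') is the entire content of the theorem. Two specific problems. First, the compactness-plus-rigidity argument for $\omega^*_N(G_{qc}) < 1$ only gives a contradiction if you can actually prove that value~$1$ forces the players to hold a state of Schmidt rank exceeding $N$; asserting that the referee's measurements can be ``engineered'' to encode exact relations with no finite-dimensional representation is precisely the claim to be established, and for games with \emph{classical} questions and an XOR predicate it is provably false (Tsirelson's theorem gives finite-dimensional optimal strategies), so the argument must use the quantum questions in an essential way that your sketch does not identify. Second, even granting non-attainment at each fixed $N$, you still need the quantitative statement that finite-dimensional strategies can get \emph{arbitrarily close} to $1$; embezzlement gives this only once you have an explicit ideal (infinite-dimensional or limiting) strategy whose target entangled states you can approximately produce, which again presupposes the concrete construction. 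In Regev--Vidick this is carried out by building an explicit family of quantum XOR games, controlling the entangled bias via an operator-space norm, and proving a dimension-dependent upper bound on the bias of each member; without that construction and those estimates, the proposal is a plan rather than a proof.
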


\section[Constructing extended nonlocal games from quantum-classical games]{Constructing extended nonlocal games \\ from quantum-classical games} \label{sec:constructing-extended-nonlocal-games-from-quantum-classical-games}

In this section, we will state and prove an analogous theorem to Theorem~\ref{thm:regev-vidick-qcg} for an extended nonlocal game. That is, we will show that there exists an extended nonlocal game where the standard quantum value approaches $1$ when the dimension of the quantum systems shared by Alice and Bob approach infinity. 
\begin{theorem} \label{thm:enlg-from-qcg}
	Given a quantum-classical game, $G_{qc}$ with question registers $\reg{X}$ and $\reg{Y}$, there exists an extended nonlocal game, labelled as $H_t$, such that
	\begin{align} \label{eq:enlg-from-qcg}
		\omega_N^*(G_{qc}) \leq 1 - \abs{\reg{X}} \abs{\reg{Y}} \left( 1- \omega^*_{N \abs{\reg{X}} \abs{\reg{Y}}}(H_t) \right) \quad \textnormal{and} \quad \omega_N^*(H_t) \leq 1 - \frac{ 1 - \omega^*_{N \abs{\reg{X}} \abs{\reg{Y}}}(G_{qc}) }{\abs{\reg{X}} \abs{\reg{Y}}}.
	\end{align}
\end{theorem}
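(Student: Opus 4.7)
The plan is to define $H_t$ as a \emph{teleportation-based} emulation of $G_{qc}$, in which the quantum question-sending step of $G_{qc}$ (``send $\reg{X}$ to Alice and $\reg{Y}$ to Bob'') is replaced by classical communication of generalized-Pauli indices. Concretely, I would take the question sets $\SigmaA = \integer_{|\reg{X}|}^2$ and $\SigmaB = \integer_{|\reg{Y}|}^2$ (indexing the generalized Pauli operators of Section~2), with $\pi$ uniform. The referee's register is taken large enough to hold the $\reg{X},\reg{S},\reg{Y}$ subregisters of $\rho$ together with two auxiliary registers $\reg{X}'$ and $\reg{Y}'$, whose intended honest states are maximally entangled with registers $\reg{X}_A$ held by Alice and $\reg{Y}_B$ held by Bob. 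The winning operator $V(a,b|x,y)$ combines the $G_{qc}$ operator $Q_{a,b}$ on $\reg{S}$ with generalized Bell-state projectors on $(\reg{X}',\reg{X})$ and on $(\reg{Y},\reg{Y}')$, in the form $\I$ minus a suitably normalized projector onto ``teleportation-succeeded-and-$G_{qc}$-lost'', so that $V$ is a valid POVM element and so that the prefactor $|\reg{X}||\reg{Y}|$ in the theorem appears.

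For the first inequality, given any standard quantum strategy $(\sigma,\{A_a\},\{B_b\})$ for $G_{qc}$ with $\dim(\U)=\dim(\V)=N$, I would construct the natural simulation strategy for $H_t$: Alice and Bob prepare $\sigma$ together with $\rho$ and the two maximally entangled pairs, sending the appropriate registers to the referee. On question $x$, Alice applies the generalized Pauli correction $W_x^\ast$ to $\reg{X}_A$ and then performs $\{A_a\}$ (similarly for Bob with $y$ and $W_y^\ast$). Using the teleportation identity---that $|\phi_x\rangle\langle\phi_x|$ on $(\reg{X}',\reg{X})$ combined with the correction $W_x$ transports the state on $\reg{X}$ to $\reg{X}_A$ up to a scalar $1/|\reg{X}|^2$---together with the analogous identity on the $\reg{Y}$ side, one evaluates the winning probability directly and verifies the required lower bound on $\omega^*_{N|\reg{X}||\reg{Y}|}(H_t)$.

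For the second inequality, given a standard quantum strategy for $H_t$ on spaces of dimension $N$, I would construct a strategy for $G_{qc}$ on spaces of dimension $N|\reg{X}||\reg{Y}|$ by enlarging Alice's and Bob's workspaces so they can hold simulated referee-side halves of the teleportation pairs, using the $\reg{X}$ and $\reg{Y}$ registers received from the $G_{qc}$ referee as substitutes for the corresponding subregisters on the $H_t$ referee's side, sampling $(x,y)$ internally according to $\pi$, and then executing the $H_t$ measurements. The same teleportation identities applied in reverse link the $G_{qc}$ winning probability to the $H_t$ winning probability with the desired factor $|\reg{X}||\reg{Y}|$ in the losing-probability amplification. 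The principal obstacle is the normalization of $V(a,b|x,y)$: a naive teleportation calculation produces a factor $1/(|\reg{X}|^2|\reg{Y}|^2)$ rather than the $1/(|\reg{X}||\reg{Y}|)$ needed, so $V$ must be shaped so that both inequalities hold simultaneously; moreover, the second inequality requires the conversion to work for \emph{arbitrary} $H_t$ strategies rather than only simulation-type ones, which demands a careful use of the symmetry of the maximally entangled state together with the freedom to purify Alice's and Bob's strategies.
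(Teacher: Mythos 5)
Your overall route is the same as the paper's: replace the quantum question round of $G_{qc}$ by classical generalized-Pauli indices, and build the referee's winning measurement as $\I$ minus (Bell projector) $\otimes\,(\I - Q_{a,b})\,\otimes$ (Bell projector), so that losing in $H_t$ means ``post-selected teleportation succeeded and $G_{qc}$ was lost.'' The paper merely factors this through an intermediate \emph{teleportation game} (Lemmas~\ref{lem:qcg_eq_telep} and~\ref{lem:telep-to-enlg}) rather than going from $G_{qc}$ to $H_t$ in one step, and both directions of your conversion (honest Pauli corrections one way, Bell-measurement ``self-teleportation'' plus the symmetry of the maximally entangled state the other way) match the paper's arguments.

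There is, however, a genuine gap in your construction of $H_t$: you place the registers $(\reg{X},\reg{S},\reg{Y})$ carrying $\rho$ \emph{inside the referee's register}, which in the extended nonlocal game model is prepared by Alice and Bob. Nothing in your winning operator forces those registers to actually contain $\rho$, so the players can cheat: for instance, if some $Q_{a,b}$ accepts some state $\xi\in\Density(\S)$ with certainty, they initialize $\reg{S}$ to $\xi$, always answer $(a,b)$, and make the losing operator $\phi_x\otimes(\I-Q_{a,b})\otimes\phi_y$ evaluate to zero, giving $\omega^*(H_t)=1$ independently of $\omega^*(G_{qc})$ and destroying the second inequality. The paper avoids this by having the referee adjoin $\rho$ \emph{privately} after receiving the answers; equivalently, $V(a,b|x,y)$ must be the effective POVM element on $(\reg{X}_1,\reg{Y}_1)$ alone obtained by contracting $P_{a,b,x,y} = \I - \phi_x^{(\abs{\reg{X}})}\otimes(\I-Q_{a,b})\otimes\phi_y^{(\abs{\reg{Y}})}$ against the fixed state $\rho$, so the players never touch $\rho$. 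Separately, the ``principal obstacle'' you flag is not resolved the way you suggest: multiplying the losing projector by $\abs{\reg{X}}\abs{\reg{Y}}$ would break positivity of $V$. The paper does no such reshaping; Lemma~\ref{lem:telep-to-enlg} simply delivers the exact relation $\omega_N^*(H_t) = 1 - (1-\omega_N^*(G_t))/(\abs{\reg{X}}^2\abs{\reg{Y}}^2)$ and combines it with the dimension bookkeeping of Lemma~\ref{lem:qcg_eq_telep}, so the $1/(\abs{\reg{X}}^2\abs{\reg{Y}}^2)$ factor is carried through as-is (it yields the first stated inequality with room to spare; the exponent discrepancy in the second is a bookkeeping artifact of the theorem statement, not something your $V$ needs to be engineered around).
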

The main idea for proving Theorem~\ref{thm:enlg-from-qcg} will involve a successive reduction from a quantum-classical game to an intermediate type of game, called a \emph{teleportation game} (that we will define formally in the next section), and finally to an extended nonlocal game. Sections~\ref{sec:teleportation-games-and-quantum-classical-games} and~\ref{sec:extended-nonlocal-games-and-teleportation-games} are dedicated to proving Theorem~\ref{thm:enlg-from-qcg}. Specifically, in Section~\ref{sec:teleportation-games-and-quantum-classical-games}, we will show how quantum-classical games are related to teleportation games, and in Section~\ref{sec:extended-nonlocal-games-and-teleportation-games}, we will show how teleportation games are related to extended nonlocal games. Once these relationships are established, we will be able to prove Theorem~\ref{thm:enlg-from-qcg}.

\subsection{Teleportation games and quantum-classical games} \label{sec:teleportation-games-and-quantum-classical-games}

In this section we will introduce \index{teleportation game}{\emph{teleportation games}}. A teleportation game is similar to an extended nonlocal game in that the referee receives a state prepared by Alice and Bob, the referee sends questions to Alice and Bob, and the referee receives answers from them as well. The one key difference now is that after the referee receives the state from Alice and Bob, it will produce registers and perform a Bell measurement on the parts of the state sent by Alice and Bob along with the registers that it produced. The reason we refer to this class of games as teleportation games is because the registers that the referee produces are the registers that the referee desires to teleport to Alice and Bob. A teleportation game is depicted in Figure~\ref{fig:teleportation-game}.

Formally, a \emph{teleportation game} is specified by the following objects:
\begin{itemize}
	\item A state $\rho \in \Density(\X \otimes \S \otimes \Y)$ of a triple of registers $(\reg{X},\reg{S},\reg{Y})$. 
	\item A collection of measurement operators $\{ Q_{a,b} : a \in \GammaA, \ b \in \GammaB \} \subset \Pos(\S)$, where $\GammaA$ and $\GammaB$ are alphabets and $\S$ is the space corresponding to register $\reg{S}$. 
\end{itemize}
From the referee's perspective, such a game is played as follows:
\begin{enumerate}
	\item The referee is presented with the register $\reg{R} = (\reg{X}_1,\reg{Y}_1)$ where $\reg{X}_1$ and $\reg{Y}_1$, are copies of the registers $\reg{X}$ and $\reg{Y}$. (The register $\reg{R}$ might, for instance, be entangled with systems possessed by Alice and Bob.)
	\item The referee prepares $(\reg{X},\reg{S},\reg{Y})$ in the state $\rho$ and performs Bell measurements
		\begin{align} \label{eq:bell-basis-telep-game}
			\left \{ \phi_{x}^{(\abs{\reg{X}})} : x \in \SigmaA \right \} \subset \Pos(\X \otimes \X_1) \quad \textnormal{and} \quad \left \{ \phi_{y}^{(\abs{\reg{Y}})} : y \in \SigmaB \right \} \subset \Pos(\Y \otimes \Y_1)
		\end{align}		
	 and where 
	\begin{align}
	 	\SigmaA = \integer_{\abs{\reg{X}}} \times \integer_{\abs{\reg{X}}} \quad \textnormal{and} \quad \SigmaB = \integer_{\abs{\reg{Y}}} \times \integer_{\abs{\reg{Y}}},
	 \end{align}
	 on the respective pairs $(\reg{X},\reg{X}_1)$ and $(\reg{Y},\reg{Y}_1)$ yielding outcomes $x \in \SigmaA$ and $y \in \SigmaB$ which are sent to Alice and Bob.
	\item Alice and Bob respond with $a \in \GammaA$ and $b \in \GammaB$. 
	\item The referee measures $\reg{S}$ with respect to the binary-valued measurement 
		\begin{align}
			\{Q_{a,b}, \I - Q_{a,b}\} \subset \Pos(\S).		
		\end{align} 
	The outcome corresponding to the measurement operator $Q_{a,b}$ indicates that Alice and Bob \emph{win}, while the other measurement result indicates that they \emph{lose}.  
\end{enumerate}

Just as is the case for both extended nonlocal games and quantum-classical games, one may consider various types of strategies for Alice and Bob. For the purposes of this discussion, we will be focusing on \index{quantum strategy (teleportation game)}{\emph{quantum strategies}} in which Alice and Bob begin the game in possession of finite-dimensional quantum systems that have been initialized as they choose. They may then measure these systems in order to obtain answers to the referee's questions. 

In more precise terms, a quantum strategy for a teleportation game, specified by 
	\begin{align}
		\rho \in \Density(\X \otimes \S \otimes \Y) \quad \textnormal{and} \quad \{Q_{a,b} : a \in \GammaA, \ b \in \GammaB \} \subset \Pos(\S)
	\end{align}
as above, consists of these objects:
\begin{enumerate}
	\item A state $\sigma \in \Density(\U \otimes (\X_1 \otimes \Y_1) \otimes \V)$ where $(\X_1 \otimes \Y_1)$ is the space corresponding to registers $(\reg{X}_1, \reg{Y}_1)$ presented to the referee at the start of the game, where $\U$ is the space corresponding to register $\reg{U}$ held by Alice, and where $\V$ is the space corresponding to register $\reg{V}$ held by Bob.
	\item A measurement $\{A_a^x : a \in \GammaA\} \subset \Pos(\U)$ for each $x \in \SigmaA$, performed by Alice, when she receives the question $x$, and a measurement $\{B_b^y : b \in \GammaB\} \subset \Pos(\V)$ for each $y \in \SigmaB$, performed by Bob when he receives the question $y$. 
\end{enumerate}


If $G_t$ is the name assigned to a teleportation game having the specifications as above, then we write $\omega_N^*(G_t)$ to denote the \emph{maximum} winning probability taken over all quantum strategies for which $\dim(\U \otimes \V) = N$, so that the quantum value of $G_t$ is 
\begin{align}
	\omega^*(G_t) = \lim_{N \rightarrow \infty} \omega_N^*(G_t).
\end{align}

\begin{figure}[!htpb] 
	\begin{center}
		\includegraphics[scale=1.0]{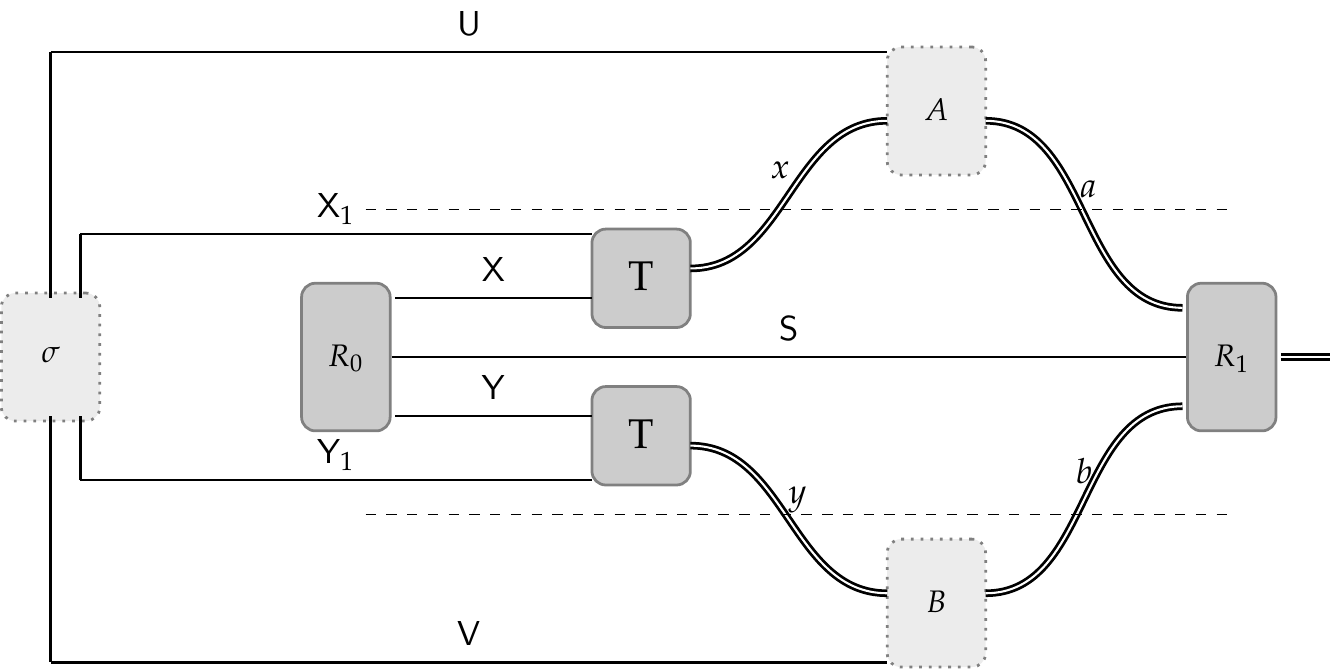}
	\end{center}
		\caption[A quantum strategy for a teleportation game.]{A quantum strategy for a teleportation game. Prior to the start of the game, the state $\sigma \in \Density(\U \otimes (\X_1 \otimes \Y_1) \otimes \V)$ is prepared. The referee obtains registers $(\reg{X}_1,\reg{Y}_1)$. The referee prepares a state $\rho \in \Density(\X \otimes \S \otimes \Y)$ contained in registers $(\reg{X},\reg{S},\reg{Y})$ and performs a generalized Bell measurement on registers $(\reg{X},\reg{X}_1)$ and $(\reg{Y},\reg{Y}_1)$. The outcomes of this measurement, $x \in \SigmaA$ is sent to Alice and $y \in \SigmaB$ is sent to Bob, who in turn respond with answers $a \in \GammaA$ and $b \in \GammaB$ to the referee. The referee then performs a measurement on the register $\reg{S}$ to determine whether Alice and Bob win or lose.}
		\label{fig:teleportation-game}
\end{figure}

\begin{lemma} \label{lem:qcg_eq_telep}
	Given any quantum-classical game, $G_{qc}$, with question registers $\reg{X}$ and $\reg{Y}$, there exists a teleportation game $G_t$ such that 
	\begin{align}
		\omega_N^*(G_{qc}) \leq \omega_{N\abs{\reg{X}}\abs{\reg{Y}}}^*(G_t) \quad \textnormal{and} \quad \omega^*_{N}(G_t) \leq \omega^*_{N\abs{\reg{X}}\abs{\reg{Y}}}(G_{qc}),
	\end{align}		
for all $N \geq 1$.
\end{lemma}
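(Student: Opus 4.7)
The two inequalities will be established by reductions in opposite directions, both built on the standard quantum teleportation protocol described in Section~\ref{sec:teleportation}. The key fact used in both cases is that a generalized Bell measurement on one half of a maximally entangled pair, followed by the Pauli correction indicated by the outcome, implements the identity channel; this lets us freely move the Bell measurement between the referee's side and the players' side whenever a suitable entangled resource is in place.

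For the first inequality, the plan is to simulate a quantum strategy for $G_{qc}$ inside a teleportation game. Starting from a $G_{qc}$ strategy with shared state $\sigma_0 \in \Density(\U_0 \otimes \V_0)$ and measurement collections $\{A_a\} \subseteq \Pos(\U_0 \otimes \X)$ and $\{B_b\} \subseteq \Pos(\V_0 \otimes \Y)$, I would have Alice adjoin a pair of $|\reg{X}|$-dimensional registers $(\reg{X}_A, \reg{X}_1)$ initialized in the canonical maximally entangled state, and have Bob analogously prepare $(\reg{Y}_A, \reg{Y}_1)$. The pair $(\reg{X}_1, \reg{Y}_1)$ is presented to the referee, while Alice retains $(\reg{U}_0, \reg{X}_A)$ and Bob retains $(\reg{V}_0, \reg{Y}_A)$. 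After the referee's Bell measurements produce outcomes $x \in \SigmaA$ and $y \in \SigmaB$, Alice applies the correction $\bigl( W_x^{(|\reg{X}|)} \bigr)^{\ast}$ to $\reg{X}_A$ and then performs $\{A_a\}$ on $(\reg{U}_0, \reg{X}_A)$, and Bob proceeds symmetrically. The teleportation identity implies that the post-correction joint state on $(\reg{U}_0, \reg{X}_A, \reg{S}, \reg{Y}_A, \reg{V}_0)$ is a relabelling of $\sigma_0 \otimes \rho$, so the referee's final measurement yields exactly the winning probability of the original $G_{qc}$ strategy. Alice's local space is enlarged by a factor of $|\reg{X}|$ and Bob's by $|\reg{Y}|$, producing the dimension blow-up $N \mapsto N|\reg{X}||\reg{Y}|$ in the subscript of $\omega^*$.

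For the second inequality, the plan is to push the Bell measurements from the referee to the players, exploiting the fact that in $G_{qc}$ the referee directly hands $\reg{X}$ to Alice and $\reg{Y}$ to Bob. Starting from a $G_t$ strategy with state $\sigma \in \Density(\U_t \otimes (\X_1 \otimes \Y_1) \otimes \V_t)$ and measurements $\{A_a^x\}$, $\{B_b^y\}$, I would have Alice and Bob retain the registers $(\reg{X}_1, \reg{Y}_1)$ rather than sending them to the referee. Upon receiving $\reg{X}$ in $G_{qc}$, Alice performs the generalized Bell measurement from equation~\eqref{eq:bell-basis-telep-game} on $(\reg{X}, \reg{X}_1)$ locally, obtains outcome $x$, and then performs $\{A_a^x\}$ on $\reg{U}_t$; Bob proceeds analogously with $(\reg{Y}, \reg{Y}_1)$ and $\{B_b^y\}$. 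Because the players' local Bell measurements commute with the referee's measurement on $\reg{S}$, the joint distribution over $(x, y, a, b)$ and the post-measurement statistics of $\reg{S}$ coincide with those of $G_t$, preserving the winning probability. The dimension overhead is again at most $|\reg{X}||\reg{Y}|$.

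The essential technical ingredient in both directions is the teleportation identity recalled in Section~\ref{sec:teleportation}, and I do not anticipate any serious obstacle beyond careful bookkeeping of which register lives with which party. The mildly annoying point will be aligning the per-side dimension convention underlying $\omega_N^*(G_{qc})$ with the joint dimension convention underlying $\omega_N^*(G_t)$, which can be handled by padding one of the players' local spaces with trivial ancillas so that the two local dimensions agree, contributing at most the stated factor of $|\reg{X}||\reg{Y}|$.
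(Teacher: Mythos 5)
Your proposal is correct and follows essentially the same route as the paper: the first inequality is proved by having the players supply halves of maximally entangled pairs and apply the Pauli corrections so that the referee's Bell measurements teleport $\reg{X}$ and $\reg{Y}$ to them, and the second by having the players retain $\reg{X}_1,\reg{Y}_1$ and perform the Bell measurements themselves as the first stage of a two-step measurement, which commutes with the referee's measurement on $\reg{S}$. The dimension-convention mismatch you flag at the end is real (the paper glosses over it too), and your padding fix is the right way to handle it.
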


Prior to proceeding to the proof, we give a brief sketch to provide some intuition. In order to prove Lemma~\ref{lem:qcg_eq_telep}, we must prove both that $\omega_N^*(G_{qc}) \leq \omega_{N \abs{\reg{X}} \abs{\reg{Y}}}^*(G_t)$ and that $\omega_{N}^*(G_t) \leq \omega_{N\abs{\reg{X}}\abs{\reg{Y}}}^*(G_{qc})$. 

In the first inequality, we assume that Alice and Bob play honestly. That is to say, we assume that Alice and Bob play along and allow the referee to teleport registers to Alice and Bob. For this to happen, the initial state is prepared as a maximally entangled state and Alice and Bob also apply the appropriate Pauli teleportation corrections on their respective systems after they receive the questions from the referee. This direction of the proof is simply illustrating how such a strategy is carried out when Alice and Bob play honestly and is depicted in Figure~\ref{fig:teleportation-game-strategy}. 

In the second inequality, we remove the assumption that Alice and Bob play honestly. That is to say that we are not guaranteed that Alice and Bob prepare maximally entangled states, nor are we to assume that the registers they possess are not entangled in some arbitrarily complex manner. In other words, we are concerned now with the possibility that Alice and Bob may attempt to cheat, and play dishonestly. The general idea of this direction is that Alice and Bob will perform what may be thought of a teleportation protocol to themselves. That is, after Alice and Bob perform measurements in the Bell basis on their registers, they will use the outcome of these measurements to apply the appropriate generalized Pauli correction operator to their systems. 

\begin{proof}
	Let $G_t$ be the teleportation game that is defined in terms of the same state and measurement operators
	\begin{align}
		\rho \in \Density(\X \otimes \S \otimes \Y) \quad \textnormal{and} \quad \{ Q_{a,b} : a \in \GammaA \ b \in \GammaB \} \subset \Pos(\S)
	\end{align}	
that also define $G_{qc}$.  

	Let us first show that $\omega_N^*(G_{qc}) \leq \omega_{N\abs{\reg{X}}\abs{\reg{Y}}}^*(G_t)$. Consider an arbitrary strategy for any quantum-classical game $G_{qc}$. We show how one may adapt this strategy into a strategy for the teleportation game $G_t$. The following strategy is depicted in Figure~\ref{fig:teleportation-game-strategy}
	
	The state $\sigma$ is prepared in the following manner 
	\begin{align}
		\sigma \in \Density( (\U_1 \otimes \X_0) \otimes (\X_1 \otimes \Y_1) \otimes (\Y_0 \otimes \V_1))
	\end{align}		
in registers $(\reg{U}_1, \reg{X}_0, \reg{X}_1,\reg{Y}_1, \reg{Y}_0, \reg{V}_1)$ such that
	\begin{align}
		 \abs{\reg{X}_0} = \abs{\reg{X}} = \abs{\reg{X}_1} \quad \textnormal{and} \quad  \abs{\reg{Y}_0} = \abs{\reg{Y}} = \abs{\reg{Y}_1},	
	\end{align}		
where the contents of $(\reg{X}_0,\reg{X}_1)$ and $(\reg{Y}_0,\reg{Y}_1)$ are respective maximally entangled states 
\begin{align}
	\psi_{\reg{X}} = \frac{1}{\sqrt{\abs{\reg{X}}}} \sum_{c \in \integer_{\abs{\reg{X}}}} e_c \otimes e_c \quad \textnormal{and} \quad \psi_{\reg{Y}} = \frac{1}{\sqrt{\abs{\reg{Y}}}} \sum_{d \in \integer_{\abs{\reg{Y}}}} e_d \otimes e_d.
\end{align}
When the referee receives $\reg{X}_1$ and $\reg{Y}_1$, it prepares the quantum state $\rho \in \Density(\X \otimes \S \otimes \Y)$ contained in registers $(\reg{X},\reg{S},\reg{Y})$.

The referee then measures each pair $(\reg{X},\reg{X}_1)$ and $(\reg{Y},\reg{Y}_1)$ with respect to the Bell basis as from equation~\eqref{eq:bell-basis-telep-game} and obtains outcomes $x$ and $y$, where
\begin{align}
	x = (k_1,k_2) \in \SigmaA \quad \textnormal{and} \quad y = (l_1,l_2) \in \SigmaB,
\end{align}
which are then sent to Alice and Bob. 
Alice and Bob then apply one of the generalized Pauli operators 
\begin{align} \label{eq:gen-pauli-telep}
	\left \{ W_{k_1,k_2}^{(\abs{\reg{X}})} : (k_1, k_2) \in \SigmaA \right \} \quad \textnormal{and} \quad \left \{ W_{l_1, l_2}^{(\abs{\reg{Y}})} : (l_1, l_2) \in \SigmaB \right \},
\end{align}
to their registers $\reg{X}_0$ and $\reg{Y}_0$. This completes the teleportation protocol, and teleports the register $\reg{X}$ to Alice and $\reg{Y}$ to Bob. Finally, Alice and Bob respond with $a \in \GammaA$ and $b \in \GammaB$ by performing measurements from the sets 
\begin{align}
	\{ A_a^x : a \in \GammaA \} \subset \Pos(\U_1 \otimes \X_0) \quad \textnormal{and} \quad \{B_b^y : b \in \GammaB \} \subset \Pos(\V_1 \otimes \Y_0), 
\end{align}
for each $x \in \SigmaA$ and $y \in \SigmaB$. The referee then performs a measurement from the set 
\begin{align}
	\{Q_{a,b}, \I - Q_{a,b} \} \subset \Pos(\S). 
\end{align}
Since Alice and Bob receive registers $\reg{X}$ and $\reg{Y}$ by the teleportation protocol, it is clear that they win with at least the same probability as in $G_{qc}$. It follows that $\omega_N^*(G_{qc}) \leq \omega_{N\abs{\reg{X}}\abs{\reg{Y}}}^*(G_t)$. 

\begin{figure}[!htpb] 
	\begin{center}
		\includegraphics[scale=1.0]{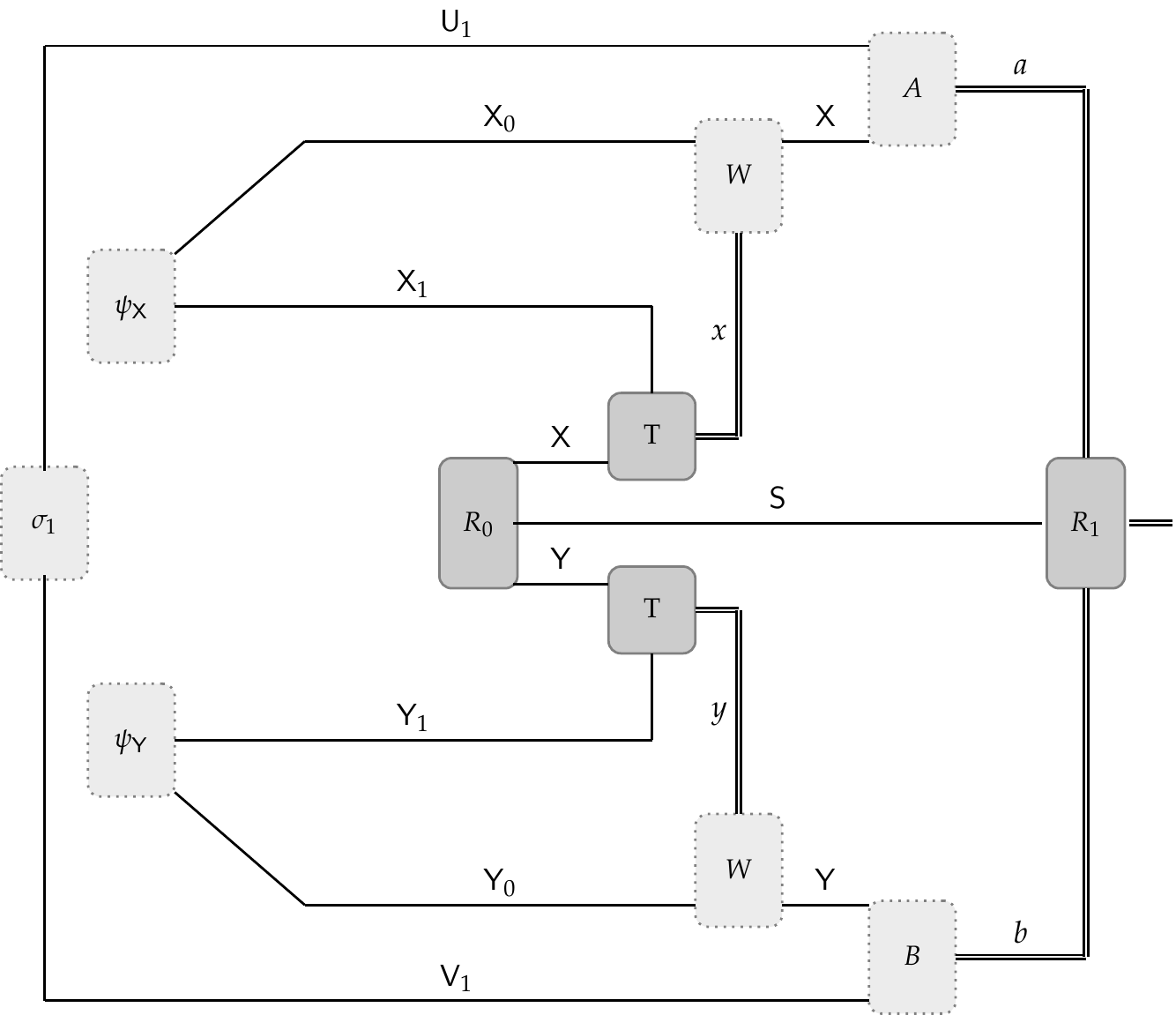}
	\end{center}
		\caption[A teleportation game strategy.]{The strategy that Alice and Bob abide by for a teleportation game when they play honestly. Alice and Bob prepare registers $( \reg{U}_1, \reg{X}_0, \reg{X}_1 )$ and $( \reg{V}_1, \reg{Y}_0, \reg{Y}_1 )$ where register pairs $(\reg{X}_0, \reg{X}_1)$ and $(\reg{Y}_0, \reg{Y}_1)$ consist of pairs of maximally entangled states. The referee receives registers $(\reg{X}_1,\reg{Y}_1)$ and prepares registers $(\reg{X},\reg{S},\reg{Y})$ and performs a measurement in the Bell basis on register pairs $(\reg{X},\reg{X}_1)$ and $(\reg{Y},\reg{Y}_1)$ in order to teleport $\reg{X}$ to Alice and $\reg{Y}$ to Bob. The outcome of these measurements result in $(x,y)$ where $x$ is sent to Alice and $y$ is sent to Bob. Alice and Bob then apply the appropriate Pauli corrections on their registers $\reg{X}_0$ and $\reg{Y}_0$, which teleports the registers $\reg{X}$ and $\reg{Y}$ into their possession. Finally, Alice and Bob respond with answers $a$ and $b$ to the referee, which is followed by the referee performing a measurement $\{Q_{a,b}, \I - Q_{a,b} \} \subset \Pos(\S)$.}
		\label{fig:teleportation-game-strategy}
\end{figure}

Now we show that $\omega_{N}^*(G_t) \leq \omega_{N\abs{\reg{X}}\abs{\reg{Y}}}^*(G_{qc})$. Consider an arbitrary strategy for the teleportation game $G_t$ from above. We show how one may adapt this strategy into a strategy for a quantum-classical game $G_{qc}$. 

Let the referee prepare a quantum state $\rho \in \Density(\X \otimes \S \otimes \Y)$ contained in registers $(\reg{X},\reg{S},\reg{Y})$, and let 
\begin{align}
	\sigma \in \Density( (\U \otimes \X_1) \otimes (\Y_1 \otimes \V) )
\end{align}
be the state shared between Alice, Bob, and the referee contained in registers $(\reg{U},\reg{X}_1,\reg{Y}_1,\reg{V})$. The registers $\reg{X}$ and $\reg{Y}$ are sent to Alice and Bob respectively. Once Alice and Bob receive $\reg{X}$ and $\reg{Y}$, they prepare a two step measurement:

\begin{enumerate}
	\item Alice and Bob measure $(\reg{X},\reg{X}_1)$ and $(\reg{Y},\reg{Y}_1)$ in the Bell basis yielding measurement outcomes 
		\begin{align}
			x \in \SigmaA \quad \textnormal{and} \quad y \in \SigmaB.
		\end{align}
	\item Alice and Bob perform measurements 
		\begin{align}
			\{ A_a^x : a \in \GammaA \} \subset \Pos(\U) \quad \textnormal{and} \quad \{ B_b^y : b \in \GammaB \} \subset \Pos(\V)
		\end{align}
		and obtain respective outcomes $a \in \GammaA$ and $b \in \GammaB$. 
\end{enumerate}
The two-step measurement operators corresponding to outcomes $a$ and $b$ are written as 
\begin{align}
	\sum_{x \in \SigmaA} A_a^x \otimes \phi_x^{\abs{\reg{X}}} \in \Pos(\U \otimes \X_1 \otimes \X) \quad \textnormal{and} \quad \sum_{y \in \SigmaB} B_b^y \otimes \phi_y^{\abs{\reg{Y}}} \in \Pos(\V \otimes \Y_1 \otimes \Y),
\end{align}
where $\phi_x^{\abs{\reg{X}}}$ and $\phi_y^{\abs{\reg{Y}}}$ are Bell measurements. 
Finally, the referee performs a measurement from the set 
\begin{align}
	\{ Q_{a,b}, \I - Q_{a,b} \} \subset \Pos(\S).
\end{align}
It is evident from the above procedure that the information stored in $x \in \SigmaA$ and $y \in \SigmaB$ is precisely what the referee would have sent in $G_t$. Furthermore, the cost of this procedure is given by the dimension of the state $\sigma$, that is
\begin{align}
	N \abs{\reg{X}} \abs{\reg{Y}} = \dim(\U \otimes (\X_1 \otimes \Y_1) \otimes \V).
\end{align}
It then follows that $\omega_{N\abs{\reg{X}}\abs{\reg{Y}}}^*(G_t) \leq \omega_N^*(G_{qc})$.
\end{proof}

\subsection{Extended nonlocal games and teleportation games} \label{sec:extended-nonlocal-games-and-teleportation-games}

In the previous section, we established a relationship between certain quantum-classical games and teleportation games. Building on this, we now show how teleportation games and certain extended nonlocal games are related. Once we have this chain of relationships, we will be able to prove Theorem~\ref{thm:enlg-from-qcg}. The following lemma establishes a relationship between teleportation games and extended nonlocal games. 

\begin{lemma} \label{lem:telep-to-enlg}
	Given any teleportation game, $G_t$, with teleported registers $\reg{X}$ and $\reg{Y}$, there exists an extended nonlocal game, $H_t$, such that 
	\begin{align}
		\omega_N^*(H_t) = 1 - \frac{ 1 - \omega_{N}^*(G_t)  }{ \abs{\reg{X}}^2 \abs{\reg{Y}}^2 },
	\end{align}
	for all $N$.
\end{lemma}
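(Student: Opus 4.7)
The plan is to construct $H_t = (\pi, V)$ so that any triple $(\sigma, \{A_a^x\}, \{B_b^y\})$ with $\sigma \in \Density(\U \otimes \X_1 \otimes \Y_1 \otimes \V)$ and $\dim(\U \otimes \V) = N$ is simultaneously a standard quantum strategy for both $G_t$ and $H_t$, so that the two winning probabilities are related by an exact affine map. Concretely, take the referee's space to be $\R = \X_1 \otimes \Y_1$, take question sets $\SigmaA = \integer_{\abs{\reg{X}}} \times \integer_{\abs{\reg{X}}}$ and $\SigmaB = \integer_{\abs{\reg{Y}}} \times \integer_{\abs{\reg{Y}}}$ indexing the Bell bases of~\eqref{eq:bell-basis-telep-game}, let $\pi$ be uniform on $\SigmaA \times \SigmaB$, and set
\begin{equation*}
V(a,b\,|\,x,y) = \I_{\reg{R}} - \tr_{\X \otimes \S \otimes \Y}\bigl[\bigl(\phi_x^{(\abs{\reg{X}})} \otimes (\I_\S - Q_{a,b}) \otimes \phi_y^{(\abs{\reg{Y}})}\bigr)\bigl(\I_{\X_1 \otimes \Y_1} \otimes \rho\bigr)\bigr].
\end{equation*}
The guiding intuition is that $N_L(a,b,x,y) := \I_{\reg{R}} - V(a,b\,|\,x,y)$ encodes, as a POVM element on $\reg{R}$, the conditional probability that the referee of $G_t$ observes ``lose'' on outcome $(x,y,a,b)$, as a function of the state Alice and Bob deposit on $\reg{R}$.

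The first step is to check that $V(a,b\,|\,x,y) \in \Pos(\R)$ with $V \leq \I_{\reg{R}}$. For any $\eta \in \Density(\R)$, the scalar $\bigip{N_L(a,b,x,y)}{\eta} = \tr\bigl[\bigl(\phi_x^{(\abs{\reg{X}})} \otimes (\I_\S - Q_{a,b}) \otimes \phi_y^{(\abs{\reg{Y}})}\bigr)(\eta \otimes \rho)\bigr]$ is the probability of a genuine physical event (namely, the referee in $G_t$ producing Bell outcome $(x,y)$ and ``lose'' from $\I_\S - Q_{a,b}$ on $\reg{S}$) when Alice and Bob's contribution to the referee is in state $\eta$, so it lies in $[0,1]$. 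This forces $0 \leq N_L \leq \I_{\reg{R}}$ as operators, hence $V$ is a valid POVM element. This operational argument is the main subtlety, since the partial trace of a product of positive operators need not be positive on algebraic grounds alone.

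The second step is the main calculation. For any strategy common to the two games, using $\sum_{a,b} A_a^x \otimes B_b^y = \I_\U \otimes \I_\V$, $\sum_{x,y} \pi(x,y) = 1$, and the definition of $N_L$,
\begin{align*}
1 - \omega_{H_t}[\text{strategy}] &= \sum_{x,y,a,b} \pi(x,y) \bigip{A_a^x \otimes N_L(a,b,x,y) \otimes B_b^y}{\sigma} \\
&= \frac{1}{\abs{\reg{X}}^2 \abs{\reg{Y}}^2} \sum_{x,y,a,b} \tr\bigl[\bigl(A_a^x \otimes \phi_x^{(\abs{\reg{X}})} \otimes (\I_\S - Q_{a,b}) \otimes \phi_y^{(\abs{\reg{Y}})} \otimes B_b^y\bigr)(\sigma \otimes \rho)\bigr] \\
&= \frac{1 - \omega_{G_t}[\text{strategy}]}{\abs{\reg{X}}^2 \abs{\reg{Y}}^2},
\end{align*}
where the middle line absorbs the partial trace defining $N_L$ into the full trace against $\sigma \otimes \rho$, and the last line uses that the sum equals the total probability of the referee observing any ``lose'' outcome in $G_t$, which equals $1 - \omega_{G_t}[\text{strategy}]$ because the win and lose events partition the joint measurement space of the referee together with Alice and Bob's responses.

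The third step is routine: the map $t \mapsto 1 - (1-t)/(\abs{\reg{X}}^2 \abs{\reg{Y}}^2)$ is strictly increasing, and the two games share the identical strategy set under the dimensional constraint $\dim(\U \otimes \V) = N$, so taking suprema on both sides preserves the identity and yields the claimed equality. The denominator $\abs{\reg{X}}^2 \abs{\reg{Y}}^2$ arises as $\abs{\SigmaA \times \SigmaB}$, the number of Bell-measurement outcome pairs averaged over by the uniform distribution $\pi$; heuristically, only a fraction $1/(\abs{\reg{X}}^2 \abs{\reg{Y}}^2)$ of the question pairs in $H_t$ does nontrivial work, the remainder contributing ``for free'' through the $\I_{\reg{R}}$ term in $V$.
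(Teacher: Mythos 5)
Your proposal is correct and follows essentially the same route as the paper: the game $H_t$ you construct is the paper's game (uniform questions indexed by Bell outcomes, losing effect $\phi_x^{(\abs{\reg{X}})} \otimes (\I - Q_{a,b}) \otimes \phi_y^{(\abs{\reg{Y}})}$ applied after adjoining $\rho$), and the heart of both arguments is the same exact identity $1 - q = (1 - p)/(\abs{\reg{X}}^2 \abs{\reg{Y}}^2)$ between the losing probabilities of a common strategy, followed by taking suprema under the monotone affine map. The only difference is presentational: you compress the referee's ``prepare $\rho$, Bell-measure, then measure $Q_{a,b}$'' procedure into an explicit POVM element on $\R = \X_1 \otimes \Y_1$ and verify $0 \leq V \leq \I$ directly, which the paper leaves implicit in its operational description.
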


In order to prove Lemma~\ref{lem:telep-to-enlg}, as was done previously in the proof of Lemma~\ref{lem:qcg_eq_telep}, we assume that $G_t$ is defined by 
\begin{align}
	\rho \in \Density(\X \otimes \S \otimes \Y) \quad \textnormal{and} \quad \{Q_{a,b} : a \in \GammaA, \ b \in \GammaB \} \subset \Pos(\S).
\end{align} 
We shall also define a specific extended nonlocal game, $H_t$, that consists of a teleportation procedure. From the referee's perspective, such a game is played as follows:
\begin{enumerate}
	\item Alice and Bob present the referee with the register $\reg{R} = (\reg{X}_1, \reg{Y}_1)$ such that 
		\begin{align}
			\abs{\reg{X}_1} = \abs{\reg{X}} \quad \textnormal{and} \quad \abs{\reg{Y}_1} = \abs{\reg{Y}}.
		\end{align}
	Note that the register $\reg{R}$ may be entangled with systems possessed by Alice and Bob, as is the case for ordinary extended nonlocal games. 
	
	\item The referee randomly generates a pair $(x,y) \in \SigmaA \times \SigmaB$ where
	\begin{align}
		\SigmaA = \integer_{\abs{\reg{X}}} \times \integer_{\abs{\reg{X}}} \quad \textnormal{and} \quad \SigmaB = \integer_{\abs{\reg{Y}}} \times \integer_{\abs{\reg{Y}}}	
	\end{align}		
according to the uniform distribution and sends $x \in \SigmaA$ to Alice and $y \in \SigmaB$ to Bob. Alice responds with $a \in \GammaA$ and Bob responds with $b \in \GammaB$. 
	
	\item The referee prepares a state $\rho \in \Density(\X \otimes \S \otimes \Y)$ and then performs a measurement with respect to the binary-valued measurement $\{ P_{a,b,x,y}, \I - P_{a,b,x,y} \}$ where
\begin{equation} \label{eq:ref-meas-telep-enlg}
	\begin{aligned}
		P_{a,b,x,y} &= \I - \phi^{(\abs{\reg{X}})}_{x} \otimes \left( \I - Q_{a,b} \right) \otimes \phi^{(\abs{\reg{Y}})}_y, \\
		\I - P_{a,b,x,y} &= \phi^{(\abs{\reg{X}})}_{x} \otimes \left( \I - Q_{a,b} \right) \otimes \phi^{(\abs{\reg{Y}})}_y, \\
	\end{aligned}	
\end{equation}	 
where $\{Q_{a,b}, \I - Q_{a,b} \} \subset \Pos(\S)$. The outcome corresponding to the measurement $P_{a,b,x,y}$ indicates that Alice and Bob win, while the other measurement indicates that they lose. 
\end{enumerate}
As further intuition for the above protocol, we shall see that the last step may be thought of as a form of \index{post-selected teleportation}{\emph{post-selcted teleportation}} where the randomly selected questions $x$ and $y$ are compared to $x_1$ and $y_1$ which are hypothetical measurement results that would be obtained if the referee were to perform teleportation. Implicit in the winning and losing measurements is the relationship between $(x,y)$ and $(x_1,y_1)$ that
\begin{enumerate}
		\item \emph{If $x \not= x_1$ or $y \not= y_1$}: The referee immediately accepts, and therefore Alice and Bob win. 
		\item \emph{If $x = x_1$ and $y = y_1$}: The referee performs a measurement with respect to the binary-valued measurement $\{Q_{a,b}, \I - Q_{a,b} \}$ on register $\reg{S}$.
	\end{enumerate}
That is, in the event where $x \not= x_1$ or $y \not= y_1$, this corresponds to a failure to teleport $\reg{X}$ or $\reg{Y}$ to Alice or Bob. Likewise, the event where $x = x_1$ and $y = y_1$ corresponds to the event where teleportation protocol would have succeeded, since if the referee \emph{were} to teleport, it would have sent $x_1$ and $y_1$ to Alice and Bob, which would influence the measurement that they would apply to their system. Since in this case $x = x_1$ and $y = y_1$ it is \emph{as if} the referee were to teleport $\reg{X}$ to Alice and $\reg{Y}$ to Bob.

\begin{proof}[Proof of Lemma~\ref{lem:telep-to-enlg}]
Let $H_t$ be the extended nonlocal game as introduced above, and let it be defined in terms of the same state and measurement operators
	\begin{align}
		\rho \in \Density(\X \otimes \S \otimes \Y) \quad \textnormal{and} \quad \{Q_{a,b} : a \in \GammaA \ b \in \GammaB\} \subset \Pos(\S)
	\end{align}
that also define $G_{t}$ such that the measurement operators $\{P_{a,b,x,y}, \I - P_{a,b,x,y}\}$ in $H_t$ are defined in terms of $\{Q_{a,b}, \I - Q_{a,b} \}$ as in equation~\eqref{eq:ref-meas-telep-enlg}. 

Note that in both games $G_t$ and $H_t$, Alice and Bob's strategy is defined by the state
\begin{align}
	\sigma \in \Density(\U \otimes (\X_1 \otimes \Y_1) \otimes \V), \end{align}
as well as their respective measurement operators 
\begin{align}
	\{A_a^x : a \in \GammaA \} \subset \Pos(\U) \quad \textnormal{and} \quad  \{B_b^y : b \in \GammaB \} \subset \Pos(\V).
\end{align}
Let $p$ denote the winning probability for Alice and Bob in $G_t$
\begin{align}
	p = \sum_{\substack{(x,y) \in \SigmaA \times \SigmaB \\ (a,b) \in \GammaA \times \GammaB}} \biggip{ A_a^x \otimes \phi_x^{\abs{\reg{X}}} \otimes Q_{a,b} \otimes \phi_y^{\abs{\reg{Y}}} \otimes B_b^y }{ W (\rho \otimes \sigma) W^* },
\end{align}
where $W$ is the unitary operator that corresponds to the permutation of registers 
\begin{align}
	(\reg{X},\reg{S},\reg{Y},\reg{U},\reg{X}_1,\reg{Y}_1,\reg{V}) \mapsto (\reg{U},\reg{X},\reg{X}_1,\reg{S},\reg{Y},\reg{Y}_1,\reg{V}).
\end{align}
The losing probability for $G_t$ may be derived from the above equation by considering the losing measurement, that is
\begin{align} \label{eq:telep-losing-prob}
	1 - p = \sum_{\substack{(x,y) \in \SigmaA \times \SigmaB \\ (a,b) \in \GammaA \times \GammaB}} \biggip{ A_a^x \otimes \phi_x^{\abs{\reg{X}}} \otimes (\I - Q_{a,b}) \otimes \phi_y^{\abs{\reg{Y}}} \otimes B_b^y }{ W (\rho \otimes \sigma) W^* }.
\end{align}
We will show how the losing probability of $H_t$ may be written in terms of the losing probability of $G_t$. 

Consider an arbitrary strategy for any teleportation game $G_t$. We show how one may adapt this strategy into a strategy for the extended nonlocal game $H_t$. 

Let $(\reg{X}_0, \reg{X}_1)$ and $(\reg{Y}_0,\reg{Y}_1)$ be quantum registers such that
\begin{align}
	 \abs{\reg{X}_0} = \abs{\reg{X}} = \abs{\reg{X}_1} \quad \textnormal{and} \quad  \abs{\reg{Y}_0} = \abs{\reg{Y}} = \abs{\reg{Y}_1},
\end{align}
where the contents of $(\reg{X}_0,\reg{X}_1)$ and $(\reg{Y}_0,\reg{Y}_1)$ are respective maximally entangled states 
\begin{align}
	\psi_{\reg{X}} = \frac{1}{\sqrt{\abs{\reg{X}}}} \sum_{c \in \integer_{\abs{\reg{X}}}} e_c \otimes e_c \quad \textnormal{and} \quad \psi_{\reg{Y}} = \frac{1}{\sqrt{\abs{\reg{Y}}}} \sum_{d \in \integer_{\abs{\reg{Y}}}} e_d \otimes e_d. 
\end{align}
Registers $\reg{X}_1$ and $\reg{Y}_1$ are sent to the referee. 

The referee then chooses $(x,y) \in \SigmaA \times \SigmaB$ at random, according to the uniform probability distribution. The referee makes a local copy of $x$ and $y$ as usual, and then sends $x$ to Alice and $y$ to Bob. Alice and Bob then perform measurements from the sets 
\begin{align}
	\{ A_a^x : a \in \GammaA \} \subset \Pos(\U) \quad \textnormal{and} \quad \{B_b^y : b \in \GammaB \} \subset \Pos(\V),
\end{align}
for each $x \in \SigmaA$ and $y \in \SigmaB$ yielding outcomes $a \in \GammaA$ and $b \in \GammaB$, which are then sent to the referee. 

\begin{figure}[!htpb] 
	\begin{center}
		\includegraphics[scale=1.0]{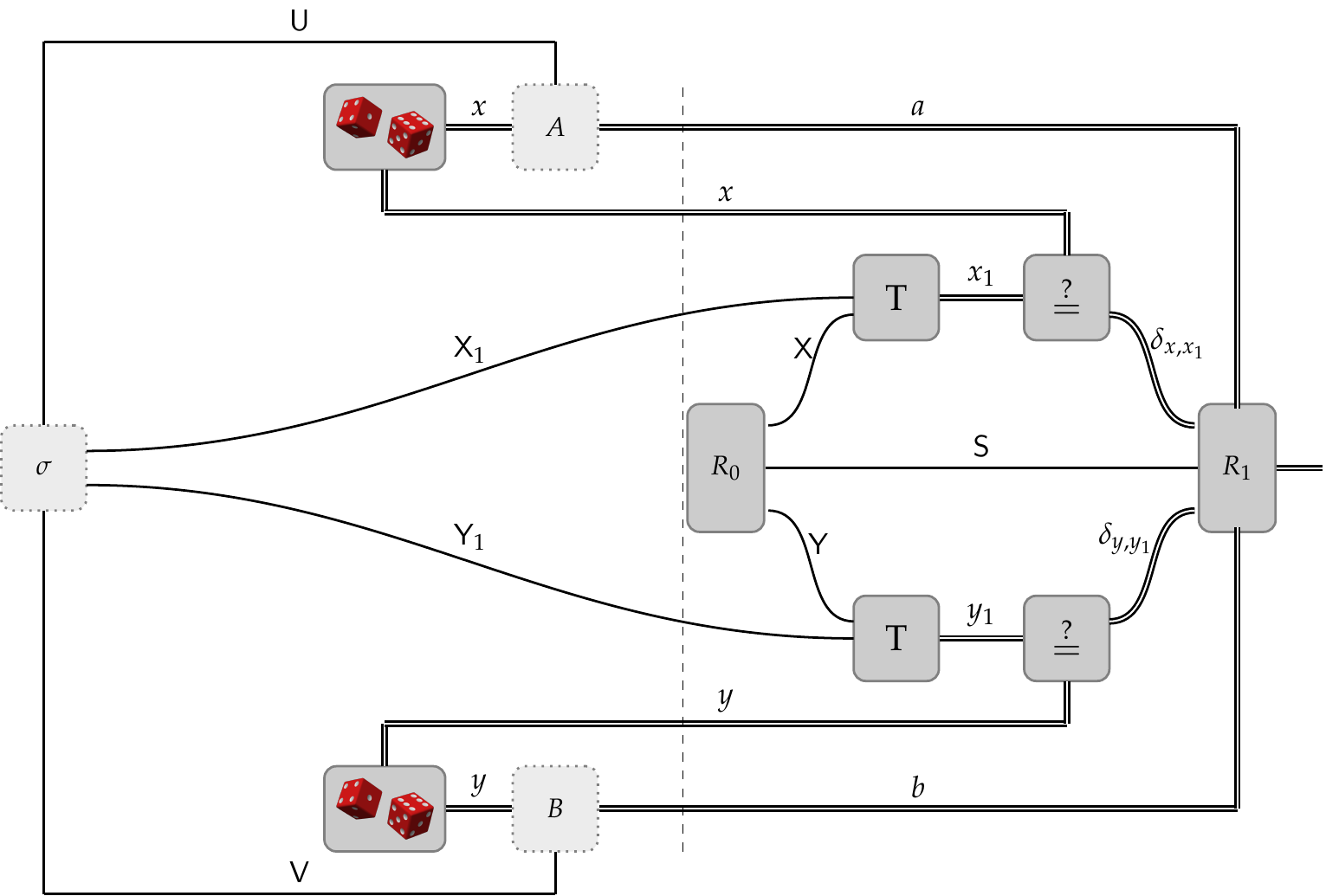}
	\end{center}
		\caption[The extended nonlocal game $H_t$.]{The extended nonlocal game $H_t$, an extended nonlocal game where the referee initiates a teleportation procedure. That is, once the referee sends questions $(x,y) \in \SigmaA \times \SigmaB$ to Alice and Bob and receives registers $(\reg{X}_1,\reg{Y}_1)$, it shall create a state $\rho \in \Density(\X \otimes \S \otimes \Y)$ in registers $(\reg{X},\reg{S},\reg{Y})$ and perform teleportation using $(\reg{X},\reg{X}_1)$ and $(\reg{Y},\reg{Y}_1)$. The outcome of these teleportation procedures will yield $x_1$ and $y_1$. If both $x_1 = x$ and $y_1 = y$, the referee will perform a measurement on his system, $\reg{S}$, to determine the outcome of the game. Otherwise if $x_1 \not= x$ or $y_1 \not= y$, Alice and Bob automatically win. The dark gray shapes correspond to actions performed by the referee.}
		\label{fig:G2}
\end{figure}

Once the referee receives $a \in \GammaA$ and $b \in \GammaB$, it prepares a state $\rho \in \Density(\X \otimes \S \otimes \Y)$ in registers $(\reg{X},\reg{S},\reg{Y})$ such that 
\begin{align}
	\abs{\reg{X}} = \abs{\reg{X}_1} \quad \textnormal{and} \quad \abs{\reg{Y}} = \abs{\reg{Y}_1}.
\end{align}
The referee now measures $(\reg{X},\reg{X}_1)$ and $(\reg{Y},\reg{Y}_1)$ in the Bell basis, which yields respective outcomes of 
\begin{align}
	x_1 = (k_1,k_2) \in \SigmaA \quad \textnormal{and} \quad y_1 = (l_1,l_2) \in \SigmaB.
\end{align}
We now consider the post-selected teleportation protocol to be a success if $x = x_1$ and $y = y_1$. For each $x_1$ there is a $1/\abs{\reg{X}}^2$ chance that the referee obtains a matching outcome in $x$. Similarly, for each $y_1$, there is a $1/\abs{\reg{Y}}^2$ chance that the referee obtains a matching outcome for $y$. Therefore, the total probability with which the post-selected teleportation protocol is performed successfully is with probability $1/\abs{\reg{X}}^2\abs{\reg{Y}}^2$. 

Depending on the outcome of the referee's measurement, the game proceeds accordingly:
\begin{enumerate}
	\item If $x \not= x_1$ or $y \not= y_1$, this implies that at least of of the two teleportation protocols has failed. In this case, the referee accepts, and Alice and Bob win the game. 
	\item If $x = x_1$ and $y = y_1$, this implies that both teleportation protocols are successful.
\end{enumerate}
The referee proceeds to perform the measurement $\{P_{a,b,x,y}, \I - P_{a,b,x,y} \}$ defined as in equation~\eqref{eq:ref-meas-telep-enlg}. Let $q$ denote the winning probability of $H_t$
\begin{align}
	q = \frac{1}{\abs{\reg{X}}^2\abs{\reg{Y}}^2} \sum_{\substack{(x,y) \in \SigmaA \times \SigmaB \\ (a,b) \in \GammaA \times \GammaB}} \biggip{A_a^x \otimes P_{x,y,a,b} \otimes B_b^y}{W (\rho \otimes \sigma) W^*},
\end{align}
where again $W$ is the unitary operator that corresponds to the permutation 
\begin{align}
	(\reg{X},\reg{S},\reg{Y},\reg{U},\reg{X}_1,\reg{Y}_1,\reg{V}) \mapsto (\reg{U},\reg{X},\reg{X}_1,\reg{S},\reg{Y},\reg{Y}_1,\reg{V}).
\end{align}
We may write the losing probability of $H_t$ as 
\begin{equation}
	\begin{aligned}
		1-q &= \frac{1}{\abs{\reg{X}}^2\abs{\reg{Y}}^2} \sum_{\substack{(x,y) \in \SigmaA \times \SigmaB \\ (a,b) \in \GammaA \times \GammaB}} \biggip{A_a^x \otimes (\I - P_{x,y,a,b}) \otimes B_b^y}{W (\rho \otimes \sigma) W^*} \\
		&= \frac{1}{\abs{\reg{X}}^2\abs{\reg{Y}}^2} \sum_{\substack{(x,y) \in \SigmaA \times \SigmaB \\ (a,b) \in \GammaA \times \GammaB}} \biggip{A_a^x \otimes \phi_x^{\abs{\reg{X}}} \otimes (\I - Q_{a,b}) \otimes \phi_y^{\abs{\reg{Y}}} \otimes B_b^y}{W (\rho \otimes \sigma) W^*}  \\
		&= \frac{1}{\abs{\reg{X}}^2\abs{\reg{Y}}^2} (1-p).
	\end{aligned}
\end{equation}
Since in both cases we have that $N = \dim(\U \otimes \V)$, optimizing over strategies of cost $N$ gives 
\begin{align}
	\omega_N^*(H_t) = 1 - \frac{ 1 - \omega_N^*(G_t) }{\abs{\reg{X}}^2 \abs{\reg{Y}}^2},
\end{align}
which concludes the proof.

\end{proof}

\subsubsection*{Proof of Theorem~\ref{thm:enlg-from-qcg}}

\begin{proof}[Proof of Theorem~\ref{thm:enlg-from-qcg}]
	Recall from Lemma~\ref{lem:qcg_eq_telep} we have that 
	\begin{align}
		\omega_N^*(G_{qc}) \leq \omega_{N\abs{\reg{X}}\abs{\reg{Y}}}^*(G_t) \quad \textnormal{and} \quad \omega^*_{N}(G_t) \leq \omega^*_{N\abs{\reg{X}}\abs{\reg{Y}}}(G_{qc}),
	\end{align}		
	for all $N \geq 1$. From Lemma~\ref{lem:telep-to-enlg} it follows that 
	\begin{align}
		1 - \frac{1 - \omega_{N}^*(G_t)}{\abs{\reg{X}}^2\abs{\reg{Y}}^2} = \omega_N^*(H_t).
	\end{align}
	It then follows that the inequalities from equation~\eqref{eq:enlg-from-qcg} hold.
	Furthermore, it follows from Theorem~\ref{thm:regev-vidick-qcg} that there exists a quantum-classical game $G_{qc}$ where $\omega_N^*(G_{qc}) = 1$ is achieved in the limit as $N$ goes to infinity. It then follows that there also exists an extended nonlocal game $H_t$, where $\omega_N^*(H_t) = 1$ as $N$ approaches infinity. 
\end{proof}

\subsubsection*{Implications and discussion of Theorem~\ref{thm:enlg-from-qcg}}

We briefly discuss the broader context of Theorem~\ref{thm:enlg-from-qcg}. As previously mentioned, Regev and Vidick~\cite{Regev2013} proved that a certain class of QC games have the property that if Alice and Bob make use of an entangled state initially shared between them, they can never achieve perfect optimality, it is always possible for them to do better (meaning that they win with a strictly larger probability) using some different shared entangled state on larger quantum systems. We found in the previous section that there also exists a class of extended nonlocal games with this property as well. 
 
We can also ask whether or not the above property holds more generally for some class of nonlocal games. Formally, 
\begin{question}
	Does there exist a nonlocal game $G$ such that 
	\begin{align}
		\omega^*(G) = 1,
	\end{align}
	and for every positive integer $N$ it holds that 
	\begin{align}
		\omega_N^*(G) < 1.
	\end{align}
\end{question}
For the traditional nonlocal game case with classical questions and classical answers, this question remains open. The so-called \index{I3322 inequality}{\emph{I3322 inequality}}, when formulated as a nonlocal game, has been conjectured to have the property just described, in which increasing degrees of entanglement admit strategies with strictly increasing success rates~\cite{Pal2009}.

It is also worth noting that Theorem~\ref{thm:enlg-from-qcg} implies the existence of a tripartite steering inequality for which an infinite-dimensional quantum state is required in order to achieve a maximal violation. This follows from the fact that extended nonlocal games may be equivalently viewed as a tripartite steering scenario (as considered in~\cite{Cavalcanti2015} and~\cite{Sainz2015}), as was mentioned in Chapter~\ref{chap:extended_nonlocal_games}. 

\section{Variations on the extended nonlocal game model} \label{sec:variations-on-enlg}

Recall that a nonlocal game consists of two rounds of communication: one from the referee to the players and one from the players to the referee. The standard definition of a nonlocal game assumes that both rounds of communication consist of classical messages. We saw a variation on that model in Section~\ref{sec:quantum-classical-games}, in which the question round was replaced with the referee sending quantum questions to both Alice and Bob. In a similar manner, we may also consider such variations on the extended nonlocal game model. The standard extended nonlocal game consists of three rounds of communication, that is
\begin{enumerate}
	\item (Quantum): Alice and Bob prepare a state $\sigma \in \Density(\U \otimes \R \otimes \V)$ shared between themselves and the referee. 
	\item (Classical): The referee randomly generates classical questions for Alice and Bob, $(x,y) \in \SigmaA \times \SigmaB$, respectively.
	\item (Classical): Alice and Bob respond with answers $a \in \GammaA$ and $b \in \GammaB$. 
\end{enumerate}
In complete generality, any variation on the type of communication used in an extended nonlocal game may be specified in terms of a tuple $(t_1,t_2,t_3) \in \{q,c\}$ where each round of communication consists of either a transmission of classical or quantum information as denoted by either $c$ or $q$ in the tuple. For instance, the type of communication in each round of a standard extended nonlocal game corresponds to the tuple $(q,c,c)$. We therefore equivalently may refer to the standard definition of an extended nonlocal game as a quantum-classical-classical extended nonlocal game or just a QCC extended nonlocal game for short. 


\subsection{Quantum-classical-quantum extended nonlocal games}

Consider the class of \index{quantum-classical-quantum extended nonlocal games}{\emph{quantum-classical-quantum extended nonlocal games}} or QCQ extended nonlocal games for short. This class of game is defined precisely as a standard extended nonlocal game, only now the last round of communication is replaced with Alice and Bob sending quantum registers in place of the classical message $a \in \GammaA$ and $b \in \GammaB$ to the referee. 

Specifically, a QCQ extended nonlocal game is specified by the following objects:
\begin{itemize}
	\item A probability distribution $\pi : \SigmaA \times \SigmaB \rightarrow [0,1]$, for alphabets $\SigmaA$ and $\SigmaB$. 
	\item A collection of measurement operators $\{P_{x,y} : x \in \SigmaA, \ y \in \SigmaB \} \subset \Pos(\A \otimes \R \otimes \B)$ where $\A,\B$, and $\R$ are complex Euclidean spaces corresponding to registers $\reg{A},\reg{B},$ and $\reg{R}$. 
\end{itemize}
From the referee's perspective, such a game is played as follows: 
\begin{enumerate}
	\item Alice and Bob present the referee with the register $\reg{R}$, which has been initialized in a state of Alice and Bob's choosing. (The register $\reg{R}$ might, for instance, be entangled with systems possessed by Alice and Bob.)
	\item The referee randomly generates a pair $(x,y) \in \SigmaA \times \SigmaB$ according to the distribution $\pi$, and sends $x$ to Alice and $y$ to Bob. Alice and Bob then send registers $\reg{A}$ and $\reg{B}$ corresponding to spaces $\A$ and $\B$ to the referee. 
	\item The referee measures registers $(\reg{A},\reg{R},\reg{B})$ with respect to the binary-valued measurement $\{P_{x,y}, \I - P_{x,y}\} \subset \Pos(\A \otimes \R \otimes \B)$. The outcome corresponding to the measurement operator $P_{x,y}$ indicates that Alice and Bob win, while the other measurement result indicates that they lose. 
\end{enumerate}
For any QCQ extended nonlocal game, there are various classes of strategies that may be adapted from the standard extended nonlocal game case for Alice and Bob, including unentangled strategies, standard quantum strategies, commuting measurement strategies, and non-signaling strategies. In this section, we only consider standard quantum strategies for QCQ extended nonlocal games.

A standard quantum strategy for a QCQ extended nonlocal game, specified by 
\begin{align}
	\pi : \SigmaA \times \SigmaB \rightarrow [0,1] \quad \textnormal{and} \quad \{P_{x,y} : x \in \SigmaA, \ y \in \SigmaB \} \subset \Pos(\A \otimes \R \otimes \B)
\end{align}
as above, consists of these objects: 

\begin{enumerate}
	\item A state $\sigma \in \Density(\U \otimes \R \otimes \V)$, for $\U$ being the space corresponding to a register $\reg{U}$ held by Alice and $\V$ being the space corresponding to a register $\reg{V}$ held by Bob. This state represents Alice and Bob's initialization of the tripe $(\reg{U},\reg{R},\reg{V})$ immediately before $\reg{R}$ is sent to the referee. 
	\item A collection of channels $\{ \Phi^x \} \subset \Channel(\U,\A)$ for each $x \in \SigmaA$, applied by Alice when she receives the question $x$, and a collection of channels $\{\Phi^y \} \subset \Channel(\V,\B)$ for each $y \in \SigmaB$, applied by Bob when he receives the question $y$. Alice and Bob then send their portions of the state after they have applied their channels to the referee. 
\end{enumerate} 
When Alice and Bob utilize such a strategy, their winning probability may be expressed as 
\begin{align}
	  \sum_{\substack{
      (x,y) \in \SigmaA \times \SigmaB 
  }}
	\biggip{P_{x,y}}{ \left( \Phi^x \otimes \I_{\Lin(\R)} \otimes \Phi^y \right) \left(\sigma\right) }.
\end{align}
\begin{figure}[!htpb] 
	\begin{center}
			\includegraphics[scale=1.0]{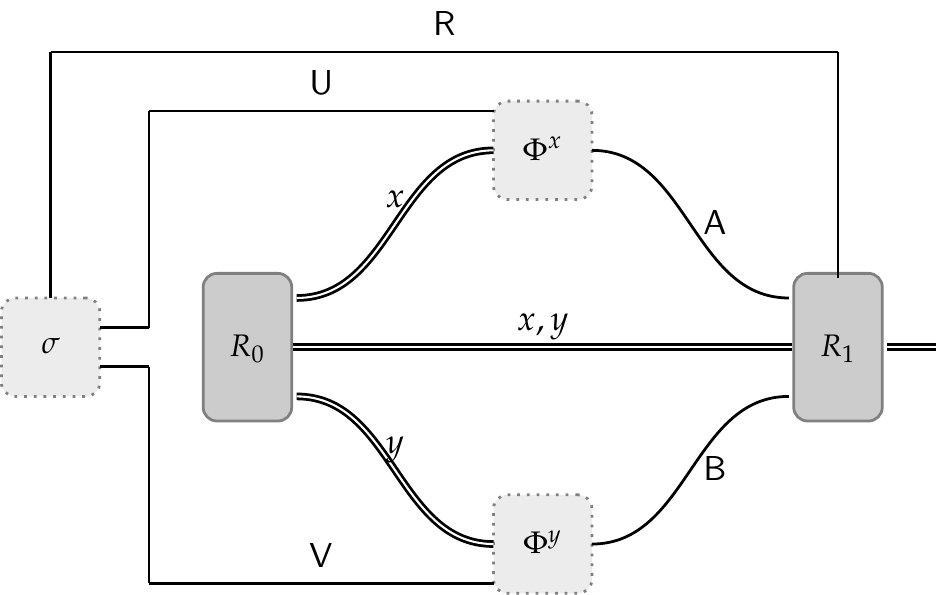}
	\end{center}
		\caption[A quantum-classical-quantum extended nonlocal game.]{A quantum-classical-quantum extended nonlocal game. The referee selects questions $(x,y) \in \SigmaA \times \SigmaB$ according to the probability distribution $\pi$, and sends $x$ to Alice and $y$ to Bob. Upon receiving $(x,y)$, Alice and Bob apply channels $\Phi^x$ and $\Phi^y$ to their respective systems and respond with answers in the form of quantum registers $(\reg{A},\reg{B})$ over complex Euclidean spaces $\A$ and $\B$. After receiving $(\reg{A},\reg{B})$ from Alice and Bob, the referee performs a measurement on $\{P_{x,y}, \I - P_{x,y}\}\subset \Pos(\A \otimes \R \otimes \B)$ to determine the probability with which Alice and Bob win or lose.}
		\label{fig:quantum-classical-quantum-enlg}
\end{figure}
Using a similar teleportation trick that we have used in Section~\ref{sec:extended-nonlocal-games-and-teleportation-games}, one may show that an arbitrary strategy for a QCC extended nonlocal game may be adapted for a QCQ extended nonlocal game. There is not much to be gained from going through the explicit details of this, as they are nearly identical to the process we have seen in the section already.

It is, however, relevant to note that in Lemma 32 of~\cite{Kobayashi2015}, the authors use a similar teleportation trick to prove a relationship between two different subclasses of complexity classes arising from what they refer to as the ``generalized-QAM'' complexity class. These two subclasses are similar in some sense to the QCC extended nonlocal game model and the QCQ extended nonlocal game model, as the authors also analyze variants of the QAM complexity class with similar properties.  

\chapter{Bounding the standard quantum value of extended nonlocal games}
\label{chap:extended_npa_hierarchy}

In this chapter, we shall present a number of heuristic methods that obtain bounds on the standard quantum value of an extended nonlocal game. For placing upper bounds, we take inspiration from the QC hierarchy~\cite{Doherty2008, Navascues2007,Navascues2008}; a hierarchy of semidefinite programs that yield progressively better upper bounds on the quantum value for nonlocal games as one computes higher levels of the hierarchy. Indeed, we adopt these results and provide what we refer to as the \index{extended QC hierarchy}{\emph{extended QC hierarchy}} and apply this technique to the class of extended nonlocal games to obtain upper bounds on the standard quantum value. 

In Section~\ref{sec:extended-npa-hierarchy}, we present the extended QC hierarchy in greater detail. We begin in Section~\ref{sec:intuitive-description-of-the-extended-qc-hierarchy} by giving an informal description of how the extended QC hierarchy is structured. In Section~\ref{sec:construction-of-the-extended-qc-hierarchy}, we make this description more formal and show in Section~\ref{sec:convergence-of-the-extended-qc-hierarchy} that the extended QC hierarchy has a similar convergence property as does the original QC hierarchy. In Section~\ref{sec:examples-upper-bounds-extended-npa}, we shall provide some explicit examples of how one may apply the extended QC hierarchy to extended nonlocal games. 

In Section~\ref{sec:lower-bound-extended-nonlocal-games}, we present our method to lower bound the value of extended nonlocal games which is inspired by the work of Liang and Doherty~\cite{Liang2007}, where they consider a method that can be applied to lower bound the quantum value in the nonlocal game setting. We adopt their technique and apply it to the case of extended nonlocal games. In Section~\ref{sec:examples-lower-bounds}, we provide explicit examples of applying this lower bound technique to an extended nonlocal game. 

This chapter is based on joint work with Nathaniel Johnston, Rajat Mittal, and John Watrous~\cite{Johnston2015a}.  

\minitoc

\section{Upper bounds for extended nonlocal games: the extended QC hierarchy}
\label{sec:extended-npa-hierarchy}

In this section we describe how the original \index{QC hierarchy}{\emph{QC hierarchy}}~\cite{Doherty2008,Navascues2007, Navascues2008}, may be generalized to extended nonlocal games. The QC hierarchy is a method that allows one to obtain upper bounds on the quantum value of a nonlocal game. Specifically, for a finite level, the commuting measurement value of a nonlocal game is guaranteed to be obtained, which serves as a natural upper bound to the quantum value of a nonlocal game. Directly calculating the quantum value of a nonlocal game is probably intractable, but in most cases, the first few levels of the QC hierarchy are numerically tractable to compute on current hardware, and in many cases the first few levels are sufficient~\cite{Pal2009}. 

\subsection{Intuitive description of the extended QC hierarchy} \label{sec:intuitive-description-of-the-extended-qc-hierarchy}

In this section, we shall provide some intuition on how one may interpret and use the extended QC hierarchy. Many of these ideas are also found in the QC hierarchy for nonlocal games. First, let us establish what the extended QC hierarchy is used for: it is a technique to allow one to place upper bounds on the standard quantum value of a given extended nonlocal game. More precisely, it is a method that allows us to obtain the commuting measurement value of a given extended nonlocal game, where it may be recalled from Chapter~\ref{chap:extended_nonlocal_games}, that the commuting measurement value is an upper bound on the standard quantum value for every extended nonlocal game, that is $\omega^*(G) \leq \omega_c(G)$ holds for any extended nonlocal game $G$. 

Recall that the commuting measurement value of an extended nonlocal game is given by a maximization over the following equation
\begin{align} \label{eq:commuting-meas-val-eqchier}
	\sum_{(x,y) \in \Sigma} \pi(x,y) \sum_{(a,b) \in \Gamma} \biggip{V(a,b|x,y)}{K(a,b|x,y)},
\end{align}
where $K$ is a commuting measurement assemblage operator. What the extended QC hierarchy allows us to do is to consider the following equation instead to compute the commuting measurement value
\begin{align} \label{eq:eqchier-val}
	\sum_{(x,y) \in \Sigma} \pi(x,y) \sum_{(a,b) \in \Gamma} \biggip{V(a,b|x,y)}{M^{(k)}(a,b|x,y)},
\end{align}
where now $M^{(k)}$ is some matrix parametrized by some integer $k$ with entries indexed by $a \in \GammaA$, $b \in \GammaB$, $x \in \SigmaA$, and $y \in \SigmaB$ satisfying certain constraints, which we will elaborate on shortly. The benefit of this is that we can optimize over the matrix $M^{(k)}$ and these constraints for some level $k$ by way of a semidefinite program, and thereby compute the commuting measurement value of an extended nonlocal game. Of course, showing that such a correspondence exists between equations~\eqref{eq:commuting-meas-val-eqchier} and~\eqref{eq:eqchier-val} is the difficult part, and is what we will be showing in Theorem~\ref{thm:extended-npa-convergence} in Section~\ref{sec:convergence-of-the-extended-qc-hierarchy}. 

Assuming that there does exist such a correspondence though, let us consider what the $M^{(k)}$ matrices look like, and how to compute the constraints on these matrices as the level $k$ increases. These matrices can be thought to embody certain properties that one would expect from a commuting measurement strategy. That is to say that the entries in the matrices $M^{(k)}$ are indexed by strings which correspond to operators coming from a commuting measurement strategy. It may be recalled that the measurements for such a strategy obey pair-wise commutative properties, sum to the identity when summing over the outputs, and may be considered to be projective without any loss of generality. The strings within these matrices possess these qualities in ways that we will describe. 

Assume that question and answer alphabets $\SigmaA$, $\SigmaB$, $\GammaA$, and
$\GammaB$, as well as a positive integer $m$ representing the dimension of the
referee's quantum system, have been fixed. The symbol $\cupdot$ denotes the disjoint union, meaning that $\SigmaA \times \GammaA$ and
$\SigmaB \times \GammaB$ are to be treated as disjoint sets when forming
\begin{align}
\Delta = \left( \SigmaA \times \GammaA \right) \cupdot \left( \SigmaB \times \GammaB \right).
\end{align}
We write $\Delta^{\ast}$ to denote the set of all strings (of finite length) over
$\Delta$, and we write $\varepsilon$ to denote the empty string.

For simplicity, we will restrict our attention to $k = 1$, the first level of the extended QC hierarchy, and consider an extended nonlocal game where the dimension of the referee's space is $r$ and the game consists of $n$ possible questions and $m$ possible answers for each player. The matrix $M^{(1)}$ consists $r \times r$ blocks
\begin{align}
	M^{(1)} = \begin{pmatrix}
				M_{1,1}^{(1)} & \cdots & M^{(1)}_{1,r} \\
				\vdots & \ddots & \vdots \\
				M_{m,1}^{(1)} & \cdots & M_{r,r}^{(1)}
			   \end{pmatrix}.
\end{align}
We construct each block $M_{i,j}^{(1)}$ by lining all tuples of strings that correspond to measurement operators and the identity operator used in the extended nonlocal game. For instance, the tuple $(x,a)$ can be thought of a string pair that corresponds to Alice's measurement operator $A_a^x$. We use $\epsilon$ as the empty string that relates to the identity operator. More specifically, for $k = 1$, we consider all strings of length at most one from the set
\begin{align}
	\Delta^{\leq 1} = \{\varepsilon\} \cup \left \{ (x,a) \right \} \cup \left \{ (y,b) \right \}.
\end{align}
The block matrices are then formed from this set as $M_{i,j}^{(1)} : \Delta^{\leq 1} \times \Delta^{\leq 1}$ where $1 \leq i,j \leq r$.  This is a bit clearer if we simply write out what we have described thus far 
\begin{footnotesize}
\[
\begin{aligned}
M_{i,j}^{(1)} =
\left(
\begin{array}{c||cccc|ccc}
 & \epsilon & (x_1,a_1) & \cdots & (x_{nm},a_{nm}) & (y_1,b_1) & \cdots & (y_{nm},b_{nm}) \\
\hline\hline 
\epsilon & & & & & & & \\
(x_1,a_1) & & & & & & & \\
\vdots & & & & & & & \\
(x_{nm},a_{nm}) & & & & & & & \\
\hline 
(y_1,b_1) & & & & & & & \\
\vdots & & & & & & & \\
(y_{nm},b_{nm}) & & & & & & &
\end{array}
\right). 
\end{aligned}
\]
\end{footnotesize}
Now we have not actually placed an entry into this matrix yet. The outer row and outer column are simply guides we will use to fill in the matrix. Specifically, we fill the matrix by composing the outer row element with the outer column element. Again, writing this out explicitly may enhance the explanation,
\begin{tiny}
\[
\begin{aligned}
M_{i,j}^{(1)} =
\left(
\begin{array}{c||cccc|ccc}
 & \epsilon & (x_1,a_1) & \cdots & (x_{nm},a_{nm}) & (y_1,b_1) & \cdots & (y_{nm},b_{nm}) \\
\hline\hline 
\epsilon & \epsilon & (x_1,a_1) & \cdots & (x_{nm},a_{nm}) & (y_1,b_1) & \cdots & (y_{nm},b_{nm}) \\
(x_1,a_1) & (x_1,a_1) & (x_1,a_1) & \cdots & (x_1,a_1) (x_{nm},a_{nm}) & (x_1,a_1) (y_1,b_1) & \cdots & (x_1,a_1) (y_{nm},b_{nm}) \\
\vdots & \vdots & \vdots & \ddots & \vdots & \vdots  & \ddots & \vdots \\
(x_{nm},a_{nm}) & (x_{nm},a_{nm}) & (x_{nm},a_{nm}) (x_1,a_1) & \cdots & (x_{nm},a_{nm}) & (x_{nm},a_{nm}) (y_1,b_1) & \cdots & (x_{nm},a_{nm}) (y_{nm},b_{nm}) \\
\hline 
(y_1,b_1) & (y_1,b_1) & (y_1,b_1) (x_1,a_1) & \cdots & (y_1,b_1) (x_{nm},a_{nm}) & (y_1,b_1) & \cdots & (y_1,b_1) (y_{nm},b_{nm}) \\
\vdots & \vdots & \vdots & \ddots & \vdots & \vdots  & \ddots & \vdots \\
(y_{nm},b_{nm}) & (y_{nm},b_{nm}) & (y_{nm},b_{nm}) (x_1,a_1) & \cdots & (y_{nm},b_{nm}) (x_{nm},a_{nm}) & (y_{nm},b_{nm}) (y_1,b_1) & \cdots & (y_{nm},b_{nm})
\end{array}
\right). 
\end{aligned}
\]
\end{tiny}
Consider the last entry in the second row with entry $(x_1,a_1)(y_{nm},b_{nm})$. To obtain this entry, we multiplied, or more precisely, concatenated string pairs $(x_1,a_1)$, coming from the outer column, with the pair $(y_{nm},b_{nm})$, coming from the outer row. 

It is also essential to consider the first row, first column, and diagonal of $M_{i,j}^{(1)}$. Note how there is just a single element in these spots instead of two concatenated tuples. The reasoning behind this is that, as mentioned before, the properties of this matrix are meant to embody those of commuting measurement strategy. Take for instance the first entry in $M_{i,j}^{(1)}$ which is equal to just $\epsilon$. We would expect an entry of $\epsilon \epsilon$, but recall $\epsilon$ corresponds to the identity operator and $\I \I = \I$. In a similar way, the second entry in the second row is just $(x_1,a_1)$. Again, we would expect an entry of $(x_1,a_1)(x_1,a_1)$, however since we may assume that strings representing the measurements are projective, that is $A_{a_1}^{x_1} A_{a_1}^{x_1} = A_{a_1}^{x_1}$, this property is conveyed in a similar way. The same idea applies to the entire diagonal of $M_{i,j}^{(1)}$. 

We now consider how the commutation relationships are conveyed in this matrix. In $M_{i,j}^{(1)}$, this property is represented as enforcing that 
\begin{align}
	M^{(1)}_{i,j}((x,a),(y,b)) = M^{(1)}_{i,j}((y,b),(x,a)),
\end{align}
for all $(i,j)$ blocks. For instance, consider the entries $(y_1,b_1)(x_1,a_1)$ and $(x_1,a_1)(y_1,b_1)$. These entries are equal since the strings represent operators coming from a commuting measurement strategy, that is to say they represent the property 
\begin{align}
	\left[A_{a_1}^{x_1}, B_{b_1}^{y_1} \right] = \left[ B_{b_1}^{y_1}, A_{a_1}^{x_1} \right] = 0.
\end{align}
We also need to convey the property that the measurements of Alice and Bob are equal to the identity when summing over all answers. This is conveyed by observing that
\begin{equation}
	\begin{aligned}
		\sum_{a \in \GammaA} M^{(1)}((x,a),(y,b)) &= M^{(1)}(\epsilon,(y,b)), \\
		\sum_{b \in \GammaB} M^{(1)}((x,a),(y,b)) &= M^{(1)}((x,a),\epsilon).
	\end{aligned}
\end{equation}
The last constraint we place on the matrix $M^{(1)}$ is that it must be positive semidefinite. If this constraint, along with all of the other constraints regarding the blocks of $M^{(1)}$ are enforced, we refer to $M^{(1)}$ as a first-order admissible matrix, which will be formally defined in the coming sections for any $k$. Optimizing over such a matrix subject to the above conditions while attempting to maximize equation~\eqref{eq:eqchier-val} will provide us with our desired upper bound on the standard quantum value for some extended nonlocal game. 

It may happen that the first level of the hierarchy is not sufficient in attaining the true commuting measurement value. Specifically, the value at the first level may be higher than the actual commuting measurement value. In this case, computing higher levels of $k$ will help us in getting closer to the true commuting measurement value. When constructing the block matrices $M^{(k)}_{i,j}$ for some level $k$, each block will have the following form
\begin{align}
	M_{i,j}^{(k)} : \Delta^{\leq k} \times \Delta^{\leq k} \rightarrow \complex,
\end{align} 
where 
\begin{equation}
	\begin{aligned}
		\Delta^{\leq 0} &= \{ \varepsilon \}, \\
		\Delta^{\leq 1} &= \Delta^{\leq 0} \cup \{ (x,a) \} \cup \{ (y,b) \}, \\
		\Delta^{\leq 2} &= \Delta^{\leq 1} \cup \{ (x,a)(x^{\prime},a^{\prime}) \} \cup \{ (x,a)(y,b) \} \cup \{ (y,b)(y^{\prime},b^{\prime}) \}, \\
		\vdots
	\end{aligned}
\end{equation}
where $x \not= x^{\prime}$, $a \not= a^{\prime}$, $y \not= y^{\prime}$, and $b \not= b^{\prime}$. It is apparent that as $k$ increases, the alphabets have the following property that 
\begin{align}
	\Delta^{\leq 0} \subseteq \Delta^{\leq 1} \subseteq \cdots \subseteq \Delta^{\leq k}.
\end{align}
That is to say, the blocks in $M^{(k)}$ will consist of more and more entries as $k$ increases. One of the main ideas of the extended QC hierarchy that is also similar to the original QC hierarchy is that as $k$ increases, this leads to better and better approximations of the commuting measurement value of some extended nonlocal game $G$, that is
\begin{align}
	\omega_c^k(G) \leq \cdots \leq \omega_c^{2}(G) \leq \omega_c^1(G),
\end{align}
for some value of $k$. This relationship is depicted in Figure~\ref{fig:qc-hierarchy-levels}. 

\begin{figure}[!htpb] 
	\begin{center}
		\includegraphics[scale=1.5]{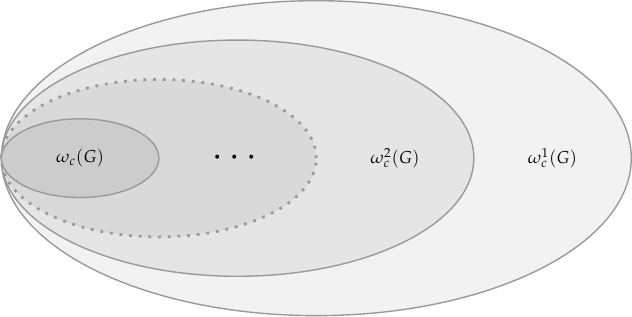}
	\end{center}
		\caption[Levels of the extended QC hierarchy.]{A visual representation of computing levels of the extended QC hierarchy. The outermost ellipse corresponds to the value attained when one computes the first level of the extended QC hierarchy. We represent this as $\omega^1_c(G)$, where $k=1$ represents the level computed. For certain games, this may indeed by equal to the true commuting measurement value of the game, that is $\omega_c(G)$.}
\label{fig:qc-hierarchy-levels}		
\end{figure}  

As previously mentioned, one nice property of the extended QC hierarchy that is also enjoyed by the QC hierarchy for nonlocal games is that obtaining the commuting measurement value is guaranteed for any extended nonlocal game for some finite level $k$. The downside to this, of course, is that $k$ may be particularly large. From an algorithm analysis perspective, the original QC hierarchy scales exponentially with respect to the level computed, that is $(nm)^k$, where again $n$ represents the total number of questions and $m$ represents the total number of answers. Indeed, for the extended QC hierarchy, the complexity fares even worse as we also have the referee's space to be concerned about now. It is therefore sometimes helpful to consider intermediate levels that are between integer values of $k$. For instance
\begin{equation}
	\begin{aligned}
		\Delta^{\leq 0} &= \{ \varepsilon \}, \\
		\Delta^{\leq 1} &= \Delta^{\leq 0} \cup \{ (x,a) \} \cup \{ (y,b) \}, \\
		\Delta^{\leq 1+AB} &= \Delta^{\leq 1} \cup \{ (x,a)(y,b) \}, \\
		\vdots
	\end{aligned}
\end{equation}
where $x \not= x^{\prime}$, $a \not= a^{\prime}$, $b \not= b^{\prime}$, and $y \not= y^{\prime}$. The $k = 1 + AB$ level is not quite as restrictive computationally as the complete second level of the hierarchy, and this may be enough in certain cases to obtain the commuting measurement value for a given extended nonlocal game. These intermediate levels have also been considered for the original QC hierarchy as well and serve a similar purpose. 

In practice, for many nonlocal games and extended nonlocal games, the values emerging from low levels of the QC hierarchy and extended QC hierarchy agree with the true commuting measurement value of the game. While there do exist some exceptions to this property, such as the I3322 game (based on the I3322 inequality~\cite{Collins2004}), the authors here~\cite{Pal2009} for instance were able to show a wide variety of Bell inequalities (or nonlocal games) that the QC hierarchy was able to obtain the commuting measurement value at low levels of the hierarchy.
 
In the next section, we will make our intuition developed in this section more formal, and we will further prove that the extended QC hierarchy allows one to obtain the commuting measurement value of any extended nonlocal game. Our proof technique for this follows very closely the technique used in~\cite{Navascues2007, Navascues2008} to prove convergence for the QC hierarchy for nonlocal games.

\subsection{Construction of the extended QC hierarchy} \label{sec:construction-of-the-extended-qc-hierarchy}

Define $\sim$ to be the equivalence relation on $\Delta^{\ast}$
generated by the following rules:\vspace{2mm}

\noindent
\begin{tabular}{@{\hspace*{1.2mm}}lll}
  1. \hspace*{-3mm} & $s \sigma t \sim s \sigma \sigma t$ & (for every
  $s,t\in\Delta^{\ast}$ and $\sigma \in \Delta$).\\[1mm]
  2. \hspace*{-3mm} & $s \sigma \tau t \sim s \tau \sigma t$ & 
  (for every $s,t\in\Delta^{\ast}$, $\sigma \in \SigmaA \times \GammaA$, and 
  $\tau\in \SigmaB \times \GammaB$).\\[2mm]
\end{tabular}

\noindent
That is, two strings are equivalent with respect to the relation $\sim$
if and only if one can be obtained from the other by a finite number of
applications of the above rules.

Now, a function of the form 
\begin{equation}
  \phi: \Delta^{\ast} \rightarrow \complex
\end{equation}
will be said to be \index{admissible}{\emph{admissible}} if and only if the following conditions are
satisfied:
\begin{mylist}{\parindent}
\item[1.] For every choice of strings $s,t\in\Delta^{\ast}$ it holds that
  \begin{equation} \label{eq:enpa-sum-strings}
    \sum_{a\in \GammaA} \phi(s (x,a) t) = \phi(st)
    \quad\text{and}\quad
    \sum_{b\in \GammaB} \phi(s (y,b) t) = \phi(st)
  \end{equation}
  for every $x\in \SigmaA$ and $y\in \SigmaB$.
\item[2.]
  For every choice of strings $s,t\in \Delta^{\ast}$, it holds that
  \begin{equation}
    \phi(s (x,a) (x,a') t) = 0 \quad\text{and}\quad
    \phi(s (y,b) (y,b') t) = 0
  \end{equation}
  for every choice of $x\in \SigmaA$ and $a,a'\in \GammaA$ satisfying $a \not= a'$, 
  and every choice of $y\in \SigmaB$ and $b,b'\in \GammaB$ satisfying $b \not= b'$, 
  respectively.
\item[3.]
  For all strings $s,t\in\Delta^{\ast}$ satisfying $s\sim t$ it
  holds that $\phi(s) = \phi(t)$.
\end{mylist}
Along similar lines, a function of the form
\begin{equation}
  \phi: \Delta^{\leq k} \rightarrow \complex
\end{equation}
is said to be \emph{admissible} if and only if the same conditions listed above
hold, provided that $s$ and $t$ are sufficiently short so that $\phi$ is defined
on the arguments indicated within each condition.

Finally, for each positive integer $k$ (representing a level of approximation in
the hierarchy to be constructed), we consider the set of all block matrices of
the form
\begin{equation}
  \label{eq:block-matrix}
  M^{(k)}
  = \begin{pmatrix}
    M^{(k)}_{1,1} & \cdots & M^{(k)}_{1,m}\\
    \vdots & \ddots & \vdots\\
    M^{(k)}_{m,1} & \cdots & M^{(k)}_{m,m}
  \end{pmatrix},
\end{equation}
where each of the blocks takes the form
\begin{equation}
  M^{(k)}_{i,j} : \Delta^{\leq k} \times \Delta^{\leq k} \rightarrow \complex,
\end{equation}
and for which the following conditions are satisfied:
\begin{mylist}{\parindent}
\item[1.]
  For every choice of $i,j\in\{1,\ldots,m\}$, there exists an admissible
  function
  \begin{equation}
    \phi_{i,j}:\Delta^{\leq 2k}\rightarrow \complex
  \end{equation}
  such that
  \begin{equation} \label{eq:enpa-mat-varphi}
    M^{(k)}_{i,j}(s,t) = \phi_{i,j}(s^{\mathsmaller{R}}t)
  \end{equation}
  for every choice of strings $s,t\in\Delta^{\leq k}$.
  (Here, the notation $s^{\mathsmaller{R}}$ means the \emph{reverse} of the
  string $s$.)
\item[2.]
  It holds that
  \begin{equation}
    \label{eq:pseudo-correlation-constraint-3}
    M^{(k)}_{1,1}(\varepsilon,\varepsilon) + \cdots +
    M^{(k)}_{m,m}(\varepsilon,\varepsilon) = 1.
  \end{equation}
\item[3.]
  The matrix $M^{(k)}$ is positive semidefinite.
\end{mylist}
Matrices of the form \eqref{eq:block-matrix} obeying the listed constraints will
be called \index{$k$-th order admissible matrix}{\emph{$k$-th order admissible matrices}}.
For such a matrix, we write $M^{(k)}(s,t)$ to denote the $m\times m$ complex
matrix
\begin{equation}
  M^{(k)}(s,t)
  = \begin{pmatrix}
    M^{(k)}_{1,1}(s,t) & \cdots & M^{(k)}_{1,m}(s,t)\\
    \vdots & \ddots & \vdots\\
    M^{(k)}_{m,1}(s,t) & \cdots & M^{(k)}_{m,m}(s,t)
  \end{pmatrix},
\end{equation}
for each choice of strings $s,t\in\Delta^{\leq k}$.
With respect to this notation, the second and third conditions on $M^{(k)}$
imply that $M^{(k)}(\varepsilon,\varepsilon)$ is an $m\times m$ density matrix.

We observe that an optimization over all $k$-th order admissible
matrices can be represented by a semidefinite program: a matrix of the form
\eqref{eq:block-matrix} is a $k$-th order admissible matrix if and only if it
is positive semidefinite and satisfies a finite number of linear constraints
imposed by the first two conditions on $M^{(k)}$.
In particular, for an extended nonlocal game $G = (\pi,V)$,
where $\pi$ is a distribution over $\SigmaA \times \SigmaB$ and $V$ is a function
$V: \GammaA \times \GammaB \times \SigmaA \times \SigmaB \rightarrow\Pos(\complex^m)$, 
one may consider the maximization of the quantity
\begin{equation}
  \sum_{(x,y) \in \SigmaA \times \SigmaB} \pi(x,y) \sum_{(a,b) \in \GammaA \times \GammaB} \Bigip{V(a,b|x,y)}{M^{(k)}((x,a),(y,b))}
\end{equation}
subject to $M^{(k)}$ being a $k$-th order admissible matrix.

We also note that the original QC hierarchy
corresponds precisely to the $m=1$ case of the hierarchy just described.

\subsection{Convergence of the extended QC hierarchy} \label{sec:convergence-of-the-extended-qc-hierarchy}

In this section, we show that the extended QC hierarchy converges to the set of commuting measurement assemblages. Our convergence proof follows a very similar trajectory to the convergence proof of the original QC hierarchy outlined in~\cite{Navascues2008}. The primary idea here, and in the original proof, is that for some finite $k$, there exists a $k$-th order admissible matrix that represents a commuting measurement assemblage. The beneficial property of this, as we have previously stated, is that the properties of this matrix are amenable to optimization via a semidefinite program. This is appealing from a computational standpoint, as we can leverage this property along with convex optimization software (such as CVX~\cite{Grant2008a}) to compute upper bounds on the standard quantum value of extended nonlocal games. 

We now give some intuition for how the proof of convergence proceeds. The easier direction of the proof is to show that if you are given a commuting measurement assemblage, then this assemblage satisfies the properties specified by a $k$-th order pseudo commuting measurement assemblage for every level $k$. The harder and more interesting direction is the converse. The main idea of proving this direction is very similar to the idea presented in~\cite{Navascues2008}, where one needs to show that there must exist collections of measurement operators $\{A_a^x\} \subset \Pos(\H)$ and $\{B_b^y\} \subset \Pos(\H)$ for some Hilbert space $\H$ belonging to Alice and Bob, as well as a state $\rho \in \Density(\R \otimes \H)$ that satisfy the conditions of a commuting measurement strategy arising from a $k$-th order pseudo commuting measurement assemblage. The basic idea here is to consider the $k$-th order pseudo commuting measurement assemblage as a matrix and show how the shared state and collections of measurements arise from the definition of this matrix. 

Now, for a fixed choice of alphabets $\SigmaA$, $\SigmaB$, $\GammaA$, and $\GammaB$, as well as positive
integers $m$ and $k$, let us consider the set of all functions of the form
\begin{equation}
  K : \GammaA \times \GammaB \times \SigmaA \times \SigmaB \rightarrow \Lin(\complex^m)
\end{equation}
for which there exists a $k$-th order admissible matrix $M^{(k)}$ that satisfies
\begin{equation}
  K(a,b|x,y) = M^{(k)}((x,a),(y,b))
\end{equation}
for every $x\in \SigmaA$, $y\in \SigmaB$, $a\in \GammaA$, and $b\in \GammaB$.

The set of all such functions will be called \index{$k$-th order pseudo commuting measurement assemblage}{\emph{$k$-th order pseudo commuting measurement assemblages}}. 

\begin{theorem} \label{thm:extended-npa-convergence}
  Let $\SigmaA$, $\SigmaB$, $\GammaA$, and $\GammaB$ be alphabets, let $m$ be a positive integer, let $\R = \complex^m$ be a complex Euclidean space, and
  let
  \begin{equation}
    K : \GammaA \times \GammaB \times \SigmaA \times \SigmaB \rightarrow \Lin(\R)
  \end{equation}
  be a function.
  The following statements are equivalent:
  \begin{mylist}{\parindent}
  \item[1.]
    The function $K$ is a commuting measurement assemblage.
  \item[2.]
    The function $K$ is a $k$-th order pseudo commuting measurement assemblage for every
    positive integer $k$.
  \end{mylist}
\end{theorem}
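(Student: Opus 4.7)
The plan is to follow the structure of the convergence proof for the original QC hierarchy due to Navascués, Pironio, and Acín, adapting it to track the additional $m \times m$ block structure corresponding to the referee's space $\R$.

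\textbf{Easy direction (1 $\Rightarrow$ 2).} Suppose $K$ is a commuting measurement assemblage, arising from a Hilbert space $\H$, a state $\sigma \in \Density(\R \otimes \H)$, and commuting projective measurements $\{A_a^x\}, \{B_b^y\} \subset \Pos(\H)$ (after a Naimark-style dilation, projectivity may be assumed). For each $\alpha \in \Delta$, let $E_\alpha$ denote the corresponding projection ($E_{(x,a)} = A_a^x$ and $E_{(y,b)} = B_b^y$), and for a string $s = s_1 \cdots s_\ell \in \Delta^{\ast}$, set $E_s = E_{s_1} \cdots E_{s_\ell}$ with $E_\varepsilon = \I_\H$. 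Fix any level $k$, and for $i,j \in \{1,\ldots,m\}$ define
\begin{equation}
\phi_{i,j}(s) = \tr\bigl((e_j e_i^* \otimes E_s)\,\sigma\bigr), \qquad s \in \Delta^{\leq 2k},
\end{equation}
and $M^{(k)}_{i,j}(s,t) = \phi_{i,j}(s^{\mathsmaller{R}} t)$. Since each $E_\alpha$ is a self-adjoint projection and operators from the $\SigmaA \times \GammaA$ side commute with those from the $\SigmaB \times \GammaB$ side, the relation $s \sim t$ implies $E_s = E_t$, so $\phi_{i,j}$ is well defined on equivalence classes. The summing, orthogonality, and equivalence conditions for admissibility all follow from the completeness, projectivity, and commutation relations of the measurements, while $\sum_i M^{(k)}_{i,i}(\varepsilon,\varepsilon) = \tr(\sigma) = 1$. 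Positivity of $M^{(k)}$ follows by writing $v^* M^{(k)} v = \bignorm{X_v\,\sqrt{\sigma}\,}_2^2 \geq 0$ for the operator $X_v = \sum_{i,s} \overline{v_i(s)}\,(e_i^* \otimes E_s)$ associated with any test vector $v$.

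\textbf{Hard direction (2 $\Rightarrow$ 1).} Suppose $K$ is a $k$-th order pseudo commuting measurement assemblage for every $k$, witnessed by admissible block matrices $M^{(k)}$ and associated admissible functions $\phi^{(k)}_{i,j} : \Delta^{\leq 2k} \to \complex$. First, entries of $M^{(k)}$ are uniformly bounded in absolute value by $1$: the normalization $\sum_i \phi^{(k)}_{i,i}(\varepsilon) = 1$ together with positivity of $M^{(k)}$ forces $|M^{(k)}_{i,j}(s,t)| \leq 1$ (the diagonal dominates, and projectivity-type relations coming from the admissibility axioms bound diagonal entries). Interpreting each $(\phi^{(k)}_{i,j})_{i,j}$ as a bounded linear functional on the Banach space $\ell^1(\Delta^{\ast} \times \{1,\ldots,m\}^2)$ (extending by zero outside $\Delta^{\leq 2k}$), the Banach--Alaoglu theorem, as noted in the preliminaries, supplies a weak-$\ast$ convergent subsequence with limit $\phi_{i,j} : \Delta^{\ast} \to \complex$. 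All admissibility conditions are preserved pointwise in the limit, and the matching condition $K(a,b|x,y)_{i,j} = \phi_{i,j}((x,a)(y,b))$ continues to hold.

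Next comes the GNS-style construction, which is the main technical step. Form the free complex vector space $\V$ with basis $\{e_i \otimes s : i \in \{1,\ldots,m\},\ s \in \Delta^{\ast}\}$, and define a sesquilinear form by $\langle e_i \otimes s,\ e_j \otimes t \rangle = \phi_{i,j}(s^{\mathsmaller{R}} t)$. Convergence of this form to something positive semidefinite on all finite truncations (using positivity of each $M^{(k)}$) shows the form is positive semidefinite on $\V$; quotienting by its null space and completing yields a separable Hilbert space $\K$. Define operators on $\K$ by left multiplication: $\widehat{E}_\alpha [e_i \otimes s] = [e_i \otimes (\alpha s)]$. Admissibility conditions (2) and (3) guarantee that these operators are well defined, project onto their range, and that $\widehat{E}_{(x,a)}$ commutes with $\widehat{E}_{(y,b)}$ for every $x,y,a,b$; the summation condition gives $\sum_a \widehat{E}_{(x,a)} = \sum_b \widehat{E}_{(y,b)} = \I$. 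Finally, the vectors $u_i = [e_i \otimes \varepsilon]$ assemble into an isometry $U : \R \to \K$ via $U e_i = u_i$, and one defines $\sigma = \tr_{\K}(u u^*)$ for $u = \sum_i e_i \otimes u_i \in \R \otimes \K$ (normalization is inherited from $\sum_i \phi_{i,i}(\varepsilon)=1$). A direct computation then shows
\begin{equation}
\tr_{\K}\bigl((\I_{\R} \otimes \widehat{E}_{(x,a)} \widehat{E}_{(y,b)})\,\sigma\bigr)_{i,j} = \phi_{i,j}((x,a)(y,b)) = K(a,b|x,y)_{i,j},
\end{equation}
exhibiting $K$ as a commuting measurement assemblage.

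\textbf{Main obstacle.} The hard direction is the whole content of the theorem. The subtle points are: (i) extracting the limiting $\phi_{i,j}$ simultaneously for all $(i,j)$ while preserving joint positivity across blocks, which requires care with the block-structured positivity condition rather than scalar positivity; (ii) verifying that the left-multiplication operators on the quotient are genuinely well-defined projections whose commutation across the two sides holds on all of $\K$ (not merely on the dense subspace of finite-length strings); and (iii) reconstructing the state $\sigma \in \Density(\R \otimes \K)$ so that its reduced behavior on $\R$ matches the $m \times m$ block structure encoded in $M^{(k)}(\varepsilon,\varepsilon)$. These are precisely the places where the additional referee register forces a genuine generalization of the Navascués--Pironio--Acín argument; everything else is a routine extension.
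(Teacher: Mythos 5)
Your proposal is correct and follows essentially the same route as the paper: the easy direction builds the admissible matrices directly from the commuting projective measurements and the shared state, and the hard direction combines the uniform bound on entries, a Banach--Alaoglu weak-$\ast$ limit, and a GNS/Gram-vector construction of the Hilbert space, the vectors $u_{i,s}$, and the projections $\Pi^z_c$. The only differences are presentational (you define the projections by left multiplication on the quotient of the free vector space, whereas the paper defines them as projections onto closed spans and then derives the left-multiplication identity $\Pi^z_c u_{j,s} = u_{j,(z,c)s}$), and these are equivalent.
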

We require the following lemma to prove Theorem~\ref{thm:extended-npa-convergence}. This lemma will allow us to claim that the entries of a $k$-th order admissible matrix, $M^{(k)}$, are bounded above by one. 

\begin{lemma} \label{lem:enpa-bounded-entries}
	Let $m,k \geq 1$ be positive integers. Then a $k$-th order admissible matrix, $M^{(k)}$, satisfies 
	\begin{align}
		\abs{M_{i,j}^{(k)}(s,t)} \leq 1,
	\end{align}
for every $i,j \in \{1, \ldots, m\}$ and all $s,t \in \Delta^{\leq k}$. 
\end{lemma}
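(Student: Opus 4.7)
The plan is to reduce the claim to bounding the diagonal entries $M_{i,i}^{(k)}(s,s)$ and then to prove that diagonal bound by induction on the length of $s$. Since $M^{(k)}$ is positive semidefinite by the third defining condition, every $2\times 2$ principal submatrix is positive semidefinite; applied to the rows and columns indexed by the pairs $(i,s)$ and $(j,t)$, this gives the Cauchy--Schwarz-type inequality
\begin{equation*}
\bigl| M_{i,j}^{(k)}(s,t) \bigr|^{2} \;\leq\; M_{i,i}^{(k)}(s,s)\, M_{j,j}^{(k)}(t,t),
\end{equation*}
so it will suffice to show $M_{i,i}^{(k)}(s,s) \leq 1$ for every $i \in \{1,\ldots,m\}$ and every $s \in \Delta^{\leq k}$.

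For the base case $s = \varepsilon$, the numbers $M_{i,i}^{(k)}(\varepsilon,\varepsilon)$ are nonnegative as diagonal entries of a PSD matrix, and by the normalization~\eqref{eq:pseudo-correlation-constraint-3} they sum to $1$, so each lies in $[0,1]$. I would actually prove the slightly stronger inductive statement that $M_{i,i}^{(k)}(s,s) \leq M_{i,i}^{(k)}(\varepsilon,\varepsilon)$ for every $s \in \Delta^{\leq k}$, which combined with the base case yields what we want.

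For the inductive step, write $s = \sigma t$ with $\sigma \in \Delta$ and $t \in \Delta^{\leq k-1}$. Using the identification $M_{i,j}^{(k)}(s',t') = \phi_{i,j}((s')^{\mathsmaller{R}} t')$ from~\eqref{eq:enpa-mat-varphi} together with the idempotence rule $\sigma\sigma \sim \sigma$, the two entries
\begin{equation*}
M_{i,i}^{(k)}(\sigma t,\sigma t) = \phi_{i,i}\bigl(t^{\mathsmaller{R}} \sigma \sigma t\bigr) = \phi_{i,i}\bigl(t^{\mathsmaller{R}} \sigma t\bigr) \quad \text{and} \quad M_{i,i}^{(k)}(t,\sigma t) = \phi_{i,i}\bigl(t^{\mathsmaller{R}} \sigma t\bigr)
\end{equation*}
collapse to the same real number $\beta$, while I set $\alpha = M_{i,i}^{(k)}(t,t)$. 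The principal $2\times 2$ submatrix of $M^{(k)}$ indexed by $(i,t)$ and $(i,\sigma t)$ is then $\bigl(\begin{smallmatrix}\alpha & \beta \\ \beta & \beta\end{smallmatrix}\bigr)$, and its positive semidefiniteness forces $\alpha, \beta \geq 0$ and $\beta(\alpha - \beta) \geq 0$, hence $\beta \leq \alpha$. Invoking the inductive hypothesis $\alpha \leq M_{i,i}^{(k)}(\varepsilon,\varepsilon)$ then completes the step.

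The place where I expect the bookkeeping to require the most care is the collapse of the two entries above, which rests on the fact (from condition~\eqref{eq:enpa-mat-varphi} and condition~3 on admissible functions) that the matrix entries are well-defined on equivalence classes of strings, so that $\phi_{i,i}(t^{\mathsmaller{R}} \sigma \sigma t)$ and $\phi_{i,i}(t^{\mathsmaller{R}} \sigma t)$ agree. Once that identification is in place, the rest of the argument is a routine use of positive semidefiniteness of principal submatrices together with the normalization~\eqref{eq:pseudo-correlation-constraint-3}.
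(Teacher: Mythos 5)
Your proposal is correct, and its overall skeleton --- reduce to the diagonal via positive semidefiniteness of $2\times 2$ principal submatrices, then bound $M_{i,i}^{(k)}(s,s)$ by induction on the length of $s$ with the normalization \eqref{eq:pseudo-correlation-constraint-3} as the base case --- is exactly the paper's. Where you genuinely diverge is in the mechanism of the inductive step. The paper establishes the monotonicity $M_{i,i}^{(k)}((z,c)t,(z,c)t) \leq M_{i,i}^{(k)}(t,t)$ by summing over all answers $d$, using nonnegativity of the diagonal entries to drop the other terms, and then invoking the idempotence relation together with the marginalization axiom $\sum_{d}\phi(s(z,d)t) = \phi(st)$ from condition~1 of admissibility. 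You instead extract the $2\times 2$ principal submatrix indexed by $(i,t)$ and $(i,\sigma t)$, observe via idempotence that it has the form $\bigl(\begin{smallmatrix}\alpha & \beta\\ \beta & \beta\end{smallmatrix}\bigr)$ with $\beta$ real, and read off $\beta \leq \alpha$ from its determinant. Your route uses strictly fewer of the admissibility axioms (only positivity and the $\sigma\sigma\sim\sigma$ rule, never the summation condition), and it transparently encodes the Gram-vector picture $\norm{\Pi u_t} \leq \norm{u_t}$ that drives the convergence proof later; the paper's route is marginally shorter to typeset and reuses machinery (the summation rule) that the reader has just seen. The one bookkeeping point you flagged --- that $\phi_{i,i}(t^{\mathsmaller{R}}\sigma\sigma t)$ and $\phi_{i,i}(t^{\mathsmaller{R}}\sigma t)$ both lie in $\Delta^{\leq 2k}$ when $\sigma t \in \Delta^{\leq k}$, so the identification is legitimate --- does check out, and your explicit strengthening of the inductive hypothesis to $M_{i,i}^{(k)}(s,s)\leq M_{i,i}^{(k)}(\varepsilon,\varepsilon)$ is harmless (the paper's chain of inequalities proves the same monotonicity implicitly).
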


\begin{proof}
	It follows that since $M^{(k)}$ is positive semidefinite, then the $2 \times 2$ principal submatrix of $M^{(k)}$, written as 
  \begin{equation}
    \begin{pmatrix}
      M_{i,i}^{(k)}(s,s) & M_{i,j}^{(k)}(s,t) \\[2mm]
      M_{j,i}^{(k)}(t,s) & M_{j,j}^{(k)}(t,t)
    \end{pmatrix}
  \end{equation}
must also be positive semidefinite for each $i,j \in \{1,\ldots,m\}$ and $s,t \in \Delta^{\ast}$. It follows then that 
\begin{align}
	\abs{M_{i,j}^{(k)}(s,t)} \leq \sqrt{M_{i,i}^{(k)}(s,s)} \sqrt{M_{j,j}^{(k)}(t,t)}
\end{align}
for each $i,j \in \{1, \ldots, m\}$ and $s,t \in \Delta^{\ast}$. It now remains to show that
	\begin{align} \label{eq:extended-npa-psd-bound}
		M_{i,i}^{(k)}(s,s) \leq 1	
	\end{align}	 
	for every $i \in \{1,\ldots,m\}$ and $s \in \Delta^{\leq k}$. We prove equation~\eqref{eq:extended-npa-psd-bound} by induction on the length of $s$. For the base case, it holds that $M_{i,i}^{(k)}(\varepsilon,\varepsilon) \leq 1$ by the property of equation~\eqref{eq:pseudo-correlation-constraint-3} that 
	\begin{align}
		\sum_{i=1}^m M_{i,i}^{(k)}(\varepsilon,\varepsilon) = 1,
	\end{align}
and that the diagonal entries of $M^{(k)}$ are nonnegative. For the general case, for any string $t \in \Delta^{\ast}$ and any choice of $(z,c) \in \Delta$, it holds that 
	\begin{align}
		M_{i,i}^{(k)}((z,c)t,(z,c)t) &\leq \sum_d M_{i,i}^{(k)}((z,d)t,(z,d)t) \\
		&= \sum_d \phi_{i,i}^{(k)}(t^\mathsmaller{R}(z,d)(z,d)t) \label{eq:enpa-psd-2} \\
		&= \sum_d \phi_{i,i}^{(k)}(t^\mathsmaller{R}(z,d)t) \label{eq:enpa-psd-3} \\
		&= \phi_{i,i}^{(k)}(t^\mathsmaller{R}t) \label{eq:enpa-psd-4} \\
		&= M_{i,i}^{(k)}(t,t) \label{eq:enpa-psd-5},
	\end{align}	 
	where $d \in \GammaA$ if $z \in \SigmaA$ or $d \in \GammaB$ if $z \in \SigmaB$ and where equation~\eqref{eq:enpa-psd-2} follows from equation~\eqref{eq:enpa-mat-varphi}, equation~\eqref{eq:enpa-psd-3} follows from the equivalence relation on strings that $s \sigma t \sim s \sigma \sigma t$ for every $s,t \in \Delta^{\ast}$ and $\sigma \in \Delta$, equation~\eqref{eq:enpa-psd-4} follows from equation~\eqref{eq:enpa-sum-strings}, and equation~\eqref{eq:enpa-psd-5} follows again from equation~\eqref{eq:enpa-mat-varphi}. The proof follows by the hypothesis of induction. 
\end{proof}

With Lemma~\ref{lem:enpa-bounded-entries} in hand, we proceed to the proof of Theorem~\ref{thm:extended-npa-convergence}.

\begin{proof}[Proof of Theorem~\ref{thm:extended-npa-convergence}]
  The simpler implication is that statement 1 implies statement 2.
  Under the assumption that statement 1 holds, it must be that $K$ is defined
  by a strategy in which Alice and Bob use projective measurements,
  \begin{align}
  	\{ A_a^x : a \in \GammaA \} \subset \Pos(\H) \quad \textnormal{and} \quad \{ B_b^y : b \in \GammaB \} \subset \Pos(\H)
  \end{align}
  for Alice and Bob, on a shared (possibly infinite-dimensional) complex Euclidean space $\H$, along with a pure state $u\in\R\otimes\H$.
  Let $u_1,\ldots,u_m \in \H$ be vectors for which
  \begin{equation}
    u = \sum_{j = 1}^m e_j \otimes u_j.
  \end{equation}
  Also let $\Pi^z_c$ denote $A^z_c$ if $z\in \SigmaA$ and $c\in \GammaA$, or $B^z_c$ if
  $z\in \SigmaB$ and $c\in \GammaB$.
  With respect to this notation, one may consider the $k$-th order
  admissible matrix $M^{(k)}$ defined by
  \begin{equation}
    M^{(k)}_{i,j}(s,t) = \phi_{i,j}(s^{\mathsmaller{R}}t),
  \end{equation}
  where the functions $\{\phi_{i,j}\}$ are defined as
  \begin{equation}
    \phi_{i,j} \bigl((z_1, c_1) \cdots (z_\ell, c_\ell)\bigr) 
    = u_i^* \Pi_{c_1}^{z_1} \cdots \Pi_{c_\ell}^{z_\ell} u_j
  \end{equation}
  for every string $(z_1, c_1) \cdots (z_\ell, c_\ell)\in\Delta^{\leq 2k}$.
  A verification reveals that this matrix is consistent with $K$, and therefore
  $K$ is a $k$-th order pseudo commuting measurement
    assemblage.

  The more difficult implication is that statement 2 implies statement 1.
  The basic methodology of the proof is similar to the $m=1$ case proved in
  \cite{Navascues2008}, and we will refer to arguments made in that paper
  when they extend to the general case.
  For every positive integer $k$, let $M^{(k)}$ be a $k$-th order admissible
  matrix satisfying $K(a,b|x,y) = M^{(k)}((x,a),(y,b))$ for every $x\in \SigmaA$,
  $y\in \SigmaB$, $a\in \GammaA$, and $b\in \GammaB$.
	First, by Lemma~\ref{lem:enpa-bounded-entries} it follows that for every choice of $k \geq 1$ that 
  \begin{equation}
    \Bigabs{M_{i,j}^{(k)}(s,t)} \leq 1
  \end{equation}
  for every choice of $i,j\in\{1,\ldots,m\}$ and $s,t\in\Delta^{\leq k}$.


  Next, reasoning in the same way as~\cite{Navascues2008}, we may assume that there exists an infinite matrix $\hat{M}^{(k)}$ created from $M^{(k)}$ by padding blocks $M_{i,j}^{(k)}$ to make them infinite. This sequence of infinite matrices $\{\hat{M}^{(k)} : k = 1,2,\ldots \}$ admits a subsequence $\{k_l\}$ that weak-* converges to a limit when $l$ approaches infinity. Recall this fact follows from the Banach--Alaoglu theorem mentioned in Chapter~\ref{chap:preliminaries}. This implies that 
  \begin{align}
  	\lim_{l \rightarrow \infty} \hat{M}^{(k_l)} \rightarrow M,
  \end{align}
where $M$ is an infinite matrix of the form
  \begin{equation}
    M
    = \begin{pmatrix}
      M_{1,1} & \cdots & M_{1,m}\\
      \vdots & \ddots & \vdots\\
      M_{m,1} & \cdots & M_{m,m}
    \end{pmatrix},
  \end{equation}
  where
  \begin{equation}
    M_{i,j} : \Delta^{\ast} \times \Delta^{\ast} \rightarrow \complex
  \end{equation}
  for each $i,j\in\{1,\ldots,m\}$, satisfying similar constraints to the finite
  matrices $M^{(k)}$.
  In particular, it must hold that 
  \begin{equation}
    M_{i,j}(s,t) = \phi_{i,j}(s^{\mathsmaller{R}}t)
  \end{equation}
  for a collection of admissible functions $\{\phi_{i,j}\}$ taking the form
  \begin{equation}
    \phi_{i,j}:\Delta^{\ast} \rightarrow \complex,
  \end{equation}
  it must hold that all finite submatrices of $M$ are positive semidefinite,
  and it must hold that
  $M_{1,1}(\varepsilon,\varepsilon) + \cdots +
  M_{m,m}(\varepsilon,\varepsilon) = 1$.
  Consequently, there must exist a collection of vectors
  \begin{equation}
    \label{eq:vectors-in-H}
    \bigl\{u_{i,s}\,:\,i\in\{1,\ldots,m\},\;s\in\Delta^{\ast}\}\subset\H
  \end{equation}
  chosen from a separable Hilbert space $\H$ for which it holds that
  \begin{equation}
    M_{i,j}(s,t) = \bigip{u_{i,s}}{u_{j,t}}
  \end{equation}
  for every choice of $i,j\in\{1,\ldots,m\}$ and $s,t\in\Delta^{\ast}$.
  Furthermore, it must hold that
  \begin{equation}
    K(a,b|x,y) = M((x,a),(y,b))
  \end{equation}
  where, as for the matrices $M^{(k)}$, we write
  \begin{equation}
    M(s,t) = 
    \begin{pmatrix}
      M_{1,1}(s,t) & \cdots & M_{1,m}(s,t)\\
      \vdots & \ddots & \vdots\\
      M_{m,1}(s,t) & \cdots & M_{m,m}(s,t)
    \end{pmatrix}
  \end{equation}
  for each $s,t\in\Delta^{\ast}$.
  There is no loss of generality in assuming $\H$ is spanned by
  the vectors \eqref{eq:vectors-in-H}, for otherwise $\H$ can simply be replaced
  by the (possibly finite-dimensional) subspace spanned by these vectors.

  Now we will define a commuting measurement strategy for Alice and Bob
  certifying that $K$ is a commuting measurement assemblage.
  The state initially prepared by Alice and Bob, and shared with the referee,
  will be the pure state corresponding to the vector
  \begin{equation}
    u = \sum_{j = 1}^m e_j \otimes u_{j,\varepsilon} \in \R \otimes\H.
  \end{equation}
  This is a unit vector, as a calculation reveals:
  \begin{equation}
    \norm{u}^2 = \sum_{j = 1}^m 
    \ip{u_{j,\varepsilon}}{u_{j,\varepsilon}}
    = M_{1,1}(\varepsilon,\varepsilon) + \cdots +
    M_{m,m}(\varepsilon,\varepsilon) = 1.
  \end{equation}

  Next, we define projective measurements on $\H$ for Alice and Bob.
  For each $(z,c)\in\Delta$, define $\Pi^z_c$ to be the projection operator
  onto the span of the set
  \begin{equation} \label{eq:enpa-projection-onto-span}
    \bigl\{u_{j,(z,c)s}\,:\,j\in\{1,\ldots,m\},\;s\in\Delta^{\ast}\bigr\}.
  \end{equation}
  It must, of course, be proved that these projections do indeed form projective
  measurements, and that Alice's measurements commute with Bob's.
  Toward these goals, consider any choice of $i,j \in \{1,\ldots,m\}$, $s,t \in \Delta^*$, and $(z,c) \in \Delta$, and observe that 
  \begin{equation}
	  \begin{aligned}
  	\ip{u_{i,(z,c)t}}{u_{j,s}} &= M_{i,j}((z,c)t,s) \\
  	&= \phi_{i,j}(t^{\mathsmaller{R}}(z,c)s) \\
  	&= \phi_{i,j}(t^{\mathsmaller{R}}(z,c)(z,c)s) \\
  	&= M_{i,j}((z,c)t,(z,c)s) \\
  	&= \ip{ u_{i,(z,c)t} }{ u_{j,(z,c)s} }
	  \end{aligned}
  \end{equation}
  It follows that $u_{j,s}$ and $u_{j,(z,c)s}$ have the same inner product with every vector in the image of $\Pi_c^z$. As every vector in the orthogonal complement of the image of $\Pi_c^z$ is orthogonal to $u_{j,(z,c)s}$, as this vector is contained in the image of $\Pi_c^z$, if follows that
  \begin{equation}
    \label{eq:projection-on-vectors}
    \Pi^z_c u_{j,s} = u_{j,(z,c)s}.
  \end{equation}
  This formula greatly simplifies the required verifications.
  For instance, one has
  \begin{equation}
  	\begin{aligned}
	    \bigip{u_{i,(z,c)t}}{u_{j,(z,d)s}} &= M_{i,j}((z,c)t,(z,d)s) \\
	    & = \phi_{i,j}(t^{\mathsmaller{R}}(z,c)(z,d)s) \\
	    & = 0
    \end{aligned}
  \end{equation}
  for all $i,j\in\{1,\ldots,m\}$, $s,t\in\Delta^{\ast}$, and
  $(z,c),(z,d)\in\Delta$ for which $c\not=d$, and therefore
  $\Pi^z_c \Pi^z_d = 0$
  whenever $(z,c),(z,d)\in\Delta$ satisfy $c\not=d$.
  For each $x\in \SigmaA$, and each $i,j\in\{1,\ldots,m\}$ and
  $s,t\in\Delta^{\ast}$, it holds that
  \begin{equation}
    \sum_{a\in \GammaA} \bigip{u_{i,s}}{\Pi^x_a u_{j,t}}
    = \sum_{a\in \GammaA} \bigip{u_{i,s}}{u_{j,(x,a)t}}
    = \sum_{a\in \GammaA} \phi_{i,j}(s^{\mathsmaller{R}}(x,a)t)
    = \phi_{i,j}(s^{\mathsmaller{R}}t) = \bigip{u_{i,s}}{u_{j,t}}
  \end{equation}
  and therefore
  \begin{equation}
    \sum_{a\in \GammaA}\Pi^x_a = \I, 
  \end{equation}
  for each $x\in \SigmaA$, and along similar lines one finds that
  \begin{equation}
    \sum_{b\in \GammaB}\Pi^y_b = \I
  \end{equation}
  for each $y\in \SigmaA$.
  Finally, for every $i,j\in\{1,\ldots,m\}$, $s,t\in\Delta^{\ast}$, 
  $(x,a)\in\SigmaA \times \GammaA$, and $(y,b)\in\SigmaB \times \GammaB$ we have
  \begin{equation}
    \begin{aligned}
      \Bigip{u_{i,s}}{\Pi^x_a \Pi^y_b u_{j,t}} &= \bigip{u_{i,(x,a)s}}{u_{j,(y,b)t}} \\
      &= \phi_{i,j}\bigl(s^{\mathsmaller{R}}(x,a)(y,b)t\bigr)\\
      &= \phi_{i,j}\bigl(s^{\mathsmaller{R}}(y,b)(x,a)t\bigr)\\
      &= \bigip{u_{i,(y,b)s}}{u_{j,(x,a)t}} \\
      &= \bigip{u_{i,s}}{\Pi^y_b \Pi^x_a u_{j,t}},
    \end{aligned}
  \end{equation}
  and therefore $\bigl[\Pi^x_a,\Pi^y_b\bigr] = 0$.

  It remains to observe that the strategy represented by the pure state
  $u$ and the projective measurements $\{\Pi^x_a\}$ and $\{\Pi^y_b\}$
  yields the commuting measurement assemblage $K$.
  This is also evident from the equation \eqref{eq:projection-on-vectors}, as
  one has
  \begin{equation}
    M_{i,j}((x,a),(y,b)) =
    \bigip{u_{i,(x,a)}}{u_{j,(y,b)}} =
    \Bigip{\Pi^x_a \Pi^y_b}{u_{j,\varepsilon} u_{i,\varepsilon}^{\ast}},
  \end{equation}
  and therefore
  \begin{equation}
    K(a,b|x,y) = \tr_{\H} \Bigl( \bigl(\I\otimes \Pi^x_a \Pi^y_b\bigr) u
    u^{\ast}\Bigr)
  \end{equation}
  for every choice of $x\in \SigmaA$, $y\in \SigmaB$, $a\in \GammaA$, and $b\in \GammaB$.
\end{proof}

\subsection{Examples: Upper-bounding the standard quantum values of extended nonlocal games} \label{sec:examples-upper-bounds-extended-npa}

\subsubsection*{The BB84 extended nonlocal game}

Consider the following extended nonlocal game, $G_{\BB84}$.
\begin{example}\label{ex:bb84-monogamy-game}
	Let $\SigmaA = \SigmaB = \GammaA = \GammaB = \{0,1\}$, define 
	\begin{equation} 
	\begin{aligned}
		V(0,0|0,0) &= E_{0,0} = \begin{pmatrix} 1 & 0 \\ 0 & 0 \end{pmatrix}, \\
		V(1,1|0,0) &= E_{1,1} = \begin{pmatrix} 0 & 0 \\ 0 & 1 \end{pmatrix}, \\
		V(0,0|1,1) &= \frac{1}{2}\left( E_{0,0} + E_{0,1} + E_{1,0} + E_{1,1} \right) = \begin{pmatrix} \frac{1}{2} & \frac{1}{2} \\ \frac{1}{2} & \frac{1}{2} \end{pmatrix} , \\
		V(1,1|1,1) &= \frac{1}{2}\left( E_{0,0} - E_{0,1} - E_{1,0} + E_{1,1} \right) = \begin{pmatrix} \frac{1}{2} & -\frac{1}{2} \\ -\frac{1}{2} & \frac{1}{2} \end{pmatrix}, 
	\end{aligned}
	\end{equation}
	define
	\begin{equation}
		V(a,b|x,y) = \begin{pmatrix} 0 & 0 \\ 0 & 0 \end{pmatrix},	
	\end{equation}		
 	for all $a \not= b$ or $x \not= y$, define $\pi(0,0) = \pi(1,1) = 1/2$, and define $\pi(x,y) = 0$ if $x \not= y$. 
\end{example}
Alice and Bob win $G_{\BB84} = (\pi,V)$ if and only if the responses from Alice and Bob, $a$ and $b$, are equal to the outcome of the measurement that the referee makes on its system. In the event where the referee sends $x = y = 0$ and Alice and Bob respond with either $a = b = 0$ or $a = b = 1$, the referee measures his state against either the measurement $V(0,0|0,0)$ or $V(1,1|0,0)$. These measurements correspond to the well known $0/1$ basis, which is sometimes written in Dirac notation elsewhere in the literature as $\ket{0}\bra{0}$ and $\ket{1}\bra{1}$. Likewise, if the referee sends $x = y = 1$ to Alice and Bob and they respond with either $a = b = 0$ or $a = b = 1$, the referee measures his state against either $V(0,0|1,1)$ or $V(1,1|1,1)$. These measurements correspond to the $+/-$ basis, which is typically denoted as $\ket{+}\bra{+}$ and $\ket{-}\bra{-}$ in Dirac notation elsewhere in the literature. This particular game is one that we will see again in Chapter~\ref{chap:monogamy_games} under the name of the BB84 monogamy-of-entanglement game, and it was initially introduced in~\cite{Tomamichel2013}.

Recall that the extended QC hierarchy states that an upper bound on the standard quantum value of any extended nonlocal game may be obtained by maximizing the following quantity 
\begin{align} \label{eq:npa-sdp-max}
	\sum_{(x,y) \in \SigmaA \times \SigmaB} \pi(x,y) \sum_{(a,b) \in \GammaA \times \GammaB} \biggip{V(a,b|x,y)}{M^{(k)}((x,a),(y,b))},
\end{align}
where $M^{(k)}$ is a $k$-th order admissible matrix. For the game $G_{\BB84}$, this quantity may be explicitly written as 
\begin{equation} \label{eq:bb84-kth-order-admissible-matrix}
	\begin{aligned}
		&\frac{1}{2} \left( \biggip{ V(0,0|0,0) }{ M^{(k)}((0,0),(0,0))} + \biggip{ V(1,1|0,0) }{ M^{(k)}((0,1),(0,1))} \right) + \\
		&\frac{1}{2} \left( \biggip{ V(0,0|1,1) }{ M^{(k)}((1,0),(1,0))} + \biggip{ V(1,1|1,1) }{ M^{(k)}((1,1),(1,1))} \right),
	\end{aligned}
\end{equation}
for some level $k$. For simplicity, we shall just consider the first level at $k = 1$ of the extended QC hierarchy for $G_{\BB84}$. As it turns out, and as we shall see in the coming sections, the first level converges to the standard quantum value, and therefore going any higher in the hierarchy will not yield any better approximations. 

The software listing~\ref{code:first-level-qc-hierarchy-bb84} in Appendix~\ref{chap:AppendixA} maximizes the objective function from equation~\eqref{eq:npa-sdp-max} subject to the constraints that the matrix $M^{(1)}$ is a first-order admissible matrix. Running the listing, one obtains the following matrix for $M^{(1)}$:
\begin{align}
	M^{(1)} = \begin{pmatrix}
				M^{(1)}_{1,1} & M^{(1)}_{1,2} \\[2mm]
				M^{(1)}_{2,1} & M^{(1)}_{2,2}
			  \end{pmatrix},
\end{align}
where 
\begin{equation}
	\begin{aligned}
	M^{(1)}_{1,1} = M^{(1)}_{2,2} =& \begin{pmatrix} \alpha & \beta_+ & \beta_- & \alpha & \alpha & \beta_+ & \beta_- & \alpha & \alpha \\
													\beta_+ & \beta_+ & 0 & \alpha\beta_+ & \alpha\beta_+ & \beta_+ & 0 & \alpha\beta_+ & \alpha \beta_+ \\														\beta_- & 0 & \beta_- & \alpha \beta_- & \alpha \beta_- & 0 & \beta_- & \alpha \beta_- & \alpha \beta_- \\
													\alpha & \alpha \beta_+ & \alpha \beta_- & \alpha & 0 & \alpha \beta_+ & \alpha \beta_- & \alpha & 0 \\
													\alpha & \alpha \beta_+ & \alpha \beta_- & 0 & \alpha & \alpha \beta_+ & \alpha \beta_- & 0 & \alpha \\
													\beta_+ & \beta_+ & 0 & \alpha \beta_+ & \alpha \beta_+ & \beta_+ & 0 & \alpha \beta_+ & \alpha \beta_+ \\
													\beta_- & 0 & \beta_- & \alpha \beta_- & \alpha \beta_- & 0 & \beta_- & \alpha \beta_- & \alpha \beta_- \\
													\alpha & \alpha \beta_+ & \alpha \beta_- & \alpha & 0 & \alpha \beta_+ & \alpha \beta_- & \alpha & 0 \\
													\alpha & \alpha \beta_+ & \alpha \beta_- & 0 & \alpha & \alpha \beta_+ & \alpha \beta_- & 0 & \alpha			
						\end{pmatrix}, \\
	M^{(1)}_{1,2} = M^{(1)}_{2,1} =& \begin{pmatrix} 
										0 & 0 & 0 & \gamma & -\gamma & 0 & 0 & \gamma & -\gamma \\
								0 & 0 & 0 & \alpha \gamma & -\alpha \gamma & 0 & 0 & \alpha \gamma & -\alpha \gamma \\
					0 & 0 & 0 & \alpha \gamma & -\alpha \gamma & 0 & 0 & \alpha \gamma & -\alpha \gamma \\
					0 & 0 & 0 & \alpha \gamma & -\alpha \gamma & 0 & 0 & \alpha \gamma & -\alpha \gamma \\
						\gamma & \alpha \gamma & \alpha \gamma & \gamma & 0 & \alpha \gamma & \alpha \gamma & \gamma & 0 \\
				-\gamma & -\alpha \gamma & -\alpha \gamma & 0 & -\gamma & -\alpha \gamma & -\alpha \gamma & 0 & -\gamma \\
					0 & 0 & 0 & \alpha \gamma & -\alpha \gamma & 0 & 0 & \alpha \gamma & -\alpha \gamma \\ 
				0 & 0 & 0 & \alpha \gamma & -\alpha \gamma & 0 & 0 & \alpha \gamma & -\alpha \gamma \\
				\gamma & \alpha \gamma & \alpha \gamma & \gamma & 0 & \alpha \gamma & \alpha \gamma & \gamma & 0 \\
				-\gamma & -\alpha \gamma & -\alpha \gamma & 0 & -\gamma & -\alpha \gamma & -\alpha \gamma & 0 & -\gamma 															 \end{pmatrix},
	\end{aligned}
\end{equation}
where we define the constants 
\begin{align}
	\alpha = 1/2, \quad \beta_{\pm} = \frac{1}{8} \left( 2 \pm \sqrt{2} \right), \quad \textnormal{and} \quad \gamma = \frac{\sqrt{2}}{8}.
\end{align}
One may verify that the matrix $M^{(1)}$ is a first-order admissible matrix and that the equation~\eqref{eq:bb84-kth-order-admissible-matrix} for $k = 1$ yields the value of $\cos^2(\pi/8) \approx 0.8536$ where 
\begin{equation}
	\begin{aligned}
		M^{(1)}((0,0),(0,0)) = \begin{pmatrix} \beta_+ & 0 \\ 0 & \beta_- \end{pmatrix}, \quad M^{(1)}((0,1),(0,1)) = \begin{pmatrix} \beta_- & 0 \\ 0 & \beta_+ \end{pmatrix}, \\
		M^{(1)}((1,0),(1,0)) = \begin{pmatrix} \alpha & \gamma \\ \gamma & \alpha \end{pmatrix}, \quad M^{(1)}((1,1),(1,1)) = \begin{pmatrix} \alpha & -\gamma \\ -\gamma & \alpha \end{pmatrix}.
	\end{aligned}
\end{equation}
In Section~\ref{sec:lower-bound-extended-nonlocal-games}, we will verify that $\cos^2(\pi/8)$ is indeed the standard quantum value as this value will also arise when calculating the lower bound of $G_{\BB84}$.  

\subsubsection*{The CHSH extended nonlocal game}

Let us now consider another game, $G_{\CHSH}$. 
\begin{example}[CHSH extended nonlocal game]
	Let $\SigmaA = \SigmaB = \GammaA = \GammaB = \{0,1\}$, define a collection of measurements $\{V(a,b | x,y) : a \in \GammaA, \ b \in \GammaB, \ x \in \SigmaA, \ y \in \SigmaB \} \subset \Pos(\R)$ such that 
	\begin{equation} \label{eq:chsh-meas-ops}
	\begin{aligned}
		V(0,0|0,0) = V(0,0|0,1) = V(0,0|1,0) &= E_{0,0}, \\
		V(1,1|0,0) = V(1,1|0,1) = V(1,1|1,0) &= E_{1,1}, \\
		V(0,1|1,1) &= \frac{1}{2} \left( E_{0,0} + E_{0,1} + E_{1,0} + E_{1,1} \right), \\
		V(1,0|1,1) &= \frac{1}{2} \left( E_{0,0} - E_{0,1} - E_{1,0} + E_{1,1} \right),
	\end{aligned}
	\end{equation}
	define 
	\begin{equation} \label{eq:zero-chsh}
		V(a,b|x,y) = \begin{pmatrix} 0 & 0 \\ 0 & 0 \end{pmatrix}
	\end{equation}
for all $a \oplus b \not= x \land y$, and define $\pi(0,0) = \pi(0,1) = \pi(1,0) = \pi(1,1) = 1/4$. 
\end{example}
In the event that $a \oplus b \not= x \land y$, the referee's measurement corresponds to the zero matrix from equation~\eqref{eq:zero-chsh}. If instead it happens that $a \oplus b = x \land y$, the referee then proceeds to measure with respect to one of the measurement operators from equation~\eqref{eq:chsh-meas-ops}. This winning condition is reminiscent of the standard CHSH nonlocal game. For $G_{\CHSH}$, we may again consider the equation~\eqref{eq:npa-sdp-max} where the quantity may be explicitly written as 
\begin{equation}
	\begin{aligned}
		& \frac{1}{4} \left( \biggip{V(0,0|0,0)}{M^{(k)}((0,0),(0,0))} + \biggip{V(1,1|0,0)}{M^{(k)}((0,1),(0,1))} \right) + \\
		& \frac{1}{4} \left( \biggip{V(0,0|0,1)}{M^{(k)}((0,0),(1,0))} + \biggip{V(1,1|0,1)}{M^{(k)}((0,1),(1,1))} \right) + \\
		& \frac{1}{4} \left( \biggip{V(0,0|1,0)}{M^{(k)}((1,0),(0,0))} + \biggip{V(1,1|1,0)}{M^{(k)}((1,1),(0,1))} \right) + \\
		& \frac{1}{4} \left( \biggip{V(0,1|1,1)}{M^{(k)}((1,0),(1,1))} + \biggip{V(1,0|1,1)}{M^{(k)}((1,1),(1,0))} \right),
	\end{aligned}
\end{equation}
for some level $k$. Unlike $G_{\BB84}$, the first level of the extended QC hierarchy is not sufficient for obtaining the standard quantum value of $G_{\CHSH}$. Indeed, running the software listing~\ref{code:first-level-qc-hierarchy-chsh} that implements the first level of the extended QC hierarchy yields a value of $\approx 0.7578$, while the non-signaling value of this game yields a value of $3/4$ (refer to software listing~\ref{code:ns-val-chsh-enlg}) as does the lower bound on the standard quantum value of $G_{\CHSH}$. The lower bound technique will be elaborated on further in Section~\ref{sec:lower-bound-extended-nonlocal-games}.

\section{Lower bounds for extended nonlocal games: the see-saw method}
\label{sec:lower-bound-extended-nonlocal-games}

Our lower bound heuristic for the class of extended nonlocal games is based on the work of Liang and Doherty~\cite{Liang2007}, where they provide a lower bound technique for Bell inequalities, or equivalently, nonlocal games. The primary idea of their algorithm is to note that fixing measurements on one system yields the optimal measurements of the other system via an SDP. The algorithm proceeds in an iterative manner between two SDPs. In the first SDP, we assume that Bob's measurements are fixed, and Alice's measurements are to be optimized over. In the second SDP, we take Alice's optimized measurements from the first SDP and now optimize over Bob's measurements. This method is repeated until the quantum value reaches a desired numerical precision. This ``see-saw'' type iteration was done in~\cite{Werner2001} by Werner and Wolf and served as a basis of inspiration for Liang and Doherty's method. It is also worthwhile to mention that in~\cite{Ito2006} the authors showed concurrently with~\cite{Liang2007} that there exists an SDP that achieves a lower bound on the quantum value for a nonlocal game. 

We must slightly adapt the Liang and Doherty lower bound algorithm to take into account the actions of the referee for our extended nonlocal game. In our scenario, we shall represent Alice's actions in terms of the residual states acting on the referee and Bob as the set $\{ \rho_a^x : x \in \SigmaA, \ a \in \GammaA \} \subset \Pos(\R \otimes \B)$ where 
\begin{align}
	\rho_a^x = \tr_{\A} \left( \left( \I_{\R} \otimes A_a^x \otimes \I_{\B} \right) \rho \right) \in \Pos(\R \otimes \B).
\end{align}   
It is then necessary and sufficient that $\sum_{a \in \GammaA} \rho_a^x = \tau \in \Density(\R \otimes \B)$ for all $x \in \SigmaA$. It then holds that we can express the probability that Alice and Bob's standard quantum strategy wins in as 
\begin{align} \label{eq:lower-bound-objective-function}
	\sum_{(x,y) \in \Sigma} \pi(x,y) \sum_{(a,b) \in \Gamma} \bigip{V(a,b|x,y) \otimes B_b^y}{\rho_a^x}.
\end{align}
Writing the above conditions in terms of an SDP, we have that 
\begin{center}
		\centerline{\underline{Lower bound SDP-1}}\vspace{-7mm}
		\begin{equation} \label{sdp:lower-bound-SDP1}		
  		\begin{split}
  			\text{maximize:}\quad & \sum_{(x,y) \in \Sigma} \pi(x,y) \sum_{(a,b) \in \Gamma} \bigip{V(a,b|x,y) \otimes B_b^y}{\rho_a^x} \\
      \text{subject to:}\quad & \sum_{a \in \GammaA} \rho_a^x = \tau, \qquad \quad \ \forall x \in \SigmaA, \\
      & \rho_a^x \in \Pos(\R \otimes \B), \quad \forall x \in \SigmaA, \ \forall a \in \GammaA, \\
      & \tau \in \Density(\R \otimes \B),
  		\end{split}
		\end{equation}
\end{center}
where the measurements of the referee represented as the collection $\{V(a,b|x,y)\}$ as well as the sets of measurements for Bob, represented as the collection $\{B_b^y\}$ are fixed where the collection of operators $\{\rho_a^x\}$ are the variables that we wish to optimize. Now we consider the second SDP. First, observe that we can write equation~\eqref{eq:lower-bound-objective-function} as 
\begin{align}
	\sum_{(x,y) \in \Sigma} \pi(x,y) \sum_{(a,b) \in \Gamma} \bigip{ \Phi(B_b^y) }{\rho_a^x},
\end{align}
where the mapping $\Phi \in \Trans(\B, \R \otimes \B)$ is defined as 
\begin{align}
	\Phi(B_b^y) = \pi(x,y) V(a,b|x,y) \otimes B_b^y.
\end{align}
We calculate the unique adjoint mapping $\Phi^* \in \Trans(\R \otimes \B, \B)$, as 
\begin{align}
	\Phi^*(\rho_a^x) = \tr_{\R} \left( \left( V(a,b|x,y)^* \otimes \I_{\B} \right) \rho_a^x \right),
\end{align}
which may be verified from 
\begin{equation}
	\begin{aligned}
		\bigip{\Phi(B_b^y)}{\rho_a^x} &= \bigip{V(a,b|x,y) \otimes B_b^y}{\rho_a^x} \\
		&= \tr \left( \left( V(a,b|x,y)^* \otimes (B_b^y)^* \right) \rho_a^x \right) \\
		&= \tr \left( \left( \I_{\B} \otimes (B_b^y)^* \right) \left( V(a,b|x,y)^* \otimes \I_{\B} \right) \rho_a^x \right) \\
		&= \tr \left( (B_b^y)^* \tr_{\R} \left( V(a,b|x,y)^* \otimes \I_{\B} \right) \rho_a^x \right) \\
		&= \bigip{B_b^y}{\tr_{\R}\left( \left(V(a,b|x,y)^* \otimes \I_{\B} \right) \rho_a^x \right) }.
	\end{aligned}
\end{equation}

From this, we can define the second SDP.
\begin{center}
		\centerline{\underline{Lower bound SDP-2}}\vspace{-7mm}
		\begin{equation} \label{sdp:lower-bound-SDP2}
  		\begin{split}
      \text{maximize:}\quad & \sum_{(x,y) \in \Sigma} \pi(x,y) \sum_{(a,b) \in \Gamma} \bigip{B_b^y}{\Phi^*(\rho_a^x)} \\
      \text{subject to:}\quad & \sum_{b \in \GammaB} B_b^y = \I_{\B}, \qquad \ \forall y \in \SigmaB, \\
      & B_b^y \in \Pos(\B), \qquad \ \ \forall y \in \SigmaB, \ b \in \GammaB. 
  		\end{split}
		\end{equation}
\end{center} 
While the optimization procedure is not guaranteed to converge to the actual quantum value, we can perform our extended QC hierarchy to check if the lower and upper bounds are in agreement to determine optimality. If this is indeed the case, we can extract the explicit strategy that Alice and Bob perform via this lower bound method.


Bob's measurements are given directly from the formulation of the SDP, while Alice's measurements may be obtained by the following equation
\begin{align} \label{eq:extract-alice-meas-ops-lb}
	A_a^x = \tau^{-1/2} \tr_{\B}(\rho_a^x) \tau^{-1/2}
\end{align} 
for all $a \in \GammaA$ and $x \in \SigmaA$. 

\subsection{Examples: Lower-bounding the standard quantum values of extended nonlocal games} \label{sec:examples-lower-bounds}

\subsubsection*{The BB84 extended nonlocal game}

We revisit $G_{\BB84}$, the BB84 extended nonlocal game considered in Section~\ref{sec:examples-upper-bounds-extended-npa}. We observed that for $k = 1$, the extended QC hierarchy gave us a value of $\cos^2(\pi/8)$. We also claimed that this value does indeed correspond to the standard quantum value of the game, and going any higher in the hierarchy would not yield any better approximations to this value. In this example, we shall compute the lower bound of $G_{\BB84}$ and verify that the lower and upper bounds agree, which confirms that computing for any higher levels of $k$ in the extended QC hierarchy would not be useful.

The software listing~\ref{code:bb84-enlg-lower-bound} in Appendix~\ref{chap:AppendixA} computes the lower bound of $G_{\BB84}$ using the two semidefinite programs from equations~\eqref{sdp:lower-bound-SDP1} and~\eqref{sdp:lower-bound-SDP2}. In the first semidefinite program, we are given the measurements that the referee uses (as defined in example~\ref{ex:bb84-monogamy-game}) as well as some collection of measurements for Bob. When we start the see-saw algorithm, we simply generate random unitaries of appropriate dimension to represent the measurements that Bob may use. These measurement operators will change as we go back and forth between the semidefinite programs. The variable that we are optimizing with respect to is $\rho_a^x$, which represents Alice's actions. 

We then plug in the variables that we obtain from the first semidefinite program into the second semidefinite program. We repeat this process until the desired threshold is reached. In this example, the value of $\cos^2(\pi/8)$ is obtained almost immediately, and therefore allows one to conclude that since the upper and lower bounds are in agreement, that $\omega^*(G_{\BB84}) = \cos^2(\pi/8)$. Furthermore using equation~\eqref{eq:extract-alice-meas-ops-lb}, we can also obtain the strategy that Alice uses to obtain this value, where the measurement operators of Alice are given explicitly as 
\begin{equation}
	\begin{aligned}
		A_0^0 = A_1^1 &= \begin{pmatrix}
							\cos^2(\pi/8) & -\sin(\pi/8)\cos(\pi/8) \\
							-\sin(\pi/8)\cos(\pi/8) & \sin^2(\pi/8)
   						 \end{pmatrix}, \\
		A_1^0 = A_0^1 &= \begin{pmatrix}		
							\sin^2(\pi/8) & \sin(\pi/8)\cos(\pi/8) \\
							\sin(\pi/8)\cos(\pi/8) & \cos^2(\pi/8)
						 \end{pmatrix}.
	\end{aligned}
\end{equation}
In this particular example, Bob's measurement operators can take the form of any valid measurement operators. 

\chapter{Monogamy-of-Entanglement Games}
\label{chap:monogamy_games}
In this chapter, we shall consider a particular type of extended nonlocal game referred to as a \emph{monogamy-of-entanglement game}, which was initially introduced in~\cite{Tomamichel2013}. 

In Section~\ref{sec:monogamy-of-entanglement-games}, we formally present this model and prove a number of properties about this class of game. In particular, we will study the relationship between the standard quantum and unentangled strategies of certain monogamy-of-entanglement games. In Section~\ref{sec:parallel-rep-moe-games} we will study the parallel repetition of monogamy-of-entanglement games, and in Section~\ref{sec:upper-and-lower-bounds-moe-games} we will present an example of a monogamy-of-entanglement game that Alice and Bob win with higher probability in the event that they use a standard quantum strategy in place of an unentangled strategy. 

This chapter is based on joint work with Nathaniel Johnston, Rajat Mittal, and John Watrous~\cite{Johnston2015a}.

\minitoc

\section{Monogamy-of-entanglement games}
\label{sec:monogamy-of-entanglement-games}

\index{monogamy-of-entanglement game}{\emph{Monogamy-of-entanglement games}} are a special type of extended nonlocal game, and were originally introduced and studied by Tomamichel, Fehr, Kaniewski, and Wehner~\cite{Tomamichel2013}. Monogamy-of-entanglement games received their namesake as they serve as a framework to conceptualize the fundamental monogamy property exhibited by entangled qubits~\cite{Coffman2000}. In short, this property states that for three possibly entangled qubits contained in the registers $\reg{X_0}$, $\reg{X_1}$, and $\reg{X_2}$, that if $\reg{X_i}$ and $\reg{X_j}$ are maximally entangled, then $\reg{X_k}$ is completely unentangled with qubits $\reg{X_i}$ and $\reg{X_j}$ for $i \not= j \not= k$ where $i,j,k \in \{0,1,2\}$. This phenomena has been studied in a number of other works~\cite{Terhal2001, Terhal2004, Koashi2004, Osborne2006}.

The manner in which a monogamy-of-entanglement game proceeds is similar to an extended nonlocal game. After Alice and Bob supply the referee with a quantum system, we now assume that the referee selects a single question at random, and sends this same question to both Alice and Bob. The winning condition of a monogamy-of-entanglement game is predicated on the ability for Alice and Bob to respond with the same answer, and that this answer must agree with the measurement outcome of the referee.

More formally, we specify a monogamy-of-entanglement game as $G = (\pi,R)$ where $\pi : \Sigma \rightarrow \left[0,1\right]$ is a probability distribution defined over an alphabet $\Sigma$ and where $R$ is a function of the form $R : \Gamma \times \Sigma \rightarrow \Pos(\R)$ where $\R = \complex^m$ is a complex Euclidean space of dimension $m$ belonging to the referee and where $\Gamma$ is an alphabet. The function $R$ corresponds to a collection of measurement operators for the referee where $R(a|x)$ is the measurement that corresponds to question $x \in \Sigma$ and answer $a \in \Gamma$. The function $R$ must satisfy
\begin{align}
	\sum_{a \in \Gamma} R(a|x) = \I_{\R}
\end{align} 
for every $x \in \Sigma$.
 
A monogamy-of-entanglement game closely follows the way in which an extended nonlocal game is played. First, Alice and Bob prepare a state $\sigma \in \Density(\U \otimes \R \otimes \V)$ and share it with the referee. The referee then selects a single question $x \in \Sigma$ according to the probability distribution $\pi$, and sends $x$ to both Alice and Bob. Alice and Bob then produce and send respective responses $a$ and $b$ to the referee. When the referee receives $a$ and $b$, it performs a measurement $\{R(c|x) : c \in \Gamma\}$ on its portion of the shared state, yielding some measurement outcome. The game is won if and only if the measurement outcomes $a$ and $b$ produced by Alice and Bob agree with the outcome of the referee's measurement. A monogamy-of-entanglement game is depicted in Figure~\ref{fig:monogamy-game}. 
\begin{figure}[!htpb] 
	\begin{center}
		\includegraphics[scale=0.9]{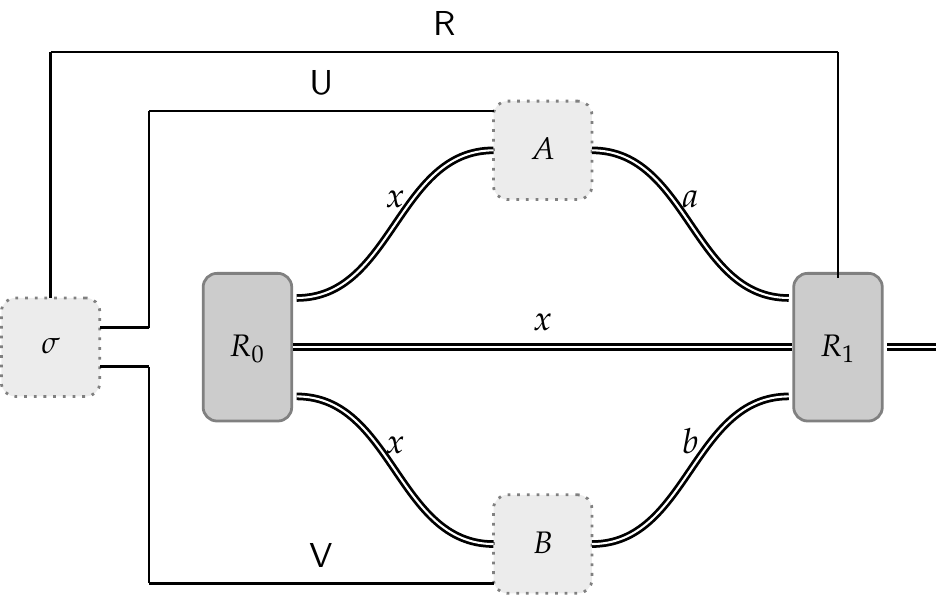}
	\end{center}
		\caption[A monogamy-of-entanglement game.]{A monogamy-of-entanglement game. The state $\sigma \in \Density(\U \otimes \R \otimes \V)$ contained in registers $(\reg{U},\reg{R},\reg{V})$ is prepared by Alice and Bob, where $\reg{R}$ is sent to the referee and $\reg{U}$ belongs to Alice and $\reg{V}$ belongs to Bob. The referee selects question $x$ according to the $\pi$ distribution, and sends $x$ to both Alice and Bob. Alice and Bob then generate and send answers $a$ and $b$ to the referee. Alice and Bob win if and only if all measurement outcomes agree. }
		\label{fig:monogamy-game}
\end{figure}

\subsection{Strategies and values of monogamy-of-entanglement games}
Since monogamy-of-entanglement games are a type of extended nonlocal game, one may also define unentangled strategies, standard quantum strategies, commuting measurement strategies, and non-signaling strategies as considered in Chapter~\ref{chap:extended_nonlocal_games} in a similar manner. We shall define some of these strategies and their corresponding values for the case of monogamy-of-entanglement games explicitly. 

For instance, a \index{standard quantum strategy (monogamy-of-entanglement game)}{\emph{standard quantum strategy}} for a monogamy-of-entanglement game consists of finite-dimensional complex Euclidean spaces $\U$ for Alice and $\V$ for Bob, a quantum state $\sigma \in \Density(\U \otimes \R \otimes \V)$, and two collections of measurements 
\begin{align}
	\{ A_a^x : a \in \Gamma \} \subset \Pos(\U) \quad \textnormal{and} \quad \{ B_a^x : a \in \Gamma \} \subset \Pos(\V),
\end{align}
for each $x \in \Sigma$. The measurement operators satisfy the constraint that 
\begin{align}
	\sum_{a \in \Gamma} A_a^x = \I_{\U} \quad \textnormal{and} \quad \sum_{a \in \Gamma} B_a^x = \I_{\V},
\end{align}
for each $x \in \Sigma$. 
For a monogamy-of-entanglement game, the winning probability for Alice and Bob when they use a standard quantum strategy is given by 
\begin{align}
	\sum_{x \in \Sigma} \pi(x) \sum_{a \in \Gamma} \biggip{A_a^x \otimes R(a|x) \otimes B_a^x}{\sigma}.
\end{align}
In fact, we may simplify the above expression slightly. Recall from Section~\ref{sec:standard-quantum-strategies-extended-nonlocal-games} that we may assume that $\sigma \in \Density(\U \otimes \R \otimes \V)$ is pure for any extended nonlocal game, and the measurements of the referee are positive semidefinite, it follows by convexity that we may write the winning probability for Alice and Bob when they use a standard quantum strategy as 
\begin{align} \label{eq:monogamy-ent-val}
	\biggnorm{\sum_{x \in \Sigma} \pi(x) \sum_{a \in \Gamma} A_a^x \otimes R(a|x) \otimes B_a^x }.
\end{align}
For a given monogamy-of-entanglement game $G = (\pi,R)$, we write $\omega^*(G)$ to denote the standard quantum value of $G$, which is the supremum winning value of Alice and Bob's winning probability over all standard quantum strategies for $G$. 

An \index{unentangled strategy (monogamy-of-entanglement game)}{\emph{unentangled strategy}} for a monogamy-of-entanglement game is simply a standard quantum strategy for which the state $\sigma \in \Density(\U \otimes \R \otimes \V)$ initially prepared by Alice and Bob is fully separable. The unentangled value of a monogamy-of-entanglement game, $G$, can be directly derived from the unentangled value of an extended nonlocal game from equation~\eqref{eq:enlg-unentangled-value} as 
\begin{align} \label{eq:monogamy-unentangled-value}
	\omega(G) = \max_{f : \Sigma \rightarrow \Gamma} \biggnorm{\sum_{x \in \Sigma} \pi(x) R(f(x)|x) },
\end{align}
noting again that Alice and Bob only win in a monogamy-of-entanglement game when their measurement outcomes agree with the measurement outcome of the referee. 

A \index{non-signaling strategy (monogamy-of-entanglement game)}{\emph{non-signaling strategy}} for a monogamy-of-entanglement game consists of a non-signaling assemblage $K : \Gamma \times \Sigma \rightarrow \Pos(\R)$ such that 
\begin{align}
	\sum_{a \in \Gamma} K(a,b|x,y) = \xi_b^y \quad \textnormal{and} \quad \sum_{b \in \Gamma} K(a,b|x,y) = \rho_a^x,
\end{align}
for all $x \in \Sigma$ and $y \in \Sigma$ where $\{\xi_b^y : y \in \Sigma, \ b \in \Gamma \}$ and $\{ \rho_a^x : x \in \Sigma, \ a \in \Gamma \}$ are collections of operators satisfying 
\begin{align}
	\sum_{a \in \Gamma} \rho_a^x = \tau = \sum_{b \in \Gamma} \xi_b^y,
\end{align}
for all $x \in \Sigma$ and $y \in \Sigma$ and where $\tau \in \Density(\R)$ is a density operator. For any monogamy-of-entanglement game the winning probability when Alice and Bob use a non-signaling strategy is given by 
\begin{align}
	\sum_{x \in \Sigma} \pi(x) \sum_{a \in \Gamma} \biggip{R(a|x)}{K(a,a|x,x)}.
\end{align}
For a monogamy-of-entanglement game, $G$, the non-signaling value, $\omega_{\ns}(G)$ is the supremum value of the winning probability of $G$ taken over all non-signaling strategies for Alice and Bob.

\subsection{The BB84 monogamy-of-entanglement game} \label{sec:the-bb84-monogamy-of-entanglement-game}

In the following example, we shall consider one type of monogamy-of-entanglement game referred to as the \index{BB84 monogamy-of-entanglement game}{\emph{BB84 monogamy-of-entanglement game}}, denoted as $G_{\BB84}$ for short. As we shall see, the name of the game comes from the sets of measurements that the referee uses, which are defined from the BB84 measurement operators~\cite{Bennett1984}. This game was initially introduced and studied in~\cite{Tomamichel2013}. Note that we also already previously considered this game in Chapter~\ref{chap:extended_npa_hierarchy} when we looked at the examples found in Sections~\ref{sec:examples-upper-bounds-extended-npa} and~\ref{sec:examples-lower-bounds}.
\begin{example}[BB84 monogamy-of-entanglement game~\cite{Tomamichel2013}]\label{ex:bb84-monogamy-game}
	Let $\Sigma = \Gamma = \{0,1\}$, define 
		\begin{equation} \label{eq:bb84-meas-ops}
	\begin{aligned}
		R(0|0) &= E_{0,0} = \begin{pmatrix} 1 & 0 \\ 0 & 0 \end{pmatrix}, \\
		R(1|0) &= E_{1,1} = \begin{pmatrix} 0 & 0 \\ 0 & 1 \end{pmatrix}, \\
		R(0|1) &= \frac{1}{2}\left( E_{0,0} + E_{0,1} + E_{1,0} + E_{1,1} \right) = \begin{pmatrix} \frac{1}{2} & \frac{1}{2} \\ \frac{1}{2} & \frac{1}{2} \end{pmatrix} , \\
		R(1|1) &= \frac{1}{2}\left( E_{0,0} - E_{0,1} - E_{1,0} + E_{1,1} \right) = \begin{pmatrix} \frac{1}{2} & -\frac{1}{2} \\ -\frac{1}{2} & \frac{1}{2} \end{pmatrix},
	\end{aligned}
	\end{equation}
	and define $\pi(0) = \pi(1) = 1/2$. Then the BB84 monogamy-of-entanglement game, denoted as $G_{\BB84}$, is specified by $G_{\BB84} = (\pi,R)$. 
\end{example}
In~\cite{Tomamichel2013}, the authors also showed that even if Alice and Bob adopt a standard quantum strategy for $G_{\BB84}$, they will perform \emph{no better} than had they simply used an unentangled strategy, 
\begin{align}
	\omega(G_{\BB84}) = \omega^*(G_{\BB84}) = \cos^2(\pi/8) \approx 0.8536.
\end{align}  
That is to say, Alice and Bob gain no advantage in sharing entanglement with the referee. Recall that in Sections~\ref{sec:examples-upper-bounds-extended-npa} and~\ref{sec:examples-lower-bounds}, we computed the lower and upper bound on the standard quantum value of $G_{\BB84}$ and found that both values agree and are equal to $\cos^2(\pi/8)$. 

\subsection{Comparing standard quantum and unentangled strategies for monogamy-of-entanglement games} \label{sec:monog-classical-quantum}

Recall from Section~\ref{sec:the-bb84-monogamy-of-entanglement-game} that $\omega(G_{\BB84}) = \omega^*(G_{\BB84})$, meaning that it makes no difference whether Alice and Bob adopt a standard quantum or unentangled strategy for $G_{\BB84}$, as they will win with the same probability either way. A natural question then is whether this behavior persists in general for the class of monogamy-of-entanglement games. Specifically, is it the case that for any monogamy-of-entanglement game, $G$, that 
\begin{align}
	\omega(G) = \omega^*(G)?
\end{align}

In this section, we shall show that for any monogamy-of-entanglement game where the size of the question set is two and the size of the answer set is arbitrary, the standard quantum and unentangled values are indeed equal. However, in Section~\ref{sec:MUB-4-3}, we shall show that this behavior is \emph{not true} for the entire class of monogamy-of-entanglement games and present an explicit example of a monogamy-of-entanglement game that yields a strictly higher standard quantum value than unentangled value.  

\begin{theorem}
	Let $G$ be any monogamy-of-entanglement game for which the question set $\Sigma$ satisfies $\abs{\Sigma}=2$. It holds that 
	\begin{align}
		\omega(G) = \omega^*(G).
	\end{align}
\end{theorem}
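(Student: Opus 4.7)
The direction $\omega(G) \le \omega^*(G)$ is immediate from the definitions, so the focus of the plan is on $\omega^*(G) \le \omega(G)$. First, by Naimark's theorem, Alice's and Bob's measurements $\{A_a^x\}$ and $\{B_a^x\}$ may be taken to be projective, and by convexity the shared state $\sigma$ may be taken to be pure. Under these reductions, equation~\eqref{eq:monogamy-ent-val} gives $\omega^*(G) = \norm{T}$ with
\[
T \;=\; \pi(0)\sum_{a\in\Gamma} A_a^0 \otimes R(a|0) \otimes B_a^0 \;+\; \pi(1) \sum_{a\in\Gamma} A_a^1 \otimes R(a|1) \otimes B_a^1,
\]
while parametrizing $f \colon \{0,1\} \to \Gamma$ by $(a_0, a_1) = (f(0), f(1))$ rewrites equation~\eqref{eq:monogamy-unentangled-value} as $\omega(G) = \max_{a_0, a_1} \norm{\pi(0) R(a_0|0) + \pi(1) R(a_1|1)}$.

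The plan is then to use the spectral identity $\norm{T}^2 = \norm{T^\ast T}$ and expand $T^\ast T$. Because $|\Sigma|=2$, this expansion contains only four summands indexed by $(x,x') \in \{0,1\}^2$. The two diagonal summands simplify via the projective identities $A_a^x A_{a'}^x = \delta_{a,a'} A_a^x$ and $B_a^x B_{a'}^x = \delta_{a,a'} B_a^x$ into positive semidefinite operators of the form $\pi(x)^2 \sum_a A_a^x \otimes R(a|x)^2 \otimes B_a^x$, each dominated by a corresponding term of $T$. The two off-diagonal summands involve cross products $A_a^0 A_{a'}^1$, $R(a|0) R(a'|1)$, and $B_a^0 B_{a'}^1$ indexed by pairs $(a, a') \in \Gamma \times \Gamma$. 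Crucially, this indexing is in precise correspondence with the pairs $(a_0, a_1)$ that parametrize deterministic classical strategies, and this alignment is what enables a tight bound: the aim is to show, via a suitable operator inequality, that $\norm{T^\ast T} \le \max_{a_0, a_1} \norm{\pi(0)R(a_0|0) + \pi(1)R(a_1|1)}^2 = \omega(G)^2$, from which the desired inequality will follow.

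The main obstacle will be carrying out that operator inequality, since $\{B_a^0\}$ and $\{B_a^1\}$ are generally non-commuting projective measurements on $\V$ and admit no literal simultaneous block-diagonalization. Overcoming this requires an argument that specifically exploits having exactly two PVMs on each party's register---for instance, a Jordan-lemma-type decomposition or, equivalently, an application of the dual of the first level of the extended QC hierarchy from Chapter~\ref{chap:extended_npa_hierarchy} specialized to $|\Sigma|=2$, in which the dual variables naturally index over pairs $(a_0, a_1)$. The hypothesis $|\Sigma|=2$ is essential at this step: with three or more questions the cross-term indexing would grow beyond pairs $(a_0, a_1)$, and no analogous parametrization in terms of deterministic classical strategies would remain available. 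This is consistent with the strict separation between $\omega$ and $\omega^*$ exhibited later in this chapter for a monogamy-of-entanglement game with $|\Sigma|=4$.
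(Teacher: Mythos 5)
Your plan correctly handles the easy direction and the standard reductions (Naimark, purity), and it correctly identifies that the heart of the matter is an operator-norm bound whose cross terms are indexed by pairs $(a_0,a_1)$ matching deterministic strategies. However, the central step is missing: you expand $T^{\ast}T$, arrive at off-diagonal summands containing products $A_a^0 A_{a'}^1$, $R(a|0)R(a'|1)$, $B_a^0 B_{a'}^1$ of generally non-commuting operators, and then explicitly defer the required operator inequality to an unspecified ``Jordan-lemma-type decomposition'' or an appeal to SDP duality. That is precisely the hard part of the argument, and as stated it is not carried out; the proposal is therefore a proof sketch with a genuine gap rather than a proof.

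The paper's proof avoids this obstacle entirely by never forming $T^{\ast}T$. Writing $T = \lambda \sum_a A_a^0 \otimes R(a|0)\otimes B_a^0 + (1-\lambda)\sum_b A_b^1 \otimes R(b|1) \otimes B_b^1$, one uses that $T$ is positive semidefinite together with the monotonicity $P \leq Q \Rightarrow \norm{P} \leq \norm{Q}$ for positive semidefinite operators: since $B_a^0 \leq \I_{\V}$ and $A_b^1 \leq \I_{\U}$, one may replace $B_a^0$ by $\I_{\V}$ in the first sum and $A_b^1$ by $\I_{\U}$ in the second, and then re-expand $\I_{\V} = \sum_b B_b^1$ and $\I_{\U} = \sum_a A_a^0$. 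This yields
\begin{align*}
\omega^*(G) \leq \Bignorm{\sum_{a,b} A_a^0 \otimes \bigl(\lambda R(a|0) + (1-\lambda)R(b|1)\bigr)\otimes B_b^1},
\end{align*}
and because $\{A_a^0 \otimes B_b^1\}_{a,b}$ are pairwise orthogonal projections acting on distinct tensor factors, the norm of this block-diagonal sum is $\max_{a,b}\norm{\lambda R(a|0)+(1-\lambda)R(b|1)} = \omega(G)$. No products of non-commuting measurements from the same party ever appear, which is exactly what your route would need to control. If you wish to salvage your approach, you would need to actually prove the bound on the off-diagonal terms of $T^{\ast}T$ (e.g., via a Jordan decomposition of the two PVMs on each of $\U$ and $\V$), but the monotonicity-and-resubstitution argument is both shorter and already tight.
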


\begin{proof}
	It is evident that $\omega(G) \leq \omega^*(G)$, as this is the case for every extended nonlocal game (and therefore every monogamy-of-entanglement game), so it remains to prove the reverse inequality. 
	Assume without loss of generality that $\Sigma = \{0,1\}$, and that $G = (\pi,R)$ for $\pi(0) = \lambda$ and $\pi(1) = 1 - \lambda$. Consider any choice of projective measurements 
	\begin{align}
		\left \{ A_a^0 : a \in \Gamma \right \} \quad \textnormal{and} \quad \left \{ A_a^1 : a \in \Gamma  \right \}
	\end{align}
on $\U$ for Alice and 
	\begin{align}
		\left \{ B_a^0 : a \in \Gamma \right \} \quad \textnormal{and} \quad \left \{ B_a^1 : a \in \Gamma \right \}
	\end{align}
on $\V$ for Bob. First, note that for an optimal choice of the initial state, we can write the standard quantum value of $G$ in terms of the following equation 
	\begin{align} \label{eq:classical-quantum-monogamy}
		\omega^*(G) = \biggnorm{\lambda \sum_{a \in \Gamma} A_a^0 \otimes R(a|0) \otimes B_a^0 + (1-\lambda) \sum_{b \in \Gamma}  A_b^1 \otimes R(b|1) \otimes B_b^1 }.
	\end{align}
Note that the operator inside the norm of equation~\eqref{eq:classical-quantum-monogamy} is positive semidefinite since all of the measurement operators are also positive semidefinite. 

Recall that for positive semidefinite operators $P \in \Pos(\U)$ and $Q \in \Pos(\V)$ that if $P \leq Q$ then it implies that $\norm{P} \leq \norm{Q}$. To observe this fact, note that for a positive semidefinite operator, $X$, the spectral norm yields the largest eigenvalue of that operator. An equivalent way to state that $X$ is positive semidefinite is to say that $X$ is Hermitian with nonnegative eigenvalues. 


We can, therefore, upper bound $\omega^*(G)$ in the following way 
\begin{align} \label{eq:classical-quantum-monogamy-identity}
	\omega^*(G) \leq \biggnorm{\lambda \sum_{a \in \Gamma} A_a^0 \otimes  R(a|0) \otimes \I_{\V} + (1 - \lambda) \sum_{b \in \Gamma} \I_{\U} \otimes R(b|1) \otimes B_b^1}.
\end{align}
Since we enforce that the operators $A_a^x$ and $B_b^y$ are valid measurement operators, it holds that 
\begin{align}
	\sum_{a \in \Gamma} A_a^x = \I_{\U} \quad \textnormal{and} \quad \sum_{b \in \Gamma} B_b^y = \I_{\V}
\end{align}
for all $x \in \Sigma$ and $y \in \Sigma$. Let us now replace the identity operators from equation~\eqref{eq:classical-quantum-monogamy-identity} with these sums to obtain
\begin{align}
	\omega^*(G) \leq \biggnorm{\lambda \sum_{a,b \in \Gamma} A_a^0 \otimes R(a|0) \otimes B_b^1 + (1 - \lambda) \sum_{a,b \in \Gamma} A_a^0 \otimes R(b|1) \otimes B_b^1}.
\end{align}
Since $\{A_a^0 \otimes B_b^1 : a,b \in \Gamma\}$ are pairwise orthogonal projections, i.e. that $\ip{A_a^0 \otimes B_b^1}{A_{a^{\prime}}^0 \otimes B_{b^{\prime}}^1} = 0$ for $a \not= a^{\prime}$ and $b \not= b^{\prime}$ and also that it holds that 
\begin{align}
	\biggnorm{\sum_{k} A_k \otimes \Pi_k} = \max_k \norm{A_k}
\end{align}
for a projective measurement $\{\Pi_k\}$, we have that 
\begin{align}
	\biggnorm{\sum_{(a,b) \in \Gamma} A_a^0 \otimes \left( \lambda R(a|0) + (1-\lambda) R(b|1) \right) \otimes B_b^1 } \leq \max_{a,b \in \Gamma} \biggnorm{ \lambda R(a|0) + (1-\lambda)R(b|1) }.
\end{align}
It follows from equation~\eqref{eq:monogamy-unentangled-value} that 
\begin{align}
	\omega(G) = \max_{a,b \in \Gamma} \biggnorm{ \lambda R(a|0) + (1-\lambda) R(b|1) }.
\end{align}
Therefore $\omega^*(G) \leq \omega(G)$.

\end{proof}

\section{Parallel repetition of monogamy-of-entanglement games} \label{sec:parallel-rep-moe-games}

For an integer $r \geq 1$ and some monogamy-of-entanglement game, $G$, the \index{parallel repetition}{\emph{$r$-fold parallel repetition}} of a monogamy-of-entanglement game is when Alice and Bob play $r$ copies of $G$, denoted as $G^r$, wherein the referee gives the players $r$ independent and identically distributed pairs of questions simultaneously and expects a response from Alice and Bob for each instance. The referee accepts if and only if all of the $r$ responses satisfy the criteria for the initial game, and rejects otherwise. The parallel repetition of a monogamy-of-entanglement game is depicted in Figure~\ref{fig:parallel-rep-monogamy-game}. 

Define the complex Euclidean spaces $\R_1, \ldots, \R_r$ and define alphabets 
\begin{align}
	\Sigma = \Sigma_1 \times \cdots \times \Sigma_r \quad \textnormal{and} \quad \Gamma = \Gamma_1 \times \cdots \times \Gamma_r
\end{align}
such that $x_1 \in \Sigma_1, \ldots, x_r \in \Sigma_r$ are selected from $\Sigma$ according to 
\begin{align}
	\pi^k : \Sigma_1 \times \cdots \times \Sigma_k \rightarrow [0,1]
\end{align}
where $\pi^k(x_1, \ldots, x_r) = \pi(x_1) \cdots \pi(x_r)$. Then the $r$-fold parallel repetition of $G$ starts off with the referee accepting $r$ registers $\reg{R}_1, \ldots, \reg{R}_r$ from Alice and Bob where the contents of the registers correspond to the state 
\begin{align}
	\sigma \in \Density(\U \otimes \R_1 \otimes \cdots \otimes \R_r \otimes \V),
\end{align}
and selecting $r$ questions $x_1 \in \Sigma, \ldots, x_r \in \Sigma$ according to $\pi$. The referee then sends $x_1, \ldots, x_r$ to Alice and Bob. The players return $r$ answers $a_1 \in \Gamma_1, \ldots, a_r \in \Gamma_r$ for each question. The referee then performs a measurement from the set 
\begin{align} \label{eq:monog-parallel-rep-ref-ops}
	\left \{ R(a_1, \ldots, a_r|x_1, \ldots, x_r) = R(a_1|x_1) \otimes \cdots \otimes R(a_r|x_r) : a_i \in \Gamma, \ x_i \in \Sigma \right \}.
\end{align}
Alice and Bob win the parallel repetition of $G$ if and only if their answers win each of the $r$ instances of $G^r$. That is, for a monogamy-of-entanglement game, Alice and Bob win if and only if the outcomes of their measurements in every $r$ instance of the game matches with the referee's measurement outcome for every $r$ games. 

\begin{figure}[!htpb] 
	\begin{center}
		\includegraphics[scale=0.9]{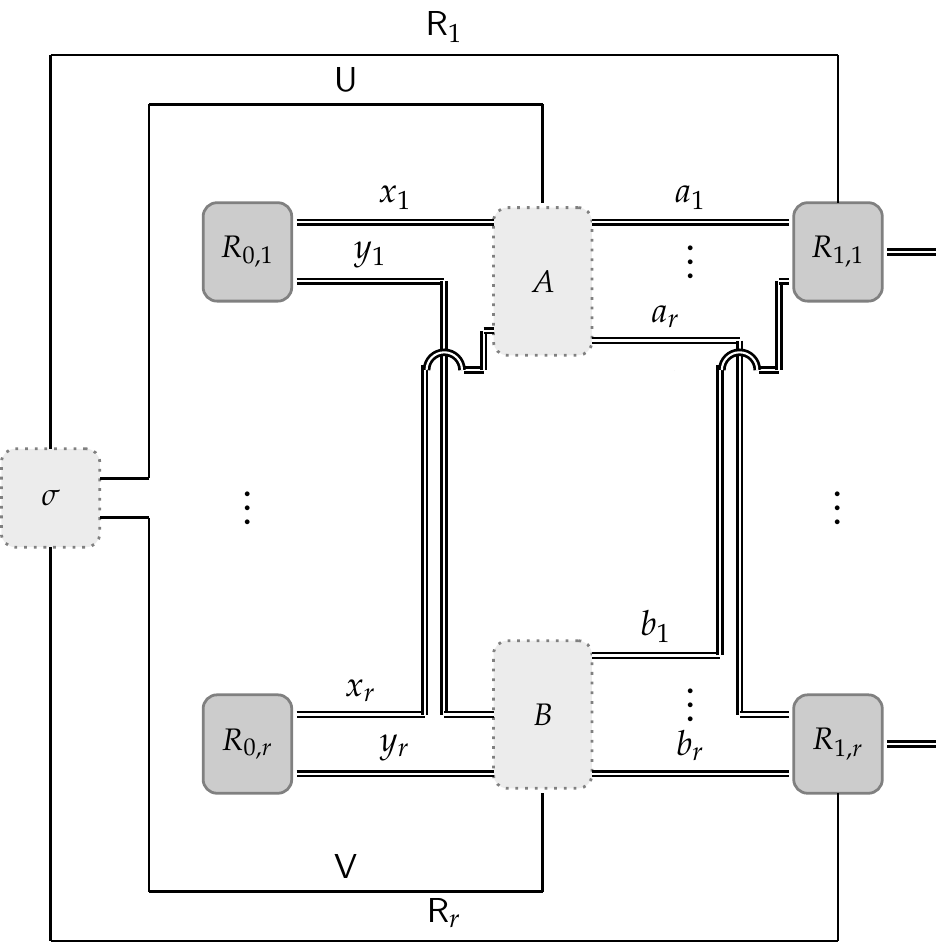}
	\end{center}
		\caption[Parallel repetition of a monogamy-of-entanglement game.]{Parallel repetition of a monogamy-of-entanglement game. Alice and Bob prepare registers $(\reg{R}_1, \ldots, \reg{R}_r)$ and send to the referee. The referee then asks questions $x_1, \ldots, x_r$ to Alice and $y_1, \ldots, y_r$ to Bob. Alice and Bob respond to each question with answers $a_1, \ldots, a_r$ and $b_1, \ldots, b_r$. Once the referee receives all the answers, it performs a measurement. Alice and Bob win the parallel repetition of the monogamy-of-entanglement game if and only if all their answers match the referee's measurement outcome for every $r$ instance.}
		\label{fig:parallel-rep-monogamy-game}
\end{figure}

One may ask how $\omega(G^r)$ depends on $\omega(G)$ and $r$. It is evident that $\omega(G^r) \geq \omega(G)^r$ since Alice and Bob can simply perform the same strategy in each instance. For any game that has the property $\omega(G) = 1$, it holds that $\omega(G^r) = 1$ for any $r$. One may also wish to ask the question of how $\omega(G^r)$ scales in the event that $\omega(G) < 1$. First note that $\omega(G^r) \leq \omega(G)$. This can be seen since, in order for the players to win all instances of the game, they must win the original game, $G$. Note also that $\omega(G)^r \leq \omega(G^r)$ and $\omega(G)^r \leq \omega(G)$. This holds since the players can simply play each game independently with the optimal strategy for the original game. We, therefore, have the following inequality relationship for the parallel repetition of $G$\
\begin{align}
	\omega(G)^r \leq \omega(G^r) \leq \omega(G). 
\end{align} 
It may be tempting to conclude that $\omega(G^r) = \omega(G)^r$ for all games, however this was surprisingly disproven~\cite{Fortnow1990, Feige1991, Verbitsky1996, Feige2002}. Specifically in~\cite{Fortnow1990}, Fortnow introduced a game $G$ for which $\omega(G^2) > \omega(G)^2$. This result was later improved by Feige~\cite{Feige1991}, by exhibiting an example of a game where $\omega(G^2) = \omega(G)$ with $\omega(G) < 1$.  

We say that a game $G$ exhibits the property of \index{strong parallel repetition}{\emph{strong parallel repetition}} if the value of the game raised to the $r$ power is equal to the value of running the game $r$ times. For instance, a monogamy-of-entanglement game, $G$, where the players use a standard quantum strategy satisfies strong parallel repetition if and only if
\begin{align}
	\omega^*(G^r) = \omega^*(G)^r. 
\end{align}
Strong parallel repetition has been also referred to as perfect parallel repetition elsewhere in the literature as in~\cite{Cleve2008}. 

It is a fact proved in~\cite{Tomamichel2013} that the BB84 game, $G_{\BB84}$, exhibits the property of strong parallel repetition,
\begin{align}
	\omega^*(G_{\BB84}^r) = \omega^*(G_{\BB84})^r = \left( \cos^2(\pi/8) \right)^r,
\end{align}
where $r$ is the number of rounds of repetition performed. A natural question for the general class of monogamy-of-entanglement games might be whether this behavior holds for any monogamy-of-entanglement game. In Section~\ref{sec:monog-strong-parallel-rep}, we prove that for any monogamy-of-entanglement game $G = (\pi,R)$ where the set of measurements belonging to the referee, $R$, are projective, the distribution $\pi$ is uniform, the size of the question set is $\abs{\Sigma} = 2$, and the size of the answer set is $\abs{\Gamma} = k$ for some integer $k \geq 1$, then strong parallel repetition holds. 

This result of strong parallel repetition holds for the case when Alice and Bob use either an unentangled or a standard quantum strategy since we know from Section~\ref{sec:monog-classical-quantum} that 
\begin{align}
	\omega(G) = \omega^*(G)
\end{align}
for any monogamy-of-entanglement game $G$, with $\abs{\Sigma} = 2$ and $\abs{\Gamma} = k$ for some integer $k \geq 1$. Specifically, the result that is shown in Section~\ref{sec:monog-strong-parallel-rep} is that 
\begin{align}
	\omega(G^r) = \omega(G)^r \quad \textnormal{and} \quad \omega^*(G^r) = \omega^*(G)^r,
\end{align}
where $r$ is the number of repetitions and $G = (\pi,R)$ is a monogamy-of-entanglement where $\abs{\Sigma} = 2$, $\abs{\Gamma} = k$, $\pi$ is uniform, and $R$ is a collection of projective measurement operators. 

While strong parallel repetition holds for a specific class of monogamy-of-entanglement games when Alice and Bob use either an unentangled or standard quantum strategy, we can ask a similar question when Alice and Bob use a non-signaling strategy instead. As we shall see in Section~\ref{sec:monog-non-signaling}, it turns out that strong parallel repetition does \emph{not} hold in the non-signaling scenario. We shall illustrate this by showing a counter-example to strong parallel repetition by using a non-signaling version of the BB84 monogamy-of-entanglement game and showing that
\begin{align}
	\omega_{\ns}(G_{\BB84}^r) \not= \omega_{\ns}(G_{\BB84})^r.
\end{align}
for $r = 2$. 

\subsection{Strong parallel repetition for certain monogamy-of-entanglement games with two questions} \label{sec:monog-strong-parallel-rep}

We begin this section by recalling a theorem from~\cite{Tomamichel2013}.

\begin{theorem}[Tomamichel, Fehr, Kaniewski, and Wehner (Theorem 4 of~\cite{Tomamichel2013})] \label{thm:parallel-rep-monogamy-bound}
	Let $G = (\pi,R)$ be a monogamy-of-entanglement game for which $\pi$ is uniform over $\Sigma$, define 
	\begin{align} \label{eq:monog-c-g}
		c(G) = \max_{\substack{x,y \in \Sigma \\ x \not= y}} \max_{a,b \in \Gamma} \biggnorm{\sqrt{R(a|x)} \sqrt{R(b|y)}}^2, 
	\end{align}
	and let $G^r$ denote the game played $r$ times in parallel. It holds that 
	\begin{align} \label{eq:monog-tfkw-bound}
		\omega^*(G^r) \leq \left( \frac{1}{\abs{\Sigma}} + \frac{\abs{\Sigma} - 1}{\abs{\Sigma}} \sqrt{c(G)} \right)^r.
	\end{align}
\end{theorem}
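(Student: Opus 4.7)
The plan is to bound $\omega^*(G^r)$ by expressing it as a spectral norm and repeatedly applying the identity $\|M\|^2 = \|M^*M\|$ together with projectivity properties of the players' measurements. Recall from equation~\eqref{eq:monogamy-ent-val} that the winning probability for a standard quantum strategy in $G^r$, after assuming without loss of generality that Alice and Bob use projective measurements and a pure shared state, can be written as
\begin{equation*}
\Bignorm{\tfrac{1}{\abs{\Sigma}^r}\sum_{\vec{x}\in\Sigma^r}\sum_{\vec{a}\in\Gamma^r} A_{\vec{a}}^{\vec{x}} \otimes R(a_1|x_1)\otimes\cdots\otimes R(a_r|x_r) \otimes B_{\vec{a}}^{\vec{x}}},
\end{equation*}
where $\{A_{\vec{a}}^{\vec{x}}\}$ and $\{B_{\vec{a}}^{\vec{x}}\}$ are projective measurements for each $\vec{x}$.

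First I would carry out the analysis for a single round ($r=1$) in order to isolate the key operator inequality that drives the whole argument. Writing $T = \tfrac{1}{\abs{\Sigma}}\sum_x \sum_a A_a^x\otimes R(a|x)\otimes B_a^x$, one has $\omega^*(G) = \|T\|$, and since $T$ is Hermitian and positive semidefinite, $\|T\|^2 = \|T^2\|$. Expanding $T^2$ separates the sum into diagonal terms ($x=y$) and off-diagonal terms ($x\neq y$). The diagonal contribution collapses using $A_a^x A_b^x = \delta_{ab}A_a^x$ and the analogous relation for Bob, yielding $\tfrac{1}{\abs{\Sigma}}T$ after normalization. For each off-diagonal pair $(x,y)$ with $x\neq y$, the contribution is
\begin{equation*}
\tfrac{1}{\abs{\Sigma}^2}\sum_{a,b} A_a^x A_b^y \otimes R(a|x)R(b|y) \otimes B_a^x B_b^y,
\end{equation*}
whose norm I would bound by $c(G)$ by inserting $\sqrt{R(a|x)}\sqrt{R(a|x)}$ and $\sqrt{R(b|y)}\sqrt{R(b|y)}$, then grouping the middle product $\sqrt{R(a|x)}\sqrt{R(b|y)}$ and its adjoint, and using that the outer square roots assemble into operators of norm at most one. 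Collecting everything gives $\|T\|^2 \le \tfrac{1}{\abs{\Sigma}}\|T\| + \tfrac{\abs{\Sigma}-1}{\abs{\Sigma}}c(G)$, which after solving a quadratic inequality recovers the single-round bound of the claim.

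The second stage is to lift this argument to $r$ rounds in parallel. The key observation is that the referee's measurement in $G^r$ is the tensor product as written in equation~\eqref{eq:monog-parallel-rep-ref-ops}, so the quantity $c(G^r)$ factors as $c(G)^r$ over tensor products of square roots. More importantly, I would set up a recursive step: group the $r$ registers as one ``last round'' plus $r-1$ ``earlier rounds'', and apply the same $\|M\|^2 = \|M^*M\|$ trick by summing only over the last round's question, treating Alice's and Bob's operators on the earlier $r-1$ rounds as part of a larger Hermitian operator. The diagonal part of this expansion (same question in the last round) is bounded by $\tfrac{1}{\abs{\Sigma}}\omega^*(G^r)$, and the off-diagonal part by $\tfrac{\abs{\Sigma}-1}{\abs{\Sigma}}\sqrt{c(G)}\,\omega^*(G^{r-1})$, giving a quadratic-type recursion whose solution is exactly the bound~\eqref{eq:monog-tfkw-bound}.

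The main obstacle will be the off-diagonal bound in the recursion: one needs to show that after pulling out the factor $\sqrt{c(G)}$ coming from the last round, the residual operator acting on the first $r-1$ rounds really is controlled by $\omega^*(G^{r-1})$ rather than by some weaker quantity. This requires carefully rearranging Alice's and Bob's measurement operators—using projectivity so that products like $A_{\vec{a}}^{\vec{x}}A_{\vec{b}}^{\vec{y}}$ can be absorbed into suitable marginal operators on the earlier rounds—and showing that the resulting expression admits an interpretation as a valid (possibly sub-normalized) strategy for $G^{r-1}$. Once this structural reduction is in place, the recursion closes and iteration over $r$ rounds yields the desired exponential bound.
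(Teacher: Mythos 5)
The thesis does not actually prove this statement---it is imported verbatim as Theorem 4 of Tomamichel, Fehr, Kaniewski, and Wehner---so your proposal must be measured against their argument, and your route is genuinely different but breaks at the very first step. The single-round inequality you claim, $\norm{T}^2 \leq \tfrac{1}{\abs{\Sigma}}\norm{T} + \tfrac{\abs{\Sigma}-1}{\abs{\Sigma}}\,c(G)$, cannot be correct: its positive root is strictly smaller than $\tfrac{1}{\abs{\Sigma}} + \tfrac{\abs{\Sigma}-1}{\abs{\Sigma}}\sqrt{c(G)}$ whenever $0<c(G)<1$, and for $G_{\BB84}$ (where $c(G)=1/2$) it would force $\omega^*(G_{\BB84}) \leq 0.81$, contradicting the explicit strategy achieving $\cos^2(\pi/8)\approx 0.8536$. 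The error is in the off-diagonal estimate: the cross term for $x\neq y$ is $A^x A^y$ with $A^x = \sum_a A_a^x\otimes R(a|x)\otimes B_a^x$, and the square-root sandwich gives $\norm{A^xA^y}\leq\norm{\sqrt{A^x}\sqrt{A^y}}\leq \sqrt{c(G)}$, not $c(G)$---your insertion of square roots produces exactly one factor of $\norm{\sqrt{R(a|x)}\sqrt{R(b|y)}}\leq\sqrt{c(G)}$, not its square. With the corrected estimate the quadratic $t^2\leq\tfrac{1}{\abs{\Sigma}}t+\tfrac{\abs{\Sigma}-1}{\abs{\Sigma}}\sqrt{c(G)}$ is no longer contradictory, but its positive root is strictly \emph{larger} than the target bound (about $0.895$ for $G_{\BB84}$), so the argument only yields a weaker conclusion. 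The $r$-round recursion inherits this loss, and the step you yourself flag as the main obstacle---controlling the residual operator by $\omega^*(G^{r-1})$---is precisely the part that does not close.

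The proof in~\cite{Tomamichel2013} avoids squaring altogether. Its engine is a norm inequality for sums of positive semidefinite operators: if $\pi_1,\ldots,\pi_N$ are mutually orthogonal permutations of $\{1,\ldots,N\}$ (for each pair $(i,j)$ exactly one $k$ satisfies $\pi_k(i)=j$), then
\begin{align*}
\Bignorm{\sum_{i=1}^N A_i} \leq \sum_{k=1}^N \max_i \bignorm{\sqrt{A_i}\sqrt{A_{\pi_k(i)}}}.
\end{align*}
Applying this with one operator $A_{\vec{x}}$ per question tuple, together with the estimate $\norm{\sqrt{A_{\vec{x}}}\sqrt{A_{\vec{y}}}}\leq \sqrt{c(G)}^{\,d}$ where $d$ is the number of coordinates in which $\vec{x}$ and $\vec{y}$ differ (this is where projectivity of Alice's and Bob's measurements is used), and choosing the permutations so that these distances are distributed binomially, yields $\bigl(\tfrac{1}{\abs{\Sigma}}+\tfrac{\abs{\Sigma}-1}{\abs{\Sigma}}\sqrt{c(G)}\bigr)^r$ directly, with no recursion over rounds. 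For $r=1$ the identity permutation contributes $\tfrac{1}{\abs{\Sigma}}$ and each of the remaining $\abs{\Sigma}-1$ permutations contributes $\tfrac{1}{\abs{\Sigma}}\sqrt{c(G)}$, giving a linear, not quadratic, bound. To repair your write-up you should adopt this permutation lemma in place of the $\norm{M}^2=\norm{M^*M}$ device.
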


Equation~\eqref{eq:monog-c-g} may be referred to as the maximal overlap of the referee's measurements. As was observed in~\cite{Tomamichel2013}, this quantity satisfies
\begin{align}
	\frac{1}{\abs{\Gamma}} \leq c(G) \leq 1 \quad \textnormal{and} \quad c(G^r) = c(G)^r. 
\end{align}
For any monogamy-of-entanglement game, $G$, Theorem~\ref{thm:parallel-rep-monogamy-bound} provides an upper bound on the standard quantum value achieved when running $G$ for $r$ times in parallel as given by equation~\eqref{eq:monog-tfkw-bound}. In this section, we shall show that for $\abs{\Sigma} = 2$ that the bound from equation~\eqref{eq:monog-tfkw-bound} is indeed tight. 

\begin{theorem} \label{thm:monog-parallel-rep-2}
	Let $G = (\pi,R)$ be a monogamy-of-entanglement game for which $\pi$ is uniform over $\Sigma$ with $\abs{\Sigma} = 2$. It holds that 
	\begin{align}
	\omega^*(G^r) = \left(\frac{1}{2} + \frac{1}{2} \sqrt{c(G)} \right)^r.
\end{align}
\end{theorem}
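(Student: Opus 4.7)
The proof splits into matching upper and lower bounds. The upper bound $\omega^*(G^r) \leq \bigl( \frac{1}{2} + \frac{1}{2}\sqrt{c(G)} \bigr)^r$ is obtained for free by specialising equation~\eqref{eq:monog-tfkw-bound} of \thmref{thm:parallel-rep-monogamy-bound} to $\abs{\Sigma} = 2$, so no new ideas are needed there. The work lies in establishing the matching lower bound.

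The plan for the lower bound is to construct an explicit single-shot unentangled strategy of value $\frac{1}{2} + \frac{1}{2}\sqrt{c(G)}$ and then run it independently in each of the $r$ coordinates. Since a product of unentangled strategies for $G$ is an unentangled strategy for $G^r$, this yields $\omega^*(G^r) \geq \omega(G^r) \geq \omega(G)^r$. Consistent with the discussion preceding the theorem, I assume the referee's measurements are projective, which is the regime in which $\sqrt{R(a|x)} = R(a|x)$ and so $\sqrt{c(G)}$ coincides with $\max_{a,b} \bignorm{R(a|0) R(b|1)}$ under the definition~\eqref{eq:monog-c-g}. Writing $\Sigma = \{0,1\}$ and using uniform $\pi$, the unentangled-value formula~\eqref{eq:monogamy-unentangled-value} collapses to $\omega(G) = \max_{a,b \in \Gamma} \bignorm{\frac{1}{2} R(a|0) + \frac{1}{2} R(b|1)}$, reducing the task to exhibiting a pair $(a,b)$ with $\bignorm{R(a|0) + R(b|1)} = 1 + \sqrt{c(G)}$.

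The main lemma I would invoke is the classical operator-norm identity $\bignorm{P+Q} = 1 + \bignorm{PQ}$ valid for any two projections $P$ and $Q$ on a Hilbert space. Granting this identity and taking $(a,b)$ to be any pair maximising $\bignorm{R(a|0) R(b|1)}$ delivers $\omega(G) = \frac{1}{2} + \frac{1}{2}\sqrt{c(G)}$ exactly, and the independent product strategy then yields the matching lower bound $\omega^*(G^r) \geq \bigl( \frac{1}{2} + \frac{1}{2}\sqrt{c(G)} \bigr)^r$.

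The main technical step is therefore the projection identity itself, and this is where I would expect to spend most of the care. My plan is to verify it through the canonical-angle (principal-angle) decomposition of a pair of projections: the underlying Hilbert space splits into invariant subspaces of dimension at most two, indexed by angles $\theta \in [0, \pi/2]$ between $\mathrm{im}(P)$ and $\mathrm{im}(Q)$, and on each two-dimensional piece the restrictions of $P + Q$ and $PQ$ have largest eigenvalue $1 + \cos\theta$ and largest singular value $\cos\theta$ respectively. Taking the supremum over $\theta$ realises the spectral norms of $P+Q$ and $PQ$ simultaneously, yielding the identity. Degenerate invariant pieces where $P$ and $Q$ agree or are orthogonal contribute eigenvalues in $\{0, 1, 2\}$ that remain consistent with the identity, so the only real obstacle is the bookkeeping required to handle infinite-dimensional or degenerate cases cleanly; all of this is standard linear algebra and, once assembled, closes the proof.
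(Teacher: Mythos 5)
Your proposal is correct and follows essentially the same route as the paper: the upper bound is read off from the Tomamichel--Fehr--Kaniewski--Wehner bound (\thmref{thm:parallel-rep-monogamy-bound}), and the matching lower bound comes from playing an optimal unentangled single-game strategy independently in each coordinate, so that $\omega^*(G^r) \geq \omega(G^r) \geq \omega(G)^r$, with $\omega(G) = \tfrac{1}{2} + \tfrac{1}{2}\sqrt{c(G)}$ supplied by the projection identity $\norm{P+Q} = 1 + \norm{PQ}$. You are also right to flag the projectivity of the referee's measurements as a hypothesis the argument actually needs (the paper invokes it through \propref{prop:monogamy-unentangled}, even though the theorem statement omits it). The only point of divergence is how the projection identity itself is established: the paper reduces to rank-one projections via a chain of variational characterizations over unit spheres, using $\norm{u_0u_0^* + u_1u_1^*} = 1 + \abs{\ip{u_0}{u_1}}$, whereas you invoke the canonical-angle (two-projections) decomposition into invariant blocks of dimension at most two. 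Both are standard and correct; the paper's argument is slightly more self-contained and elementary, while yours makes transparent why the norms of $P+Q$ and $PQ$ are attained on the same block, and extends without modification to the infinite-dimensional setting.
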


In order to prove theorem~\ref{thm:monog-parallel-rep-2}, we first require the following proposition. 
\begin{prop} \label{prop:monogamy-unentangled}
	Let $G = (\pi,R)$ be a monogamy-of-entanglement game for which $\Sigma = \{0,1\}$, $\pi$ is uniform over $\Sigma$, and $R(a|x)$ is a projection operator for each $x \in \Sigma$ and $a \in \Gamma$. It holds that 
	\begin{align}
		\omega(G) = \frac{1}{2} + \frac{1}{2} \max_{a,b \in \Gamma} \biggnorm{R(a|0)R(b|1)}.
	\end{align}
\end{prop}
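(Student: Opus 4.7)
The plan is to reduce the proposition to a classical identity about the spectral norm of a sum of two projections. Applying equation~\eqref{eq:monogamy-unentangled-value} with $\Sigma = \{0,1\}$ and $\pi$ uniform, the unentangled value simplifies to
\begin{align*}
\omega(G) = \max_{f:\Sigma\to\Gamma} \biggnorm{\tfrac{1}{2}R(f(0)|0) + \tfrac{1}{2}R(f(1)|1)} = \tfrac{1}{2}\max_{a,b\in\Gamma} \bignorm{R(a|0) + R(b|1)}.
\end{align*}
The proposition therefore reduces to establishing the identity $\bignorm{P+Q} = 1 + \bignorm{PQ}$ for any two projections $P$ and $Q$ that are not simultaneously zero, and then applying it with $P = R(a|0)$ and $Q = R(b|1)$. (The degenerate case in which both are zero does not affect the maximum, since $\sum_a R(a|x) = \I$ for each $x$ guarantees that some nonzero choice of $R(a|x)$ exists for each $x$.)

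For the easier direction $\bignorm{P+Q} \geq 1 + \bignorm{PQ}$, I would proceed by constructing a two-dimensional invariant subspace witnessing the lower bound. Let $s = \bignorm{PQ}$; assuming $s > 0$, one can select a unit vector $w$ in the range of $Q$ achieving $\bignorm{Pw} = s$, and set $u = Pw/s$. Direct calculation gives $Pu = u$, $Qw = w$, and the key relation $Qu = sw$, together with $\ip{u}{w} = s$. The span of $\{u,w\}$ is thus invariant under $P+Q$, and in an orthonormal basis of this subspace the restriction of $P+Q$ is a real symmetric $2 \times 2$ matrix with trace $2$ and determinant $1-s^2$, whose largest eigenvalue is $1 + s = 1 + \bignorm{PQ}$. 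The degenerate $s = 0$ case is separate: then $P$ and $Q$ have orthogonal ranges, $P+Q$ is itself a projection of norm $1$ (if nonzero), and the identity still holds.

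The matching upper bound $\bignorm{P+Q} \leq 1 + \bignorm{PQ}$ is the main technical obstacle. My preferred route is to invoke Halmos' two-subspaces theorem, which provides a simultaneous orthogonal decomposition of the underlying space on which $(P,Q)$ acts either trivially (in one-dimensional blocks contributing eigenvalues $0$, $1$, or $2$ to $P+Q$) or through a canonical two-dimensional form parametrized by an angle $\theta$, where the calculation of the previous paragraph shows that $\bignorm{P+Q} = 1 + \cos\theta$ and $\bignorm{PQ} = \cos\theta$. Taking the supremum over blocks gives the identity. Substituting back into the reduction above then yields
\begin{align*}
\omega(G) = \tfrac{1}{2}\Bigl(1 + \max_{a,b \in \Gamma}\bignorm{R(a|0)R(b|1)}\Bigr) = \tfrac{1}{2} + \tfrac{1}{2}\max_{a,b \in \Gamma}\bignorm{R(a|0)R(b|1)},
\end{align*}
completing the proof of the proposition.
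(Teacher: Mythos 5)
Your proposal is correct, and the reduction to the operator identity $\bignorm{P+Q} = 1 + \bignorm{PQ}$ for projections is exactly the paper's strategy (its Lemma on $\bignorm{\Pi_0+\Pi_1}$), but your proof of that identity takes a genuinely different route. The paper proves both directions at once through a single variational chain: it writes $\bignorm{\Pi_0+\Pi_1}$ as a maximum of $\norm{\Pi_0 v}^2 + \norm{\Pi_1 v}^2$ over unit vectors $v$, replaces each $\norm{\Pi_i v}$ by a maximum of $\abs{\ip{u_i}{v}}$ over unit vectors $u_i$ in the corresponding range, and thereby reduces the whole identity to the rank-one formula $\bignorm{u_0u_0^* + u_1u_1^*} = 1 + \abs{\ip{u_0}{u_1}}$; this is short, elementary, and self-contained. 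You instead split the identity into two inequalities: your lower bound, via the maximizer $w$ of $\norm{Pw}$ on the range of $Q$ and the resulting two-dimensional invariant subspace with eigenvalues $1 \pm s$, is essentially the same eigenvalue computation the paper performs for rank-one projections, just packaged as an invariant-subspace argument; your upper bound invokes Halmos' two-subspaces (CS) decomposition, which is a heavier external tool but does deliver both directions simultaneously and makes the geometric content (the principal angles) explicit. Both arguments are sound; the paper's buys brevity and self-containment, yours buys a structural picture of the pair $(P,Q)$ that generalizes more readily. One small point worth making explicit if you keep your version: the degenerate bookkeeping (which you do address) matters because the identity $\bignorm{P+Q} = 1+\bignorm{PQ}$ fails when $P = Q = 0$, and the completeness relation $\sum_a R(a|x) = \I$ only guarantees that \emph{some} answer pair $(a,b)$ has both projections nonzero, which suffices since the maximum over $(a,b)$ is attained at such a pair.
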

Proving this proposition requires the use of the following lemma.
\begin{lemma} \label{lem:monogamy-projections}
	Let $\Pi_0$ and $\Pi_1$ be nonzero projection operators on $\complex^r$. It holds that 
	\begin{align}
		\bignorm{\Pi_0 + \Pi_1} = 1 + \bignorm{\Pi_0 \Pi_1}.
	\end{align}
\end{lemma}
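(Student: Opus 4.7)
The plan is to show both inequalities $\|\Pi_0 + \Pi_1\| \leq 1 + \|\Pi_0 \Pi_1\|$ and $\|\Pi_0 + \Pi_1\| \geq 1 + \|\Pi_0 \Pi_1\|$ by exploiting the simple algebra of projections together with the identity $\|\Pi_0 \Pi_1\|^2 = \|\Pi_0 \Pi_1 \Pi_0\|$. Set $\mu = \|\Pi_0 \Pi_1\|$, so that $\mu^2$ is the largest eigenvalue of the positive semidefinite operator $\Pi_0 \Pi_1 \Pi_0$, and observe that since $\Pi_0 + \Pi_1$ is positive semidefinite its spectral norm equals its largest eigenvalue. The goal is then to prove that this largest eigenvalue is exactly $1 + \mu$.

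For the lower bound, first I would pick a unit vector $u$ with $\Pi_0 u = u$ and $\Pi_0 \Pi_1 \Pi_0 u = \mu^2 u$ (such a $u$ exists because $\Pi_0 \Pi_1 \Pi_0$ vanishes on $\ker(\Pi_0)$, so a top eigenvector may be chosen inside $\op{im}(\Pi_0)$). Setting $v = \Pi_1 u$ one gets $\Pi_0 v = \mu^2 u$ and $\Pi_1 v = v$. I would then look for a scalar $\alpha$ such that $w = u + \alpha v$ satisfies $(\Pi_0 + \Pi_1) w = \lambda w$. Expanding gives the two equations $1 + \alpha \mu^2 = \lambda$ and $1 + \alpha = \lambda \alpha$, which combine to $(\lambda - 1)^2 = \mu^2$, so $\lambda = 1 + \mu$ is achieved; this proves $\|\Pi_0 + \Pi_1\| \geq 1 + \mu$ (after checking $w \neq 0$, which holds when $\mu > 0$; the edge case $\mu = 0$ is immediate since then $\Pi_0 \Pi_1 = 0$ and $\|\Pi_0 + \Pi_1\| = 1$).

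For the upper bound, let $\lambda_{\max} = \|\Pi_0 + \Pi_1\|$ and take a unit eigenvector $w$ with $(\Pi_0 + \Pi_1) w = \lambda_{\max} w$. If $\lambda_{\max} \leq 1$ there is nothing to prove, so assume $\lambda_{\max} > 1$. Applying $\Pi_0$ to the eigenvalue equation yields $\Pi_0 \Pi_1 w = (\lambda_{\max} - 1) \Pi_0 w$, and applying $\Pi_1$ yields $\Pi_1 \Pi_0 w = (\lambda_{\max} - 1) \Pi_1 w$. Applying $\Pi_0$ once more to the second identity gives $\Pi_0 \Pi_1 \Pi_0 w = (\lambda_{\max} - 1)^2 \Pi_0 w$. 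If $\Pi_0 w = 0$ then $\Pi_1 w = \lambda_{\max} w$, forcing $\lambda_{\max} \leq \|\Pi_1\| \leq 1$, contradicting the assumption; so $\Pi_0 w \neq 0$ is an eigenvector of $\Pi_0 \Pi_1 \Pi_0$ with eigenvalue $(\lambda_{\max} - 1)^2$, whence $(\lambda_{\max} - 1)^2 \leq \mu^2$ and therefore $\lambda_{\max} \leq 1 + \mu$.

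The two bounds together yield the claimed equality. The only subtle step is the construction of the eigenvector $w = u + \alpha v$ in the lower bound argument: one must verify that the two-dimensional invariant subspace $\operatorname{span}\{u, v\}$ is honestly two-dimensional (equivalently, that $u$ and $v$ are linearly independent when $\mu < 1$) and that the scalar $\alpha = 1/\mu$ emerging from the system is finite; the boundary cases $\mu = 0$ and $\mu = 1$ should be handled separately but reduce to trivial verifications. I expect no deep obstacle, as the result is essentially a two-projection (Halmos) calculation reduced to the invariant $2$-dimensional block spanned by $u$ and $\Pi_1 u$.
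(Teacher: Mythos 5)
Your proof is correct, but it takes a genuinely different route from the one in the thesis. The thesis argues variationally: it writes $\norm{\Pi_0+\Pi_1}$ as $\max_v\,(\norm{\Pi_0 v}^2+\norm{\Pi_1 v}^2)$ over unit vectors $v$, replaces each $\norm{\Pi_i v}$ by a maximization of $\abs{\ip{u_i}{v}}$ over unit vectors $u_i$ in $\im(\Pi_i)$, and thereby reduces the whole statement to the rank-one identity $\norm{u_0u_0^*+u_1u_1^*}=1+\abs{\ip{u_0}{u_1}}$, whose two nonzero eigenvalues $1\pm\abs{\ip{u_0}{u_1}}$ are computed directly. That argument needs no case analysis and never exhibits an optimizing vector. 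You instead perform the explicit two-projection (Halmos-type) spectral computation: for the lower bound you build an eigenvector $w=u+\mu^{-1}\Pi_1 u$ of $\Pi_0+\Pi_1$ with eigenvalue $1+\mu$ on the invariant subspace $\operatorname{span}\{u,\Pi_1 u\}$, and for the upper bound you show that any top eigenvector of $\Pi_0+\Pi_1$ with eigenvalue $\lambda>1$ projects to an eigenvector of $\Pi_0\Pi_1\Pi_0$ with eigenvalue $(\lambda-1)^2\le\mu^2$, using $\norm{\Pi_0\Pi_1}^2=\norm{\Pi_0\Pi_1\Pi_0}$. Your approach costs you the edge cases $\mu\in\{0,1\}$ and the linear-independence check for $u$ and $\Pi_1 u$ (all of which you correctly flag and which do reduce to trivial verifications, since dependence forces $\mu\in\{0,1\}$), but it buys an explicit optimal eigenvector, which the paper's argument does not produce. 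Both proofs are sound for the finite-dimensional setting of the lemma.
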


\begin{proof}
	For every choice of unit vectors $u_0, u_1 \in \complex^r$, one has the formula
	\begin{align}
		\bignorm{ u_0 u_0^* + u_1 u_1^* } = 1 + \abs{\ip{u_0}{u_1}},
	\end{align}
	which follows from the observation that the Hermitian operator $u_0 u_0^* + u_1 u_1^*$ has (at most) two nonzero eigenvalues $1 \pm \abs{\ip{u_0}{u_1}}$. Letting $\S$, $\S_0$, and $\S_1$ denote the unit spheres in the spaces $\complex^r$, $\im(\Pi_0)$, and $\im(\Pi_1)$, respectively, one has 
	\begin{equation}
		\begin{aligned}
			\bignorm{\Pi_0 + \Pi_1} &= \max \left \{ v^* \left( \Pi_0 + \Pi_1 \right) v : v \in \S \right \} \\
			&= \max \left \{ \bignorm{\Pi_0 v}^2 + \bignorm{\Pi_1 v}^2 : v \in \S \right \} \\
			&= \max \left \{ \abs{\ip{u_0}{v}}^2 + \abs{\ip{u_1}{v}}^2 : v \in \S, \ u_0 \in \S_0, \ u_1 \in \S_1 \right \} \\
			&= \max \left \{ v^* \left( u_0 u_0^* + u_1 u_1^* \right)v : v \in \S, \ u_0 \in \S_0, \ u_1 \in \S_1 \right \} \\
			&= \max \left \{ \bignorm{u_0 u_0^* + u_1 u_1^*} : u_0 \in \S_0, \ u_1 \in \S_1 \right \} \\
			&= \max \left \{ 1 + \abs{\ip{u_0}{u_1}} : u_0 \in \S_0, \ u_1 \in \S_1 \right \} \\
			&= 1 + \bignorm{\Pi_0 \Pi_1},
		\end{aligned}
	\end{equation}
which proves the lemma. 
\end{proof}

\begin{proof}[Proof of Proposition~\ref{prop:monogamy-unentangled}]
	From equation~\eqref{eq:enlg-unentangled-value}, we have that the unentangled value of the game $G$ is given by 
	\begin{align} \label{eq:monog-parallel-rep-proj}
		\omega(G) = \max_{a,b \in \Gamma} \biggnorm{\frac{1}{2} R(a|0) + \frac{1}{2} R(b|1) } = \frac{1}{2} \max_{a,b \in \Gamma} \biggnorm{R(a|0) + R(b|1)}.
	\end{align}
	It follows from Lemma~\ref{lem:monogamy-projections} that $\norm{R(a|0) + R(b|1)} = 1 + \norm{R(a|0) R(b|1)}$ which allows us to write equation~\eqref{eq:monog-parallel-rep-proj} as 
	\begin{align}
		\omega(G) = \frac{1}{2} + \frac{1}{2} \max_{a,b \in \Gamma} \biggnorm{R(a|0)R(b|1)},
	\end{align}
	proving the proposition. 
\end{proof}

\begin{proof}[Proof of Theorem~\ref{thm:monog-parallel-rep-2}]
	The upper bound
		\begin{align}
			\omega^*(G^r) \leq \left(\frac{1}{2} + \frac{1}{2} \sqrt{c(G)} \right)^r,
	\end{align}
	follows from Theorem~\ref{thm:parallel-rep-monogamy-bound}, initially shown in~\cite{Tomamichel2013}. 
	Showing the other direction
		\begin{align}
	\omega^*(G^r) \geq \left(\frac{1}{2} + \frac{1}{2} \sqrt{c(G)} \right)^r,
\end{align}
 follows from Proposition~\ref{prop:monogamy-unentangled} and Lemma~\ref{lem:monogamy-projections}. The reason that this direction holds for any number of repetitions $r$ is that Alice and Bob can simply play an optimal strategy for each $r$ games, $r$ times in parallel. This implies that 
 \begin{align}
 	\omega^*(G^r) \geq \omega(G^r) \geq \left( \frac{1}{2} + \frac{1}{2} \max_{a,b \in \Gamma} \biggnorm{R(a|0)R(b|1)} \right)^r = \left( \frac{1}{2} + \frac{1}{2} \sqrt{c(G)} \right)^r,
 \end{align}
 which matches the upper bound from Theorem~\ref{thm:parallel-rep-monogamy-bound}.
\end{proof}

\subsection{No strong parallel repetition for monogamy-of-entanglement games with non-signaling provers} \label{sec:monog-non-signaling}

\begin{claim}[No strong parallel repetition for non-signaling provers] \label{thm:no-spr-non-signaling}
	There exists a monogamy-of-entanglement game, $G$, such that
	\begin{align} \label{eq:no-spr-non-signaling}
		\omega_{\ns}(G^2) \not= \omega_{\ns}(G)^2.
	\end{align}		
\end{claim}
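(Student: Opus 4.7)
The plan is to witness the claim with the BB84 monogamy-of-entanglement game $G_{\BB84}$ from Example~\ref{ex:bb84-monogamy-game} and to establish the strict inequality $\omega_{\ns}(G_{\BB84}^2) > \omega_{\ns}(G_{\BB84})^2$ by direct construction of a non-signaling strategy for the two-fold parallel repetition that beats every product of single-copy non-signaling strategies.

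First, I would compute the single-copy value $p := \omega_{\ns}(G_{\BB84})$. Because a monogamy-of-entanglement game always delivers matched question pairs, the relevant objective reduces to $\tfrac{1}{2}\sum_{x \in \Sigma}\sum_{a \in \Gamma} \ip{R(a|x)}{K(a,a|x,x)}$, while the non-signaling marginal conditions and positivity of $K$ are linear matrix constraints. Hence $\omega_{\ns}(G_{\BB84})$ is the optimum of a semidefinite program over the $2$-dimensional referee space. I would solve it in closed form (or bound it via an explicit dual feasible point), extracting $p$ and an optimal assemblage $K^{\star}$.

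Next, for $G_{\BB84}^2$, the referee's measurements are the tensor products in \eqref{eq:monog-parallel-rep-ref-ops} acting on $\R\otimes \R$, and the uniform distribution is on $\{0,1\}^2$. I would construct an explicit non-signaling assemblage for $G_{\BB84}^2$ by taking the product $K^{\star} \otimes K^{\star}$ as a base and adding a correction that redistributes probability mass across the two coordinates in a way forbidden to product strategies but permitted by the non-signaling constraints. The key observation is that the two-copy non-signaling polytope admits correlations between coordinate $1$ and coordinate $2$ (of Alice or Bob) that cannot be realized by any product of single-copy non-signaling assemblages, even though the single-coordinate marginals do match the single-copy marginals of $K^{\star}$. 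Concretely, I would design the correction as a PR-box-type perturbation on the matched-question diagonal and, by choosing its amplitude small enough, maintain positivity of every $K((a_1,a_2),(b_1,b_2)|(x_1,x_2),(y_1,y_2))$.

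Finally, I would compute the winning value of the perturbed assemblage and verify that it exceeds $p^2$ by a definite positive amount, which immediately yields \eqref{eq:no-spr-non-signaling}. The main obstacle is the second step: producing a correction that is simultaneously (i) non-signaling, (ii) positive semidefinite after adding to the product assemblage, and (iii) strictly value-improving on the diagonal question pairs. I would handle this by parametrizing the correction in the span of the BB84 projection basis tensored with itself and choosing its sign pattern to reinforce alignment with the referee's two-copy measurements; the SDP for $\omega_{\ns}(G_{\BB84}^2)$ provides a backstop numerical check that such a perturbation exists, and a symbolic scalar-amplitude argument converts the numerical improvement into a rigorous strict inequality.
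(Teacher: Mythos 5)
Your choice of witness ($G_{\BB84}$) and the overall shape of the argument --- exhibit a two-copy non-signaling assemblage whose value strictly exceeds the square of the single-copy value --- is exactly what the paper does. The difference is in how the two numbers are produced: the paper simply solves the two semidefinite programs numerically (obtaining $\omega_{\ns}(G_{\BB84}) = \cos^2(\pi/8) \approx 0.8536$ and $\omega_{\ns}(G_{\BB84}^2) \approx 0.73826 > \cos^4(\pi/8) \approx 0.7286$) and declares the claim verified, whereas you aim for an explicit analytic construction. If carried out, your route would be strictly more rigorous than the paper's, and your observation that one only needs a dual feasible point certifying $\omega_{\ns}(G_{\BB84}) \leq \cos^2(\pi/8)$ together with any explicit feasible two-copy assemblage beating $\cos^4(\pi/8)$ is the correct logical skeleton: an upper bound on the single-copy side and a lower bound on the two-copy side.

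The gap is that the central object --- the ``PR-box-type perturbation'' of $K^{\star} \otimes K^{\star}$ that is simultaneously non-signaling, positivity-preserving, and value-improving --- is never exhibited. You describe the three conditions it must satisfy and say you would parametrize and tune a scalar amplitude, but this is precisely the hard part of the construction, and it is not obvious a priori that a perturbation supported on the matched-question diagonal can improve the value while respecting the marginal constraints in equation~\eqref{eq:enlg-ns-assemblage} for \emph{all} (including mismatched) question pairs. As written, the proposal is a plan whose essential step is deferred to either future symbolic work or to the numerical SDP ``backstop'' --- and if one falls back on the SDP, the argument collapses to exactly the paper's numerical verification. To close the gap rigorously you would need to write down the optimal two-copy assemblage returned by the solver (or a rational approximation of it), check its finitely many linear and positivity constraints exactly, and pair it with an exact dual certificate for the single-copy program; none of that is present yet.
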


\begin{proof}[Proof of Claim~\ref{thm:no-spr-non-signaling}]
	We shall verify equation~\eqref{eq:no-spr-non-signaling} numerically using the convex optimization software CVX~\cite{Grant2008a} in addition to the software listing~\ref{code:ns-counter-example-monogamy-game} in Appendix~\ref{chap:AppendixA}. The explicit monogamy-of-entanglement game that we shall use to verify this claim is $G_{\BB84}$, the BB84 game as mentioned in Section~\ref{sec:the-bb84-monogamy-of-entanglement-game}. It may be checked by running the software listing~\ref{code:ns-counter-example-monogamy-game} that 
	\begin{align}
		\omega_{\ns}(G_{\BB84}^2) \approx 0.73826. 
	\end{align}
However, we may also verify that a single repetition of $G_{\BB84}$ is $\cos^2(\pi/8)$, that is
	\begin{align}
		\omega_{\ns}(G_{\BB84}) = \cos^2(\pi/8).
	\end{align}
From this, it is clear that 
\begin{align}
	\omega_{\ns}(G_{\BB84}^2) \not= \omega_{\ns}(G_{\BB84})^2 = \cos^4(\pi/8),
\end{align}
which concludes the proof. 
\end{proof}

\section[Upper and lower bounds on monogamy-of-entanglement games]{Upper and lower bounds on \\ monogamy-of-entanglement games} \label{sec:upper-and-lower-bounds-moe-games}

In this section, we apply the upper and lower bound techniques for extended nonlocal games from Chapter~\ref{chap:extended_npa_hierarchy}, and apply them to monogamy-of-entanglement games. In doing so, we are able to verify the existence of a monogamy-of-entanglement game where Alice and Bob perform better if they use a standard quantum strategy as opposed to an unentangled one.

\subsection[A monogamy-of-entanglement game with quantum advantage]{A monogamy-of-entanglement game with \\ quantum advantage} \label{sec:MUB-4-3}

\begin{example}[A monogamy-of-entanglement game with quantum advantage]\label{ex:mub-4-3-monogamy-game}
  Let $\zeta = \exp(\frac{2 \pi i}{3})$ and consider the following four mutually unbiased bases:
  \begin{equation}\label{eq:MUB43}
    \begin{aligned}
      \B_0 &= \left\{ e_0,\: e_1,\: e_2 \right\}, \\
      \B_1 &= \left\{ \frac{e_0 + e_1 + e_2}{\sqrt{3}},\:
      \frac{e_0 + \zeta^2 e_1 + \zeta e_2}{\sqrt{3}},\:
      \frac{e_0 + \zeta e_1 + \zeta^2 e_2}{\sqrt{3}} \right\}, \\
      \B_2 &= \left\{ \frac{e_0 + e_1 + \zeta e_2}{\sqrt{3}},\:
      \frac{e_0 + \zeta^2 e_1 + \zeta^2 e_2}{\sqrt{3}},\:
      \frac{e_0 + \zeta e_1 + e_2}{\sqrt{3}} \right\}, \\
      \B_3 &= \left\{ \frac{e_0 + e_1 + \zeta^2 e_2}{\sqrt{3}},\:
      \frac{e_0 + \zeta^2 e_1 + e_2}{\sqrt{3}},\:
      \frac{e_0 + \zeta e_1 + \zeta e_2}{\sqrt{3}} \right\}.
    \end{aligned}
  \end{equation} 
 	Define a monogamy-of-entanglement game $G = (\pi,R)$ so that  
 	\begin{align}
 		\pi(0) = \pi(1) = \pi(2) = \pi(3) = \frac{1}{4}
 	\end{align}
 	and $R$ is such that 
 	\begin{align}
 		\{ R(0|x), R(1|x), R(2|x) \}
 	\end{align} 
 	represents a measurement with respect to the basis $\B_x$, for each $x \in \{0,1,2,3\}$. In order to observe that $\omega(G) < \omega^*(G)$, first consider the following unentangled strategy. Alice and Bob prepare the state 
 	\begin{align}
 		u = \left( 1 - i \sqrt{\frac{3}{4}} \right) e_0 + \left( 1 + i \sqrt{\frac{3}{5}} \right) e_1 + \left( 1 + \frac{3}{\sqrt{5}} \right) e_2,
 	\end{align}
 	and sends it to the referee. In the event that $x = 0$ or $x = 1$, Alice and Bob respond with $a = b = 2$. If instead $x = 2$ or $x = 3$, Alice and Bob respond with $a = b = 0$. The value of the game in this case is given by 
 	\begin{align} \label{eq:MUB43-unentangled}
 		\omega(G) = \frac{1}{4}\biggip{R(2|0) + R(2|1) + R(0|2) + R(0|3)}{\rho} = \frac{3 + \sqrt{5}}{8} \approx 0.6545,
 	\end{align}
 	where $\rho = u u^* \in \Density(\R \otimes \A \otimes \B)$. An exhaustive search over all unentangled strategies reveals that equation~\eqref{eq:MUB43-unentangled} is optimal. In contrast, a computer search over quantum strategies using the lower bound techniques from Sectionf~\ref{sec:lower-bound-extended-nonlocal-games} has revealed that 
 	\begin{align} \label{eq:lb-MUB-4-3}
 		\omega^*(G) \geq 0.660986,
 	\end{align}
 	which is strictly larger than the unentangled value of this game. This strategy is available for download from the software repository~\cite{Johnston2015b} and is also provided as a software example in Appendix~\ref{chap:AppendixA}. 
	It is uncertain what the optimal standard quantum strategy is for this game, but the value of such a strategy is bounded as follows  
	\begin{align}
		2/3 \geq \omega^*(G) \geq 0.660986. 
	\end{align}

\end{example}

\subsection{Synopsis of monogamy-of-entanglement games}

The following table gives an overview of what is currently known about the class of monogamy-of-entanglement games and summarizes some of the main contributions of this chapter. 

\begin{table}[!htpb]
\centering
\def\arraystretch{1.5}
\begin{tabular}{|c|c|c|c|c|}
  \hline
    Inputs ($\abs{\Sigma}$) & Outputs ($\abs{\Gamma}$) & $\omega^*(G) = \omega(G)$ & $\omega^*(G^r) = \omega^*(G)^r$ & $\omega_{\ns}(G^r) = \omega_{\ns}(G)^r$  \\
  \hline \hline
  $2$ & $\geq 1$ & yes & yes\footnotemark & no \\
  \hline
  $3$ & $\geq 1$ & ? & ? & no \\
  \hline
  $4$ & $3$ & no & ? & no \\
  \hline
\end{tabular}
\caption[Results on monogamy-of-entanglement games.]{A table of known results for monogamy-of-entanglement games. The first and second column refer to the number of inputs and outputs for such a game. The third column states whether or not the unentangled and standard quantum values are equal, and the last two columns state whether or not strong parallel repetition holds with either quantum players or non-signaling players respectively.}
\label{table:results-monogamy-of-entanglement-games}
\end{table}

\footnotetext{So long as the measurements used by the referee are projective and the probability distribution, $\pi$, from which the questions are asked is uniform.}

\chapter{Conclusions and open problems}
\label{chap:conclusions}

In this thesis, we have laid the foundation for the extended nonlocal game model, a superset of the nonlocal game model where the referee also holds a quantum system. 

In Chapter~\ref{chap:extended_nonlocal_games} we defined and analyzed the analogous types of strategies and corresponding game values (standard quantum, unentangled, commuting measurement, and non-signaling) that the players, Alice and Bob, can make use of in such a game. 

In Chapter~\ref{chap:infinite_entanglement}, we took a deeper look at the extended nonlocal game model and showed that there exists an example of an extended nonlocal game where if the dimension of Alice and Bob's shared quantum system is finite, then the standard quantum value will be strictly less than $1$. However, taking the limit as the dimension tends to infinity, the standard quantum value approaches $1$. We saw how this result implies something non-trivial about tripartite steering inequalities, specifically that there exists a tripartite steering inequality for which an infinite-dimensional state is required in order to maximally violate the inequality. 

In Chapter~\ref{chap:extended_npa_hierarchy}, we provided a technique to place upper bounds on the standard quantum value of an extended nonlocal game that generalizes the QC hierarchy, which we referred to as the extended QC hierarchy. We have shown that the hierarchy enjoys many of the same useful properties that the original QC hierarchy does, specifically, convergence to the set of commuting measurement assemblages. We also adapted the techniques of Liang and Doherty~\cite{Liang2007} to place lower bounds on the standard quantum value of extended nonlocal games. Furthermore, we have also presented software that calculates lower and upper bounds using these techniques of certain special classes of extended nonlocal games.

In Chapter~\ref{chap:monogamy_games}, we took these tools and analyzed the class of monogamy-of-entanglement games, a class of games that were initially studied in the context of position-based cryptography~\cite{Tomamichel2013}. We proved a number of properties that these games have including how they behave under parallel repetition, how entanglement may help in Alice and Bob's strategies, etc. 

A number of questions regarding the class of monogamy-of-entanglement games remain open. Specifically, 

\begin{question}
	{\bf Other examples of monogamy-of-entanglement games where $\omega(G) < \omega^*(G)$}.
\end{question}

	The complete landscape of how the quantum and classical values compare for different instances of monogamy-of-entanglement games is something to be explored. We only know of a small number of isolated examples where $\omega(G) < \omega^*(G)$ for a monogamy-of-entanglement game, $G$. 

	In Section~\ref{sec:MUB-4-3} a set of $\abs{\Sigma} = 4$ mutually unbiased bases in $\abs{\Gamma} = 3$ dimensions allow Alice and Bob to perform better if they adopt a standard quantum strategy instead of an unentangled strategy. This is the smallest example of a monogamy-of-entanglement game that was found having this property. Is there an example having fewer questions or fewer answers? This example would have to have at least three questions, since we know that for $\abs{\Sigma} = 2$, that the unentangled and standard quantum values agree for any number of outputs as shown in Section~\ref{sec:monog-classical-quantum}. Numerical results indicate that the monogamy-of-entanglement game consisting of $\abs{\Sigma} = 3$ where the referee's measurements are defined in terms of mutually unbiased bases gives no such separation. Is it possible that another monogamy-of-entanglement game with $\abs{\Sigma} = 3$ questions exists where such a separation between unentangled and standard quantum values exists? 

	One brute force method that can be used to check if there exists a monogamy-of-entanglement game for $\abs{\Sigma} = 3$ where a standard quantum strategy will outperform an unentangled strategy is to run a computer search over randomly generated instances of such monogamy-of-entanglement games. The software provided in the Appendix of this thesis~\ref{chap:AppendixA} as well as hosted on the software repository~\cite{Russo2015a} provides a suite of tools that give upper and lower bounds on the quantum value (as described in Chapter~\ref{chap:extended_npa_hierarchy}) as well as tools for calculating the unentangled value of any monogamy-of-entanglement game. One approach would be to randomly generate monogamy-of-entanglement games where $\abs{\Sigma} = 3$ and $\abs{\Gamma} \geq 2$, and see if any example of such games yield $\omega(G) < \omega^*(G)$. This approach does not seem particularly promising, as if such a game were to exist with this property, it most likely has a very specific structure that would be difficult to capture by random generation.
	
	On a related note, under what conditions does a monogamy-of-entanglement game based on mutually unbiased bases admit a standard quantum over unentangled strategy advantage? Numerically, it may be checked that a monogamy-of-entanglement game consisting $\abs{\Sigma} = 5$ and $\abs{\Gamma} = 4$ also yields a standard quantum advantage over any unentangled strategy. Does this behavior persist for any monogamy-of-entanglement game defined by mutually unbiased bases as long as the number of inputs is at least $\abs{\Sigma} = 4$, and the number of outputs is at least $\abs{\Gamma} = 3$? Furthermore, do there exist other monogamy-of-entanglement games where $\abs{\Sigma} \geq 4$ and $\abs{\Gamma} \geq 3$ such that $\omega(G) < \omega^*(G)$? Just as a computer search can be constructed where $\abs{\Sigma} = 3$ and $\abs{\Gamma} \geq 2$, one may also formulate a search that checks for larger instances as well. 

\begin{question}
	Parallel repetition for monogamy-of-entanglement games?
\end{question} 

It was shown in Section~\ref{sec:monog-strong-parallel-rep} (Theorem~\ref{thm:parallel-rep-monogamy-bound}) that for any monogamy-of-entanglement game defined in terms of projective measurements for the referee where $\abs{\Sigma} = 2$ and $\abs{\Gamma} \geq k$ for some integer $k \geq 1$ that strong parallel repetition holds. Would it be possible to extend from projective measurements to non-projective measurements, such as POVMs? After simulating approximately $10^8$ random instances of monogamy-of-entanglement games with $\abs{\Sigma} = 2$ defined in terms of POVMs, all games were found to obey strong parallel repetition for $r = 2$ rounds of repetition.   

Furthermore, the claim that strong parallel repetition holds for monogamy-of-entanglement games where the measurements of the referee are projective and $\abs{\Sigma} = 2$ assumes that the questions that the referee asks are selected uniformly at random. Is it possible that the strong parallel repetition property will continue to hold despite the distribution of questions? If indeed it does hold under nonuniform distributions, the bound from equation~\eqref{eq:monog-tfkw-bound} from Theorem~\ref{thm:parallel-rep-monogamy-bound} will most likely be in a more complicated form. Ultimately, the overall goal for parallel repetition of monogamy-of-entanglement games would be to either prove or disprove strong parallel repetition for the entire class of such games. 

There also exists other questions and directions for further research. 

\begin{question}
	Other examples of using extended nonlocal games to study tripartite steering. 
\end{question}

As mentioned in Chapter~\ref{chap:infinite_entanglement}, we were able to prove a non-trivial statement about a certain type of tripartite steering using the extended nonlocal game model. Given the connection between extended nonlocal games and tripartite steering, are there other possible questions we can answer that become more apparent using the extended nonlocal game model? 

\begin{question}
	Does there exist a nonlocal game $G$ such that $\omega^*(G) = 1$ and that $\omega_N^*(G) < 1$ for every positive integer $N$?
\end{question}

As mentioned in Chapter~\ref{chap:infinite_entanglement}, it is known that nonlocal games with quantum questions and quantum answers do satisfy the above property~\cite{Leung2013}. However, it is unknown for nonlocal games with classical questions and classical answers. This question is most likely difficult to solve.


\bibliographystyle{alphaurl}
\cleardoublepage 
\phantomsection  
\renewcommand*{\bibname}{References}

\addcontentsline{toc}{chapter}{\textbf{References}}

\bibliography{refs_jab}


\appendix

\chapter*{APPENDICES}
\addcontentsline{toc}{chapter}{APPENDICES}
\chapter{Software}
\label{chap:AppendixA}

\section*{Setup}
\subsection*{Requirements}

\begin{itemize}
    \item MATLAB,
    \item CVX  $\geq$ 2.1 \cite{Grant2008a},
    \item QETLAB $\geq$ 0.8 \cite{Johnston2015}.
\end{itemize}

\subsection*{List of functions}

\begin{itemize}
    
    \item \texttt{MonogamyGameValueUB} (by N. Johnston) ---
        Given a monogamy-of-entanglement game, $G$, the function calculates an upper bound on the quantum value of $G$;
        
    \item \texttt{MonogamyGameValueLB} ---
        Given a monogamy-of-entanglement game, $G$, the function calculates an lower bound on the quantum value of $G$;
        
    \item \texttt{MUB} (by N. Johnston) --- generates a set of mutually unbiased bases for a given dimension;
\end{itemize}

\sloppy
\definecolor{lightgray}{gray}{0.5}
\setlength{\parindent}{0pt}

\section{Software Listings}

All of the following software listings in this Appendix are hosted on the Github repository found here~\cite{Russo2016c}. 

%
%
%
%
%
%
%
%
%
%
%
%
%
%
%
%

\subsection{The first level of the extended QC hierarchy for the BB84 extended nonlocal game} \label{code:first-level-qc-hierarchy-bb84}

\begin{verbatim}
e0 = [1;0]; e1 = [0;1]; ep = [1;1]/sqrt(2); em = [1;-1]/sqrt(2);
psi0_dm = e0*e0'; psi0_dmc = e1*e1';
psi1_dm = ep*ep'; psi1_dmc = em*em';

R00 = psi0_dm/2;
R01 = psi0_dmc/2; 
R10 = psi1_dm/2;
R11 = psi1_dmc/2;  

dim = 9;
A00_B00 = zeros(dim); A01_B01 = zeros(dim); 
A10_B10 = zeros(dim); A11_B11 = zeros(dim); 

% These are the relative positions of these 
% entries as indexed by strings in the matrix. 
A00_B00(2,6) = 1; A00_B00(6,2) = 1;
A01_B01(3,7) = 1; A01_B01(7,3) = 1;
A10_B10(4,8) = 1; A10_B10(8,4) = 1;
A11_B11(5,9) = 1; A11_B11(9,5) = 1;

A = 1/2*( kron(R00, A00_B00) + kron(R01, A01_B01) ) + ...
	1/2*( kron(R10, A10_B10) + kron(R11,A11_B11) );

cvx_begin sdp
	cvx_precision best
    %#ok<*VUNUS>    % suppress MATLAB warnings for equality checks in CVX
    %#ok<*EQEFF>    % suppress MATLAB warnings for inequality checks in CVX 
    
    % Admissible matrix
    variable M(2*dim,2*dim) hermitian
    
    % Sub-block matrices found in the admissible matrix
    variable M11(dim,dim)
    variable M12(dim,dim)
    
    variable M21(dim,dim)
    variable M22(dim,dim)
            
    M == [ M11 M12;
    	   M21 M22 ];
       
    maximize trace( A*M )      
    
    subject to 
 
    % Normalization condition:
    M11(1,1) + M22(1,1) == 1;
        
    for i = 1:dim
        for j = 1:dim
            % Ensure commutation relation holds
            %(i.e. [A,B] = 0)            
            M11(i,j) == M11(j,i);
            M12(i,j) == M12(j,i);
            M21(i,j) == M21(j,i);
            M22(i,j) == M22(j,i);
            
            % Enforce operators as projective measurements
            % (i.e. the square of the same operator is found in the top
            % column / row of the diagonal entry). 
            M11(i,i) == M11(1,i);
            M11(i,i) == M11(i,1);
                        
            M12(i,i) == M12(1,i);
            M12(i,i) == M12(i,1);

            M21(i,i) == M21(1,i);
            M21(i,i) == M21(i,1);

            M22(i,i) == M22(1,i);
            M22(i,i) == M22(i,1);
        end
    end  

	% Enforce that projective measurements sum to 1:
    for i = 1:dim
    	for j = 1:dim
    		if mod(i,2) == 0    			
    		M11(i,j) + M11(i+1,j) == M11(1,j);
    		M12(i,j) + M12(i+1,j) == M12(1,j);
			M21(i,j) + M21(i+1,j) == M21(1,j);
			M22(i,j) + M22(i+1,j) == M22(1,j);
    		end
    		if mod(j,2) == 0
    		M11(i,j) + M11(i,j+1) == M11(i,1);
    		M12(i,j) + M12(i,j+1) == M12(i,1);
    		M21(i,j) + M21(i,j+1) == M21(i,1);
    		M22(i,j) + M22(i,j+1) == M22(i,1);   		    		
	    	end
    	end
    end

    % Ensure that the matrix is PSD.
    M >= 0;
cvx_end
\end{verbatim}
\color{black}
\color{lightgray} 
\begin{verbatim}     
cvx_optval =

    0.8536
\end{verbatim}
\color{black}

\subsection{The first level of the extended QC hierarchy for the CHSH extended nonlocal game} \label{code:first-level-qc-hierarchy-chsh}

\begin{verbatim}
e0 = [1;0]; e1 = [0;1]; ep = [1;1]/sqrt(2); em = [1;-1]/sqrt(2);
psi0_dm = e0*e0'; psi0_dmc = e1*e1';
psi1_dm = ep*ep'; psi1_dmc = em*em';

R00 = psi0_dm/2;
R01 = psi0_dmc/2; 
R10 = psi1_dm/2;
R11 = psi1_dmc/2;  

dim = 9;
A00_B00 = zeros(dim); A01_B01 = zeros(dim); 
A10_B10 = zeros(dim); A11_B11 = zeros(dim); 

% These are the relative positions of these entries as
% indexed by strings in the matrix. 
A00_B00(2,6) = 1; A00_B00(6,2) = 1;
A01_B01(3,7) = 1; A01_B01(7,3) = 1;
A10_B10(4,8) = 1; A10_B10(8,4) = 1;
A11_B11(5,9) = 1; A11_B11(9,5) = 1;

A00_B00 = zeros(dim); A01_B01 = zeros(dim); 

A00_B10 = zeros(dim); A01_B11 = zeros(dim);

A10_B00 = zeros(dim); A11_B01 = zeros(dim); 

A10_B11 = zeros(dim); A11_B10 = zeros(dim); 

A00_B00(2,6) = 1; A00_B00(6,2) = 1;
A00_B10(2,8) = 1; A00_B10(8,2) = 1;

A01_B01(3,7) = 1; A01_B01(7,3) = 1;
A01_B11(3,9) = 1; A01_B11(9,3) = 1;

A10_B00(4,6) = 1; A10_B00(6,4) = 1;
A10_B11(4,9) = 1; A10_B11(9,4) = 1;

A11_B01(5,7) = 1; A11_B01(7,5) = 1;
A11_B10(5,8) = 1; A11_B10(8,5) = 1;

% CHSH ENLG
A = 1/4*(kron(R00, A00_B00) + kron(R01, A01_B01)) + ...
    1/4*(kron(R00, A00_B10) + kron(R01, A01_B11)) + ...
    1/4*(kron(R00, A10_B00) + kron(R01, A11_B01)) + ...
    1/4*(kron(R10, A10_B11) + kron(R11, A11_B10));

cvx_begin sdp
	cvx_precision best
    %#ok<*VUNUS>    % suppress MATLAB warnings for equality checks in CVX
    %#ok<*EQEFF>    % suppress MATLAB warnings for inequality checks in CVX 
    
    % Admissible matrix
    variable M(2*dim,2*dim) hermitian
    
    % Sub-block matrices found in the admissible matrix
    variable M11(dim,dim)
    variable M12(dim,dim)
    
    variable M21(dim,dim)
    variable M22(dim,dim)
            
    M == [ M11 M12;
    	   M21 M22 ];
       
    maximize trace( A*M )      
    
    subject to 
 
    % Normalization condition:
    M11(1,1) + M22(1,1) == 1;
        
    for i = 1:dim
        for j = 1:dim
            % Ensure commutation relation holds
            %(i.e. [A,B] = 0)            
            M11(i,j) == M11(j,i);
            M12(i,j) == M12(j,i);
            M21(i,j) == M21(j,i);
            M22(i,j) == M22(j,i);
            
            % Enforce operators as projective measurements
            % (i.e. the square of the same operator is found in the top
            % column / row of the diagonal entry). 
            M11(i,i) == M11(1,i);
            M11(i,i) == M11(i,1);
                        
            M12(i,i) == M12(1,i);
            M12(i,i) == M12(i,1);

            M21(i,i) == M21(1,i);
            M21(i,i) == M21(i,1);

            M22(i,i) == M22(1,i);
            M22(i,i) == M22(i,1);
        end
    end  

	% Enforce that projective measurements sum to 1:
    for i = 1:dim
    	for j = 1:dim
    		if mod(i,2) == 0    			
    		M11(i,j) + M11(i+1,j) == M11(1,j);
    		M12(i,j) + M12(i+1,j) == M12(1,j);
			M21(i,j) + M21(i+1,j) == M21(1,j);
			M22(i,j) + M22(i+1,j) == M22(1,j);
    		end
    		if mod(j,2) == 0
    		M11(i,j) + M11(i,j+1) == M11(i,1);
    		M12(i,j) + M12(i,j+1) == M12(i,1);
    		M21(i,j) + M21(i,j+1) == M21(i,1);
    		M22(i,j) + M22(i,j+1) == M22(i,1);   		    		
	    	end
    	end
    end

    % Ensure that the matrix is PSD.
    M >= 0;
cvx_end
\end{verbatim}
\color{black}
\color{lightgray} 
\begin{verbatim}     
cvx_optval =

    0.75783
\end{verbatim}
\color{black}

\subsection{The non-signaling value for the CHSH extended nonlocal game} \label{code:ns-val-chsh-enlg}

\begin{verbatim}
n = 1;
dim = 2^n;

e0 = [1;0];         e1 = [0;1];
ep = [1;1]/sqrt(2); em = [1;-1]/sqrt(2);
eip = (e0 + 1j*e1)/sqrt(2); eim = (e0 - 1j*e1)/sqrt(2);

psi0_dm = e0*e0'; psi0_dmc = e1*e1';
psi1_dm = ep*ep'; psi1_dmc = em*em';
psi2_dm = eip*eip'; psi2_dmc = eim*eim';

P = zeros(2,2,2,2);
%P(:,:,1,1) = (psi0_dm)/2; P(:,:,1,2) = (psi0_dmc)/2;
%P(:,:,2,1) = (psi1_dm)/2; P(:,:,2,2) = (psi1_dmc)/2;

P = zeros(2,2,2,2,2,2);
P(:,:,1,1,1,1) = psi0_dm/2;
P(:,:,1,1,2,2) = psi0_dmc/2;

P(:,:,1,2,1,1) = psi0_dm/2;
P(:,:,1,2,2,2) = psi0_dmc/2;

P(:,:,2,1,1,1) = psi0_dm/2;
P(:,:,2,1,2,2) = psi0_dmc/2;

P(:,:,2,2,1,2) = psi1_dm/2;
P(:,:,2,2,2,1) = psi1_dmc/2;

cvx_begin sdp 
    %#ok<*VUNUS>    % suppress MATLAB warnings for equality checks in CVX
    %#ok<*EQEFF>    % suppress MATLAB warnings for inequality checks in CVX
     
    variable rho(dim,dim,dim,dim,dim,dim) semidefinite
    variable sig(dim,dim,dim,dim) hermitian
    variable xi(dim,dim,dim,dim) hermitian
    variable tau(dim,dim) hermitian 
        
    % construct objective function
    obj_fun = 0;
    for x = 1:dim
        for y = 1:dim
            for a = 1:dim
                for b = 1:dim
                   obj_fun = obj_fun + ip( P(:,:,x,y,a,b), rho(:,:,x,y,a,b) );
               end
           end
       end
    end
    
    maximize  obj_fun
    
    subject to 
    
    rho_b_sum = sum(rho,6);
    for x = 1:dim
        for y = 1:dim
            for a = 1:dim
                rho_b_sum(:,:,x,y,a) == sig(:,:,x,a);
            end
        end
    end
    
    rho_a_sum = sum(rho,5);
    for x = 1:dim
        for y = 1:dim
            for b = 1:dim
                rho_a_sum(:,:,x,y,b) == xi(:,:,y,b);
            end
        end
    end
    
    sig_a_sum = sum(sig,4);
    xi_b_sum = sum(xi,4);
    for x = 1:dim
        sig_a_sum(:,:,x) == tau;
    end
    for y = 1:dim
        xi_b_sum(:,:,y) == tau;
    end
    
    trace(tau) == 1; 
    tau >= 0;          
                        
cvx_end
cvx_optval
\end{verbatim}
\color{lightgray} 
\begin{verbatim}     
cvx_optval =

    0.75
\end{verbatim}
\color{black}

\subsection{Implementation of the see-saw method for computing lower bounds on the BB84 extended nonlocal game} \label{code:bb84-enlg-lower-bound}

\begin{verbatim}
e0 = [1;0];         e1 = [0;1];
ep = [1;1]/sqrt(2); em = [1;-1]/sqrt(2);

psi0_dm = e0*e0'; psi0_dmc = e1*e1';
psi1_dm = ep*ep'; psi1_dmc = em*em';

lvl = 1;
reps = 1;    
j_max = 4;

xdim = 2;
ydim = 2;

num_inputs = 2;
num_outputs = 2;

I = eye(xdim,ydim);

R = zeros(2,2,2,2,2,2);
R(:,:,1,1,1,1) = psi0_dm/2;
R(:,:,1,1,2,2) = psi0_dmc/2;
R(:,:,2,2,1,1) = psi1_dm/2;
R(:,:,2,2,2,2) = psi1_dmc/2; 

best = 0;
    
for k = 1:j_max
    k
    
    % Generate random bases from the orthogonal colums of randomly 
    % generated unitary matrices.     
    B = zeros(xdim,ydim,num_inputs,num_outputs);
    for y = 1:num_inputs
        U = RandomUnitary(num_outputs);
        for b = 1:num_outputs
            B(:,:,y,b) = U(:,b)*U(:,b)';
        end
    end  


    % Run the actual alternating projection algorithm between
    % the two SDPs. 
    it_diff = 1;
    prev_win = -1;
    while it_diff > 10^-6
        % Optimize over Alice's measurement operators while
        % fixing Bob's. If this is the first iteration, then the 
        % previously randomly generated operators in the outer loop are
        % Bob's. Otherwise, Bob's operators come from running the next
        % SDP.
        cvx_begin sdp quiet                
            variable rho(xdim^(2*reps),ydim^(2*reps),...
            num_inputs,num_outputs) hermitian
            variable tau(xdim^(2*reps),ydim^(2*reps)) hermitian

            win = 0;
            for x = 1:num_inputs
                for y = 1:num_inputs
                    for a = 1:num_outputs
                        for b = 1:num_outputs                               
                            win = win + ...                            
                           	trace( (kron(R(:,:,x,y,a,b), ...
                           	B(:,:,y,b)))' * rho(:,:,x,a) );                                
                        end
                    end
                end
            end
            
            maximize real(win)

            subject to 
                
                % Sum over "a" for all "x". 
                rho_a_sum = sum(rho,4);
                for x = 1:num_inputs
                    rho_a_sum(:,:,x) == tau;
                end
                                                                           
                % Enforce that tau is a density operator.
                trace(tau) == 1;
                tau >= 0;
                                   
                rho >= 0; 

        cvx_end            
        win = real(win);
                   
        % Now, optimize over Bob's measurement operators and fix 
        % Alice's operators as those coming from the previous SDP.
        cvx_begin sdp quiet
                        
            variable B(xdim,ydim,num_inputs,num_outputs) hermitian 
                                                         
            win = 0;
            for x = 1:num_inputs
                for y = 1:num_inputs
                    for a = 1:num_outputs
                        for b = 1:num_outputs                                                               
                            win = win + ...
                             trace( (kron(R(:,:,x,y,a,b), ...
                             B(:,:,y,b)))' * rho(:,:,x,a) );
                        end
                    end
                end
            end     
            
            maximize real(win)
            
            subject to 
                          
                % Bob's measurements operators must be PSD and sum to I
                B_b_sum = sum(B,4);
                for y = 1:num_inputs
                    B_b_sum(:,:,y) == I;
                end
                B >= 0;                                                           
                             
        cvx_end           
        win = real(win);
        
        it_diff = win - prev_win;
        prev_win = win;
     end
    
    % As the SDPs keep alternating, check if the winning probability
    % becomes any higher. If so, replace with new best.
    if best < win
        
        best = win;

        % take purification of tau
        pur = PartialTrace(tau,2);            

        A = zeros(xdim,ydim,num_inputs,num_outputs); 
        for x = 1:num_inputs
            for a = 1:num_outputs
               A(:,:,x,a) = pur^(-1/2) * PartialTrace(rho(:,:,x,a),2) * pur^(-1/2); 
            end
        end
         
        opt_strat_A = A;
        opt_strat_B = B;
    end

end;
 
best
\end{verbatim}
\color{lightgray} 
\begin{verbatim}     
best =

    0.8536
\end{verbatim}
\color{black}

\subsection{The BB84 monogamy game (Example \ref{ex:bb84-monogamy-game})} \label{code:bb84-game}

\begin{verbatim}
% Create the BB84 basis.
e0 = [1;0]; e1 = [0;1];
ep = [1;1]/sqrt(2); em = [1;-1]/sqrt(2); 

psi0 = e0*e0'; psi1 = e1*e1';
psip = ep*ep'; psim = em*em'; 

% Referee's first basis: {|0><0|, |1><1|}
R{1} = {psi0,psi1};

% Referee's second basis: {|+><+|, |-><-|}
R{2} = {psip,psim};

% BB84 game for a single repetition.
reps = 1; 

% Level of the extended QC hierarchy 
lvl = 1;

% Calculate the lower and upper bounds on the BB84 game:
%   cos^2(pi/8) \approx 0.8536
lb = MonogamyGameValueLB(R,reps,lvl)
ub = MonogamyGameValueUB(R,reps,lvl)
\end{verbatim}\color{lightgray} 
\begin{verbatim}     
lb =

    0.8535
ub =

    0.8535
\end{verbatim}
\color{black}

\subsection{A monogamy-of-entanglement game defined by mutually unbiased bases (Example \ref{ex:mub-4-3-monogamy-game})} \label{code:mub-4-3}

\begin{verbatim}
% Number of inputs and outputs
nin = 4;
nout = 3;

% Create the mutually unbiased bases consisting of 4-inputs and 3-outputs.
m = MUB(nout); 
R = {};
for i = 1:nin
    for j = 1:nout
        R{i}{j} = m{i}(:,j) * m{i}(:,j)';
    end
end

% Number of repetitions of the game. 
reps = 1; 

% Level of the extended QC hierarchy.
lvl = 1;

% Calculate the lower and upper bounds on the quantum value of
% the mutually unbiased basis game:
lb = MonogamyGameValueLB(R,reps,lvl)
ub = MonogamyGameValueUB(R,reps,lvl)
\end{verbatim}
\color{lightgray} 
\begin{verbatim}     
lb =

    0.6610
ub =

    0.6667
\end{verbatim}
\color{black}

\subsection{A counter-example to strong parallel repetition for monogamy-of-entanglement games with non-signaling provers (Proof of Theorem \ref{thm:no-spr-non-signaling})} \label{code:ns-counter-example-monogamy-game}

\begin{verbatim}
% Create the BB84 basis.
e0 = [1;0]; e1 = [0;1];
ep = [1;1]/sqrt(2); em = [1;-1]/sqrt(2); 

psi0 = e0*e0'; psi1 = e1*e1';
psip = ep*ep'; psim = em*em'; 

% Referee's first basis: {|0><0|, |1><1|}
R{1} = {psi0,psi1};

% Referee's second basis: {|+><+|, |-><-|}
R{2} = {psip,psim};

% BB84 game for a single repetition.
reps = 1; 

% Level of the extended QC hierarchy corresponds to non-signaling 
lvl = 0;

% Calculate the lower and upper bounds on the BB84 game:
rep_1_val = MonogamyGameValue(R,reps,lvl)

% BB84 game for a single repetition.
reps = 2; 

% Calculate the lower and upper bounds on the BB84 game:
rep_2_val = MonogamyGameValue(R,reps,lvl)
\end{verbatim}
\color{lightgray} 
\begin{verbatim}     
rep_1_val =

    0.8536

rep_2_val =

    0.7383
\end{verbatim}
\color{black}

\printindex


\end{document}